\newcolumntype{L}[1]{>{\raggedright\arraybackslash}p{#1}}
\newcolumntype{C}[1]{>{\centering\arraybackslash}m{#1}}
\newcolumntype{R}[1]{>{\raggedleft\arraybackslash}p{#1}}
\newcommand{\dist}{\text{\rm dist}}
\newcommand{\abs}[1]{\left\vert#1\right\vert}
\newcommand{\upt}{\mathsf{time_{ud}}}
\newtheorem{theorem}{Theorem}
\newtheorem{lemma}[theorem]{Lemma}
\newtheorem{observation}[theorem]{Observation}
\newtheorem{proposition}[theorem]{Proposition}
\newtheorem{corollary}[theorem]{Corollary}
\newtheorem{definition}[theorem]{Definition}
\newtheorem*{remark}{Remark}
\theoremstyle{plain}
\crefname{theorem}{Theorem}{Theorems}
\crefname{observation}{Observation}{Observations}
\crefname{claim}{Claim}{Claims}
\crefname{condition}{Condition}{Conditions}
\crefname{algorithm}{Algorithm}{Algorithms}
\crefname{property}{Property}{Properties}
\crefname{example}{Example}{Examples}
\crefname{fact}{Fact}{Facts}
\crefname{lemma}{Lemma}{Lemmas}
\crefname{corollary}{Corollary}{Corollaries}
\crefname{definition}{Definition}{Definitions}
\crefname{remark}{Remark}{Remarks}
\crefname{proposition}{Proposition}{Propositions}
\crefname{equation}{equation}{equations}
\crefname{enumi}{Case}{Case}
\newcommand{\tp}[1]{\left(#1\right)}
\newcommand{\sqtp}[1]{\left[#1\right]}
\newcommand{\sample}{\mathsf{Sample}}
\newcommand{\Rejsample}{\mathsf{RejectionSampling}}
\newcommand{\DTV}[2]{d_{\mathrm{TV}}({#1},{#2})}
\newcommand{\edge}[1]{\mathsf{edge}\tp{#1}}
\renewcommand{\Pr}{\mathop{\mathrm{Pr}}\nolimits}
\newcommand{\mixingtime}[1]{\ensuremath{t_{\textnormal{mix}}(#1)}}
\newcommand{\defeq}{:=}
\newcommand{\ctp}[1]{\left\lceil{#1}\right\rceil}
\newcommand{\ftp}[1]{\left\lfloor{#1}\right\rfloor}
\newcommand{\ext}[1]{{#1}^{\mathrm{ext}}}
\def\*#1{\bm{#1}}
\def\-#1{\mathsf{#1}}
\def\+#1{\mathcal{#1}}
\def\=#1{\mathbb{#1}}
\def\vbl{\-{vbl}}
\DeclareMathOperator{\Lin}{Lin}
  \title{Improved bounds for randomly colouring simple hypergraphs
  \\(full version)
}
  \title{Improved bounds for randomly colouring simple hypergraphs\thanks{This project has received funding from the European Research Council (ERC) under the European Union's Horizon 2020 research and innovation programme (grant agreement No. 947778)}
  }
  \author{
  Weiming Feng\thanks{School of Informatics, University of Edinburgh, Informatics Forum, Edinburgh, EH8 9AB, United Kingdom. \textnormal{E-mails: \url{wfeng@ed.ac.uk}, \url{hguo@inf.ed.ac.uk}, \url{jiaheng.wang@ed.ac.uk}}}
  \and
  Heng Guo\footnotemark[2]
  \and
  Jiaheng Wang\footnotemark[2]
  }
\date{}
\begin{document}

\maketitle

\begin{abstract}
We study the problem of sampling almost uniform proper $q$-colourings in $k$-uniform simple hypergraphs with maximum degree $\Delta$.
For any $\delta > 0$, if $k \geq\frac{20(1+\delta)}{\delta}$ and $q \geq 100\Delta^{\frac{2+\delta}{k-4/\delta-4}}$, 
the running time of our algorithm is $\tilde{O}(\mathrm{poly}(\Delta k)\cdot n^{1.01})$, where $n$ is the number of vertices.  
Our result requires fewer colours than previous results for general hypergraphs (Jain, Pham, and Voung, 2021; He, Sun, and Wu, 2021),
and does not require $\Omega(\log n)$ colours unlike the work of Frieze and Anastos (2017).
\end{abstract}

\section{Introduction}

The past few years have witnessed a bloom in techniques targeted at approximate counting and sampling problems, among which constraint satisfaction problems (CSPs) are probably the most studied. In fact, many problems can be cast as CSPs, e.g., Boolean satisfiability problems (SATs), proper colourings of graphs and hypergraphs, and independent sets, to name a few. 
In general, even deciding if a CSP instance can be satisfied or not is $\mathbf{NP}$-hard.
However, efficient algorithms become possible when the number of appearances of each variable (usually referred to as the degree) is not too high. 
For these instances, the Lov{\' a}sz Local Lemma \cite{EL75} provides a fundamental criterion to guarantee the existence of a solution. 
Although the original local lemma does not provide an efficient algorithm,
after two decades of effort \cite{Beck91,Alon91,MR98,CS00,Sri08,Mos09}, the celebrated work of Moser and Tardos \cite{MT10} provides an efficient algorithm matching the same conditions as the local lemma. 

Unfortunately, the output distribution of the Moser--Tardos algorithm does not suit the need of approximate counting and sampling. 
This deficiency is fundamental, as the sampling problem can be $\mathbf{NP}$-hard even when the criterion of the local lemma is satisfied and the corresponding searching problem lies in $\mathbf{P}$ \cite{BGGGS19,GGW21}. 
In other words, sampling problems are fundamentally more difficult than searching problems in the local lemma regime.
Part of the difficulty comes from the possibility that the state space can be disconnected from local moves,
but traditional algorithmic tools like Markov chain Monte Carlo rely on the connectivity.
This barrier has been bypassed recently by some exciting developments \cite{Moi19,GJL19,GLLZ19,JPV20}, and in particular the projected Markov chain approach \cite{FGYZ21,FHY20,JPV21,HSW21}.
For searching problems, the local lemma is known to give a sharp computational transition threshold from $\mathbf{P}$ to $\mathbf{NP}$-hard \cite{MT10,GST16} as the degree increases.
Recent efforts aim to find and establish a similar threshold for sampling problems as well.

One very promising problem to establish such a threshold is (proper) $q$-colourings of hypergraphs,
which is the original setting where the local lemma was developed \cite{EL75}, and has received considerable recent attention. 
A colouring of a hypergraph is \emph{proper} if no hyperedge is monochromatic.
An efficient (perfect) sampler exists when $q\gtrsim \Delta^{3/(k-4)}$ (where $\gtrsim$ or $\lesssim$ hides some constant independent from $q$, $k$, and $\Delta$)
for $k$-uniform hypergraphs with maximum degree $\Delta$ \cite{JPV21,HSW21}, 
while the sampling problem is $\mathbf{NP}$-hard whenever $q\lesssim \Delta^{2/k}$ for even $q$ \cite{GGW21}.
For comparison, the local lemma shows that a proper $q$-colouring exists if $q\gtrsim \Delta^{1/(k-1)}$ 
(see also \cite{wanless2020general} for a recent alternative approach leading to a slightly better constant). 

On the other hand, before the recent wave of local lemma inspired sampling algorithms, 
randomly sampling $q$-colourings in \emph{simple} $k$-uniform hypergraphs\footnote{A hypergraph is \emph{simple} if any two hyperedges intersect in at most one vertex. Simple hypergraphs are also known as linear hypergraphs.} has already been studied \cite{FM11,FA17}.
In particular, Frieze and Anastos \cite{FA17} gave an efficient sampling algorithm 
when the number of colours satisfies $q \geq \max\{C_k \log n, 500k^3\Delta^{\frac{1}{k-1}}\}$, where $n$ is the number of vertices and $C_k$ depends only on $k$.
Their algorithm is the standard Glauber dynamics with a random initial (not necessarily proper) colouring.
The logarithmic lower bound on the number of colours is crucial to their analysis,
as it guarantees that there is a giant connected component in the state space so that connectivity is not an issue.

In this paper, we study the projected Markov chain for sampling $q$-colourings in simple hypergraphs.
Our result improves the bound of \cite{JPV21,HSW21} for general hypergraphs,
and does not require unbounded number of colours, unlike in \cite{FM11,FA17}.
Let $\mu$ denote the uniform distribution over all proper colourings. 
Our main result is stated as follows.

\begin{theorem}\label{theorem-main}
  For any $\delta > 0$, there is a sampling algorithm such that given any $\epsilon \in (0,1)$, a $k$-uniform simple hypergraph $H=(V,E)$ with maximum degree $\Delta$,
  where $k \geq \frac{20(1+\delta)}{\delta}$, and an integer $q \geq 100\Delta^{\frac{2+\delta}{k - 4/\delta - 4}}$,
  it returns a random $q$-colouring that is $\epsilon$-close to $\mu$ in total variation distance in time $\tilde{O}( k^5\Delta^2 n \tp{\frac{n\Delta}{\epsilon}}^{0.01})$, 
  where $n = |V|$ and $\tilde{O}$ hides a $\mathrm{polylog}(n,\Delta,q,1/\epsilon)$ factor. 
\end{theorem}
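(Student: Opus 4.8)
The plan is to run the \emph{projected Markov chain} scheme of \cite{FGYZ21,JPV21,HSW21}, feeding it a recursive marginal analysis sharpened to exploit the hypergraph $H$ being simple. Fix once and for all a partition of the colour set into $s$ buckets $Q_1,\dots,Q_s$ of almost equal size, with $s=\lceil\Delta^{\beta}\rceil$ for a suitable $\beta=\beta(\delta,k)$ around $1/(k-2)$ --- chosen so that $q/s$ is a power of $\Delta$ still large enough that the conditional colouring instances below enjoy local uniformity, while $\Delta s^{-(k-2)}$ is tiny so that a projected hyperedge is rarely ``active''. Let $h\colon[q]\to[s]$ send a colour to the index of its bucket, and let $\nu=h_*\mu$ be the induced distribution on $[s]^V$. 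The algorithm (i)~runs the \emph{projected Glauber dynamics} for $\nu$ --- pick a uniform $v\in V$ and redraw $Y_v$ from $\nu_v(\cdot\mid Y_{V\setminus v})$ --- for $T$ steps from an arbitrary initial configuration, and then (ii)~given the final $Y$, draws $X\sim\mu(\cdot\mid h(X)=Y)$ and outputs $X$. The point of the projection is that conditioning on $h(X)=y$ only forbids, for a hyperedge $e$ on which $y$ is constant, the monochromatic colourings of $e$ within the corresponding bucket; so $\mu(\cdot\mid h(X)=y)$ and the conditional instances arising inside step~(i) are again simple-hypergraph colouring problems, but with very few active constraints, and step~(ii) becomes routine once the estimates below are in place.

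\textbf{Local uniformity.} The workhorse lemma I would prove is a two-sided marginal bound: there is $\zeta=\zeta(\Delta,k,q)$ with $\zeta=\Delta^{-\Omega(1)}$ under the hypotheses such that for every $v$, every $\Lambda\subseteq V\setminus\{v\}$, and every partial assignment $\sigma$ on $\Lambda$ (of colours, or of buckets) that extends to a proper colouring, the conditional marginal $\mu_v(\cdot\mid\sigma)$ lies within a $(1\pm\zeta)$ factor of uniform on $[q]$; consequently $\nu_v(\cdot\mid\cdot)$ is within $(1\pm\zeta)$ of uniform on $[s]$, hence at least $(1-\zeta)/s$. In particular every configuration is feasible for $\nu$ and the projected chain is irreducible (grow any target configuration one vertex at a time, each marginal being positive), so step~(i) is well defined from an arbitrary start. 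The proof is the by-now-standard recursive coupling / ``local LLL with slack'' argument: the marginal at $v$ is controlled by a breadth-first exploration of hyperedges that are ``near-monochromatic'' given the information revealed so far; by the induction hypothesis such a hyperedge becomes relevant only with probability $\lesssim q^{1-k}$ times a small slack factor, and --- this is where simplicity is essential --- consecutive relevant hyperedges along a branch of the exploration share at most one vertex, so a length-$\ell$ branch carries a genuine $(q^{1-k})^{\ell}$ decay rather than the far weaker $q^{-\ell}$ one must allow when hyperedges may overlap in up to $k-1$ vertices. Weighing this decay against the $\le k\Delta$ branching per step and substituting $q\ge 100\Delta^{(2+\delta)/(k-4/\delta-4)}$ together with $k\ge 20(1+\delta)/\delta$ makes the exploration comfortably subcritical; the same computation shows that the connected cluster of active hyperedges reachable from a fixed vertex has size $O(\log(1/\xi))$ except with probability $\xi$.

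\textbf{Mixing and implementation.} Given local uniformity I would prove fast mixing of the projected Glauber dynamics by a one-step path coupling on the projected chain: couple optimally two copies disagreeing only at a vertex $u$; a disagreement can propagate from $u$ to a neighbour $v$ only across a hyperedge active for the current pair of configurations, and the conditional probability of this --- summed over the $O(k\Delta)$ hyperedges through $u$, and bounded using the $(1\pm\zeta)$ estimates together with the simplicity of $H$ (which limits the number of propagation paths to each vertex) --- is small enough that the expected size of the disagreement set contracts by a factor $1-\Omega(1/n)$ at each step; equivalently, one reads off $O(1)$-spectral independence for $\nu$ and invokes the standard consequence for Glauber dynamics of general distributions. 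This gives mixing time $O(n^{1+o(1)}\log(n/\epsilon))$, the excess $o(1)$ in the exponent --- the slack accumulated in the coupling and in the cluster-truncation bookkeeping --- being tunable below $0.01$. A single projected step is implemented by exposing, from $v$, the cluster of active hyperedges (of size $O(\log(nT/\epsilon))$ with all but $\xi:=\epsilon/(4T)$ probability, by the previous paragraph) and sampling the restriction of the colouring to it by a recursive rejection sampler whose running time is proportional to the cluster size times $\mathrm{poly}(\Delta,k)$, using local uniformity for a constant acceptance rate; step~(ii) is the same kind of sampler on the sparser conditional instance. A union bound over the $T$ steps, the mixing error, and the step~(ii) error keeps the total variation distance below $\epsilon$; multiplying $T$ by the per-step cost and adding the $O(k^5\Delta^2)$ overhead for enumerating incident hyperedges and their vertices gives running time $\tilde{O}(k^5\Delta^2 n\,(n\Delta/\epsilon)^{0.01})$, which is \cref{theorem-main}.

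\textbf{Main obstacle.} The hard part is the local-uniformity lemma at the near-optimal density $q\gtrsim\Delta^{(2+\delta)/(k-4/\delta-4)}$ --- essentially $\Delta^{2/k}$ once $\delta$ is tuned and $k$ is large --- rather than the $\Delta^{3/(k-4)}$ of \cite{JPV21,HSW21}. This forces one to do the witness/exploration combinatorics honestly for simple hypergraphs: bounding how many distinct $2$-tree-like witness structures of each size can arise, and verifying that each genuinely contributes a $q^{1-k}$ (not $q^{-1}$) factor, so that the defining series converges with $\zeta=\Delta^{-\Omega(1)}$ small enough to feed the mixing coupling, with the two hypotheses on $k$ and $q$ being exactly what is needed to close the induction. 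Everything downstream --- the mixing coupling, the local implementation, and the error accounting --- is then a lengthy but essentially mechanical adaptation of the existing projected-chain framework.
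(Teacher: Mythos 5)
Your overall architecture (project to buckets, run a projected chain, sample locally by rejection on the unsatisfied cluster) matches the paper, but two of the three load-bearing steps are argued in ways that would not close at the claimed density $q\gtrsim\Delta^{2/k}$, and the paper in fact introduces new machinery precisely to fix them.

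First, the mixing argument. You propose a one-step path coupling / spectral-independence contraction: two neighbouring configurations differ at $u$, and the influence of $u$ on any other $v$, summed over $v$, should be $<1$. But this is a \emph{worst-case} requirement over adversarial pinnings $Y_{V\setminus v}$: in a configuration where the bucket values are constant along a long chain of hyperedges through $u$, the entire chain is unsatisfied, so the influence of $u$ is non-negligible on $\Theta(\text{chain length}\cdot k)$ vertices, and the row sum exceeds $1$. Local uniformity (a marginal statement) does not rule out such pinnings, so one-step Hamming contraction fails, and there is no known spectral-independence bound for $\nu$ in this regime for the same reason. The paper instead uses the systematic scan with an \emph{information-percolation} analysis (Section~6, \Cref{lemma-inf-per}--\Cref{lemma-single-path}): a discrepancy at time $T$ forces a chain of near-monochromatic \emph{extended} hyperedges running backwards in time, each hyperedge contributing $k$ update events; the probability of such a chain is then bounded by a union bound over time-indexed paths, which is an average-over-time argument rather than a worst-case-over-pinnings one. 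Simplicity is used there via \Cref{observation-neighbor-2} (consecutive outside-neighbour extended hyperedges share one extended vertex) and the self/outside-neighbour bookkeeping in \Cref{lemma-ind-path}, which is exactly what pushes the threshold from $\Delta^{3/(k-4)}$ to $\Delta^{2/(k-4)}$. Your coupling step would need to be replaced by something of this kind.

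Second, the cluster-size bound. You write that ``the same computation shows that the connected cluster of active hyperedges \dots has size $O(\log(1/\xi))$'', with the combinatorics ``done honestly for simple hypergraphs'' over $2$-tree-like witnesses. But the standard $2$-tree union bound (blocks of a single hyperedge, pairwise at line-graph distance $\ge 2$) gives roughly $(k\Delta)^{2\ell}\cdot s^{(1-k)\ell}$, which needs $s\gtrsim(k\Delta)^{2/(k-1)}$, i.e.\ $q\gtrsim\Delta^{4/k}$ --- twice the exponent the theorem needs. The paper explicitly notes this ($2$-trees are ``too many'') and introduces \emph{$2$-block-trees} (\Cref{def:2-block-tree}): blocks of $\theta=\lceil 4/\delta\rceil$ connected hyperedges, which by simplicity still carry $\theta k-\binom{\theta}{2}$ distinct vertices, together with a new encoding argument (\Cref{lemma-encoding-block-tree}, \Cref{lemma-block-tree-count}) bounding their number by $(\theta e^\theta(k\Delta)^{\theta+1})^\ell$ rather than $(e(k\Delta)^2)^{\theta\ell}$. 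Only with $\theta\gg 1$ does the exponent of $\Delta$ drop to $(\theta+1)/(\theta(k-\theta-1))\approx 1/k$ per bucket-level, matching the theorem. Without this new structure (or an equivalent), your exploration bookkeeping would stall at $\Delta^{4/k}$. The local-uniformity lemma itself, on the other hand, is routine (the paper's \Cref{lemma-local-uniform-colour}) and your sketch of it is fine; the choice $s\approx\Delta^\beta$ vs.\ the paper's $s=\lceil\sqrt q\rceil$ is a benign reparametrisation.
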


A few quick remarks are in order.
First of all, the exponent of $n$ in the running time can be made even closer to $1$ if more colours are given. See \Cref{theorem-general-main} for the full technical statement. Secondly, our algorithm can be modified into a perfect sampler by applying the bounding chain method \cite{huber1998exact} based on coupling from the past (CFTP) \cite{PW96}, following the same lines of \cite{HSW21}. 
Moreover, using known reductions from approximate counting to sampling~\cite{jerrum1986random,vstefankovivc2009adaptive,huber2015approximation,kolmogorov18faster} (see \cite{FGYZ21} for simpler arguments specialized to local lemma settings), 
one can efficiently and approximately count the number of proper colourings in simple hypergraphs under the same conditions in \Cref{theorem-main}. 

Our algorithm follows the recent projected Markov chain approach \cite{FGYZ21} with state compression \cite{FHY20}.
Roughly speaking, instead of assigning colours to vertices,
we split $[q]$ into $\sqrt{q}$ buckets of size $\sqrt{q}$ each and assign buckets to vertices.
We run a (systematic scan) Markov chain on these bucket assignments to generate a sample, 
and then conditional on this sample to draw a nearly uniform $q$-colouring.
The benefit of this bucketing is that, under the conditions of \Cref{theorem-main},
conditional on the assignments of all but one vertices,
the assignment of the remaining vertex is close to uniformly at random.
This implies that any atomic event\footnote{An event is \emph{atomic} if each variable it depends on must take one particular value. In discrete spaces, any event can be decomposed into atomic ones.} is exponentially unlikely in the number of distinct vertices it depends on.
In order to show that this approach works, we need to show two things:
1) the projected Markov chain is rapidly mixing;
2) each step of the Markov chain can be efficiently implemented.
For general hypergraphs, the previous $q\gtrsim\Delta^{3/(k-4)}$ bound comes from balancing the conditions so that the two claims are true simultaneously.
However, there is no room left for relaxation on either claim.
This means that, for our improvements in simple hypergraphs, new ingredients are required for both claims.

For rapid mixing, we take the information percolation approach \cite{HSZ19,JPV21,HSW21},
where the main effort is to trace discrepancies through a one-step greedy coupling,
and to show that they are unlikely after a sufficient amount of time.
In simple hypergraphs, an individual discrepancy path through time has more distinct updates of vertices than in the general case, and are thus more unlikely.
This allows us to relax the condition.
Our mixing time analysis is largely inspired by the work of Hermon, Sly, and Zhang \cite{HSZ19},
although we do need to handle some new complicacies, such as hyperedges whose vertices are consecutively updated in the discrepancy path.

For efficient implementation, we use rejection sampling. 
Here we want to sample the colour/bucket of a vertex conditional on the buckets of all other vertices.
We can safely prune hyperedges containing vertices of different buckets.
The remaining connected component containing the update vertex needs to have logarithmic size to guarantee efficiency of our rejection sampling.
The standard approach to bound its size is to do a union bound over certain combinatorial structures with sufficiently many distinct vertices.
Most previous analysis is based on enumerating so-called ``$2$-trees'', a notion first introduced by Alon \cite{Alon91}.
Unfortunately, under the conditions of \Cref{theorem-main}, there are too many ``$2$-trees'' to our need.
Instead, we introduce a new structure called ``$2$-block-trees'' (see \Cref{def:2-block-tree}).
Here each ``block'' is a collection of $\theta$ connected hyperedges,
and these blocks satisfy connectivity properties similar to a $2$-tree.
Since the hypergraph is simple, a block has at least $\theta k-\binom{\theta}{2}$ distinct vertices.
As long as $\theta\ll k$, we have a good lower bound on the number of distinct vertices, 
which in turn implies a good upper bound on the probability of these structures showing up.
To finish off with the union bound, we give a new counting argument for the number of $2$-block-trees,
which is based on finding a good encoding of these structures.

The exponent (roughly $2/k$) of $\Delta$ in \Cref{theorem-main} is unlikely to be tight,
although it appears to be the limit of current techniques.
In fact, we conjecture that the computational transition for sampling $q$-colourings in simple hypergraphs happens around the same threshold of the local lemma (namely, the exponent should be roughly $1/k$).
This conjecture is supported by the hardness result of Galanis, Guo, and Wang \cite{GGW21} for general $q$, and by the algorithm of Frieze and Anastos \cite{FA17} for $q=\Omega(\log n)$.
Note that for a simple $k$-uniform hypergraph with maximum degree $\Delta$, 
Frieze and Mubayi \cite{FM13} showed that the chromatic number $\chi(H)\le C_k\left(\frac{\Delta}{\log\Delta}\right)^{\frac{1}{k-1}}$ where $C_k$ depends only on $k$.
Their bound is asymptotically better than the bound given by the local lemma.
Thus there may still be a gap between the searching threshold and the sampling threshold.

A final remark is that our method would still work as long as the overlap of hyperedges is much smaller than $k$.
The condition on the parameters may deteriorate slightly but would still be better than those for general hypergraphs. 
On the other end of the spectrum, if any two intersecting hyperedges intersect at at least $k/2$ vertices,
the algorithm by Guo, Jerrum, and Liu \cite{GJL19} almost matches the hardness result \cite{GGW21}.
It is an intriguing question how the size of overlaps affects the complexity of these sampling problems,
or whether it is possible to improve sampling algorithms via a better use of the overlap information.

\section{Preliminaries}

In this section we gather some preliminary definitions and results for later use.
We generally use the bold font to denote vectors, matrices, and/or random variables.

\subsection{Graph theory}

Throughout this paper, we use the following notations for a graph $G=(V,E)$:
\begin{itemize}
  \item $G[A]$: the induced subgraph of $G$ on the vertex subset $A \subseteq V$. 
  \item $\dist_G(A,B)$: the distance between two vertex sets $A\subseteq V$ and $B\subseteq V$ on $G$, which is defined by $\dist_G(A,B) \defeq \min_{u \in A,v\in B}\dist_G(u,v)$ and $\dist_G(u,v)$ is the length of the shortest path between $u$ and $v$ in $G$. 
  \item $\Gamma_G^i(A)$: the set of vertices $u$ such that $\dist_G(A,u)=i$. Specifically, when $i=1$, this notation represents the neighbourhood of the given set $A \subseteq V$, and is also denoted by $\Gamma_G(A)$. 
\end{itemize}
We sometimes do not distinguish $u$ and the singleton set $\{u\}$ in sub- or sup-scripts.
For the sake of convenience, we may drop the subscript $G$ when the underlying graph is clear from the context. 

We need some more definitions for later use. 

\begin{definition}[Graph power] \label{def:graph_power}
  Let $G$ be an undirected graph. 
  The $i$-th power of $G$, denoted by $G^i$, is another graph that has the same vertex set as $G$, 
  and $\{u,v\}$ is an edge in $G^i$ iff $1\leq \dist_G(u,v)\leq i$. 
\end{definition}

\begin{definition}[Line graph] \label{def:line_graph}
  Let $H=(V,\+E)$ be a hypergraph. 
  Its line graph $\Lin(H)=(V_L,E_L)$ is given by
  $V_L=\+E$, and $\{e,e'\}\in E_L$ iff $e\cap e'\neq\emptyset$. 
\end{definition}

\subsection{Coupling and Markov chains}

Consider a discrete state space $\Omega$ and two distributions $\mu$ and $\nu$ over it. 
The \emph{total variation distance} between $\mu$ and $\nu$ is defined by
\[
  \DTV{\mu}{\nu}:=\frac{1}{2}\sum_{x\in\Omega}\left\vert\mu(x)-\nu(x)\right\vert. 
\]
A \emph{coupling} between $\mu$ and $\nu$ is a joint distribution $(X,Y)\in\Omega^2$ such that
its marginal distribution over $X$ (resp.~$Y$) is $\mu$ (resp.~$\nu$). 
The next lemma, usually referred to as the \emph{coupling lemma}, bounds the total variation distance between $\mu$ and $\nu$ by any of their couplings. 
\begin{lemma}[Coupling lemma]\label{lem:coupling_lemma}
For any coupling $(X,Y)$ between between $\mu$ and $\nu$, 
\[
  \DTV{\mu}{\nu}\leq\Pr[X\neq Y].
\] 
Moreover, there exists an optimal coupling reaching the equality. 
\end{lemma}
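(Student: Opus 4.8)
The plan is to establish the two assertions separately: first the inequality $\DTV{\mu}{\nu} \le \Pr[X \ne Y]$ for an arbitrary coupling, and then the existence of a coupling achieving equality.

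For the inequality, I would first recall the ``event'' characterisation of total variation distance, namely $\DTV{\mu}{\nu} = \max_{A \subseteq \Omega}\tp{\mu(A) - \nu(A)}$, where the maximum is attained at $A^\star \defeq \set{x \in \Omega : \mu(x) \ge \nu(x)}$. This follows immediately from the definition $\DTV{\mu}{\nu} = \frac12\sum_{x}\abs{\mu(x)-\nu(x)}$ by splitting the sum according to the sign of $\mu(x)-\nu(x)$ and using $\sum_{x} \mu(x) = \sum_{x} \nu(x) = 1$. Then, for any coupling $(X,Y)$ and the set $A^\star$ above, I would write $\mu(A^\star) - \nu(A^\star) = \Pr[X \in A^\star] - \Pr[Y \in A^\star] = \Pr[X \in A^\star,\, Y \notin A^\star] - \Pr[X \notin A^\star,\, Y \in A^\star] \le \Pr[X \in A^\star,\, Y \notin A^\star] \le \Pr[X \ne Y]$, which is the claimed bound.

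For the equality, I would exhibit the standard maximal coupling. Set $p \defeq \sum_{x \in \Omega}\min\set{\mu(x),\nu(x)}$, and note $p = 1 - \DTV{\mu}{\nu}$ (again by splitting on the sign of $\mu(x)-\nu(x)$). If $p = 0$ the two distributions have disjoint supports and the independent coupling already satisfies $\Pr[X\ne Y]=1=\DTV{\mu}{\nu}$; otherwise, with probability $p$ draw $Z$ from the distribution proportional to $\min\set{\mu(x),\nu(x)}$ and set $X = Y = Z$, and with probability $1-p$ draw $X$ from the distribution proportional to $\mu(x) - \min\set{\mu(x),\nu(x)}$ and, independently, $Y$ from the distribution proportional to $\nu(x) - \min\set{\mu(x),\nu(x)}$ (each of these nonnegative functions has total mass $1-p$, so both are well defined). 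A direct check gives that the $X$-marginal is $\mu$ and the $Y$-marginal is $\nu$. The key observation is that $\mu(x) - \min\set{\mu(x),\nu(x)}$ is supported on $\set{x : \mu(x) > \nu(x)}$ while $\nu(x) - \min\set{\mu(x),\nu(x)}$ is supported on the disjoint set $\set{x : \nu(x) > \mu(x)}$, so in the second branch $X \ne Y$ with certainty; hence $\Pr[X \ne Y] = 1 - p = \DTV{\mu}{\nu}$. Combined with the inequality, this coupling is optimal.

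The argument is entirely elementary and I do not anticipate a genuine obstacle; the only points requiring a little care are verifying that the constructed pair has the correct marginals and dealing with the degenerate cases $p=0$ (disjoint supports) and $p=1$ (i.e. $\mu = \nu$), both of which are immediate.
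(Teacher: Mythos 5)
The paper states the coupling lemma without proof, as it is a standard textbook fact. Your argument is the classical proof and is correct: the inequality via the event characterisation of total variation distance, and equality via the explicit maximal coupling with a shared-mass branch and two disjointly supported residual branches; both the marginal verification and the degenerate cases are handled properly.
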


Given a finite state space $\Omega$, a discrete-time \emph{Markov chain} is a sequence $\{X_t\}_{t\geq 0}$
where the probability of each possible state of $X_{t+1}$ only depends on the state of $X_t$. 
The transition of the chain is represented by the \emph{transition matrix} ${\bm P}:\Omega^2\to\={R}_{[0,1]}$, 
where ${\bm P}(i,j)=\Pr[X_{t+1}=j\mid X_t=i]$. 
When the state space $\Omega$ is clear from context, we simply denote the chain by its transition matrix. 
A Markov chain ${\bm P}$ is:
\begin{itemize}
  \item \emph{irreducible}, if for any $X,Y\in\Omega$, there exists $t>0$ such that ${\bm P}^t(X,Y)>0$;
  \item \emph{aperiodic}, if for all $X\in \Omega$, it holds that $\gcd\{t\mid {\bm P}^{t}(X,X)>0\}=1$; and
  \item \emph{reversible} with respect to a distribution $\pi$, if
  \[
    \pi(X){\bm P}(X,Y)=\pi(Y){\bm P}(Y,X) \qquad \forall X,Y\in\Omega.
  \]
  This equation is usually known as the \emph{detailed balance condition}. 
\end{itemize}
A distribution $\pi$ is \emph{stationary} for ${\bm P}$, if $\pi{\bm P}=\pi$ (regarding $\pi$ as a row vector). 
The detailed balance condition actually implies that the corresponding distribution is stationary. 
Furthermore, if a Markov chain is both irreducible and aperiodic, then it converges to a unique stationary distribution $\pi$. 
The speed of convergence towards $\pi$ is characterised by its \emph{mixing time}, defined by
\[
  \mixingtime{{\bm P},\epsilon}:=\min\left\{t \mid \max_{X\in\Omega}\DTV{{\bm P}^t(X,\cdot)}{\pi}<\epsilon\right\}.
\]

The joint process $(X_t,Y_t)_{t \geq 0}$ is a \emph{coupling of Markov chain} $\*P$ if $(X_t)_{t \geq 0}$ and $(Y_t)_{t\geq 0}$ individually follow the transition rule of $\*P$, and if $X_i = Y_i$ then $X_{j} = Y_j$ for all $j \geq i$. 
By the coupling lemma, for any coupling $(X_t,Y_t)_{t \geq 0}$ of $\*P$, it holds that
\begin{align*}
\DTV{P^t(X_0,\cdot)}{P^t(Y_0,\cdot)} \leq \Pr[X_t \neq Y_t]. 
\end{align*}
Hence, the mixing time of $\*P$ can be bounded by
\begin{align}\label{eqn:coupling-mixing}
\mixingtime{{\bm P},\epsilon} \leq \max_{X_0,Y_0 \in \Omega}	 \min\left\{t \mid \Pr[X_t \neq Y_t] \leq \epsilon \right\}.
\end{align}

\subsection{Lov{\'a}sz Local Lemma}

Let $\+{R}=\{R_1,\cdots,R_n\}$ be a set of mutually independent random variables. 
Given an event $A$, denote the set of variables that determines $A$ by $\vbl(A)\subseteq\+R$. 
Let $\+B=\{B_1,\cdots,B_n\}$ be a collection of ``bad'' events. 
For any event $A$ (not necessarily in $\+B$), let $\Gamma(A):=\{B\in\+B\mid B\neq A,\;\vbl(B)\cap\vbl(A)\neq\emptyset\}$. 
We will use the following version of \emph{Lov{\'a}sz Local Lemma} from \cite{HSS11}.
\begin{theorem}[\text{\cite{EL75,HSS11}}] \label{thm:lll}
  If there exists a function $x:\+B\to(0,1)$ such that for any bad event $B\in\+B$, 
  \begin{equation}  \label{equ:lll_cond}
    \Pr[B]\leq x(B)\prod_{B'\in\Gamma(B)}(1-x(B')),
  \end{equation}
  then it holds that
  \[
    \Pr\left[\bigwedge\limits_{B\in\+B}\bar{B}\right]\geq\prod_{B\in\+B}(1-x(B))>0.
  \]
  Moreover, for any event $A$,
  \begin{equation}  \label{equ:lll}
    \Pr\left[A\mid \bigwedge\limits_{B\in\+B}\bar{B}\right]\leq\Pr[A]\prod_{B\in\Gamma(A)}(1-x(B))^{-1}. 
  \end{equation}
\end{theorem}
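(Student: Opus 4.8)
The plan is to establish the standard \emph{conditional} form of the local lemma and then derive both conclusions by a telescoping argument. The core claim is:
\begin{equation*}
  \text{for every }B\in\+B\text{ and every }S\subseteq\+B\setminus\{B\},\qquad
  \Pr\Big[B\ \Big|\ \bigwedge_{C\in S}\bar C\Big]\le x(B),
\end{equation*}
which I would prove by induction on $|S|$ (tacitly, one also checks that all conditioning events have positive probability, which follows from the same telescoping below applied to subsets). The base case $S=\emptyset$ is immediate from \eqref{equ:lll_cond} since $\prod_{B'\in\Gamma(B)}(1-x(B'))\le 1$.

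For the inductive step, partition $S$ into the ``dependent'' part $S_1:=\{C\in S:\vbl(C)\cap\vbl(B)\ne\emptyset\}\subseteq\Gamma(B)$ and $S_2:=S\setminus S_1$. If $S_1=\emptyset$, then $B$ depends on variables disjoint from those of $\bigwedge_{C\in S_2}\bar C$, so by mutual independence of $\+R$ the conditional probability equals $\Pr[B]\le x(B)$. Otherwise write
\begin{equation*}
  \Pr\Big[B\ \Big|\ \bigwedge_{C\in S}\bar C\Big]
  =\frac{\Pr\big[B\wedge\bigwedge_{C\in S_1}\bar C\ \big|\ \bigwedge_{C\in S_2}\bar C\big]}
         {\Pr\big[\bigwedge_{C\in S_1}\bar C\ \big|\ \bigwedge_{C\in S_2}\bar C\big]} ,
\end{equation*}
and bound the two parts separately. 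The numerator is at most $\Pr\big[B\ \big|\ \bigwedge_{C\in S_2}\bar C\big]=\Pr[B]\le x(B)\prod_{B'\in\Gamma(B)}(1-x(B'))$, again by independence and \eqref{equ:lll_cond}. For the denominator, enumerate $S_1=\{C_1,\dots,C_m\}$ and expand it as $\prod_{i=1}^{m}\big(1-\Pr\big[C_i\mid\bigwedge_{j<i}\bar C_j\wedge\bigwedge_{C\in S_2}\bar C\big]\big)$; each conditioning set $\{C_1,\dots,C_{i-1}\}\cup S_2$ has size at most $|S|-1<|S|$ and excludes $C_i$, so the induction hypothesis gives $\Pr[C_i\mid\cdots]\le x(C_i)$, hence the denominator is at least $\prod_{i=1}^m(1-x(C_i))=\prod_{C\in S_1}(1-x(C))\ge\prod_{B'\in\Gamma(B)}(1-x(B'))$ (using $S_1\subseteq\Gamma(B)$ and that all factors are in $(0,1)$). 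Dividing, the $\Gamma(B)$-products cancel up to an extra factor $\prod_{B'\in\Gamma(B)\setminus S_1}(1-x(B'))\le 1$, yielding $x(B)$ and closing the induction.

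Both assertions of the theorem now follow quickly. Enumerating $\+B=\{B_1,\dots,B_n\}$ and telescoping,
\begin{equation*}
  \Pr\Big[\bigwedge_{B\in\+B}\bar B\Big]
  =\prod_{i=1}^{n}\Pr\Big[\bar B_i\ \Big|\ \bigwedge_{j<i}\bar B_j\Big]
  =\prod_{i=1}^{n}\Big(1-\Pr\Big[B_i\ \Big|\ \bigwedge_{j<i}\bar B_j\Big]\Big)
  \ge\prod_{i=1}^{n}(1-x(B_i))>0 ,
\end{equation*}
by the core claim applied with $S=\{B_1,\dots,B_{i-1}\}$; this is the first statement. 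For the second, fix an arbitrary event $A$, split $\+B$ into $\Gamma(A)$ and $\+B\setminus\Gamma(A)$, and write $\Pr[A\mid\bigwedge_{B\in\+B}\bar B]$ as a ratio conditioned on $\bigwedge_{B\notin\Gamma(A)}\bar B$. The numerator is at most $\Pr[A]$ (by independence of $A$ from $\{\bar B:B\notin\Gamma(A)\}$, and trivially so if $A\in\+B$), while the denominator, expanded over $\Gamma(A)=\{C_1,\dots,C_m\}$ exactly as above and bounded term by term via the core claim, is at least $\prod_{B\in\Gamma(A)}(1-x(B))$. This gives \eqref{equ:lll}.

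I expect the only real difficulty to be the bookkeeping in the inductive step: correctly isolating the dependent part $S_1$, checking that every auxiliary conditioning set stays within the scope of the induction hypothesis (strictly smaller, and excluding the relevant bad event), and keeping the inequalities $\prod(1-x(\cdot))\le 1$ oriented so the $\Gamma(B)$-products cancel in the right direction. No analytic estimates are needed; the argument is purely combinatorial, using only mutual independence of $\+R$ and hypothesis \eqref{equ:lll_cond}.
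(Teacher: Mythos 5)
Your proposal is correct: it is the standard inductive proof of the asymmetric local lemma (induction on the size of the conditioning set $S$, splitting into the dependent part $S_1\subseteq\Gamma(B)$ and the independent part $S_2$, then telescoping), including the conditional-probability bound \eqref{equ:lll} obtained by conditioning first on $\bigwedge_{B\notin\Gamma(A)}\bar B$. The paper does not prove this statement itself but quotes it from \cite{EL75,HSS11}, and your argument matches the classical proof given there, with the usual tacit point (which you flag) that well-definedness of all conditioning events follows from the same telescoping applied to subsets.
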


\subsection{List hypergraph colouring and local uniformity}

In our algorithm and analysis, we consider the general \emph{list hypergraph colouring} problem. 
Let $H=(V,\+E)$ be a $k$-uniform hypergraph with maximum degree $\Delta$. 
Let $(Q_v)_{v \in V}$ be a set of colour lists.
We say $\*X \in \otimes_{v \in V} Q_v$ is a proper list colouring if no hyperedge in $H$ is monochromatic with respect to $\*X$.
Let $\mu$ denote the uniform distribution of all proper list hypergraph colourings.
The following local uniformity property holds for the distribution $\mu$.
Its proof follows from the argument in~\cite{GLLZ19}. We include it here for completeness.

\begin{lemma}[local uniformity~\text{\cite{GLLZ19}}]\label{lemma-local-uniform-colour}
Let $q_0 = \min_{v \in V} \abs{Q_v}$ and $q_1 = \max_{v \in V} \abs{Q_v}$.
For any $r \geq k \geq 2$, if $q_0^k \geq \mathrm{e}q_1 r \Delta $, the for any $v\in V$ and $c \in Q_v$,
\begin{align*}
\frac{1}{|Q_v|}\exp\tp{-\frac{2}{r}}	 \leq \mu_v(c) \leq \frac{1}{|Q_v|}\exp\tp{\frac{2}{r}},	
\end{align*}
where $\mu_v$ is the marginal distribution on $v$ induced by $\mu$.
\end{lemma}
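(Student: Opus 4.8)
The plan is to apply the Lovász Local Lemma (\Cref{thm:lll}) to the uniform distribution $\mu$ over proper list colourings, treating the colour of each vertex as an independent uniform random variable and the monochromatic events on hyperedges as the bad events. First I would set up the probability space: let $(R_v)_{v\in V}$ be independent, with $R_v$ uniform on $Q_v$, and for each hyperedge $e\in\+E$ let $B_e$ be the event that $e$ is monochromatic. Then $\mu$ is exactly the conditional distribution of $(R_v)_v$ given $\bigwedge_{e}\ol{B_e}$, so bounding $\mu_v(c)$ reduces to bounding $\Pr[R_v = c \mid \bigwedge_e \ol{B_e}]$.

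The key estimates are: (i) an upper bound on $\Pr[B_e]$, and (ii) a verification of the LLL condition \eqref{equ:lll_cond} with a suitable uniform choice $x(B_e)=x$. For (i), since $e$ has $k$ vertices each with a list of size at least $q_0$, the probability that all of them receive a common colour is at most $q_1 / q_0^k$ (sum over the at most $q_1$ possible common colours, each contributing at most $q_0^{-k}$). For (ii), note $|\Gamma(B_e)|\le k\Delta$ since $e$ meets at most $k(\Delta-1)<k\Delta$ other hyperedges. Choosing $x$ on the order of $\frac{\mathrm{e} q_1 r\Delta}{q_0^k}\cdot\frac1{r\Delta}$ — concretely I would try $x = \frac{1}{r\Delta}$ or a small constant multiple thereof — the hypothesis $q_0^k\ge \mathrm{e} q_1 r\Delta$ should make $\Pr[B_e]\le x(1-x)^{k\Delta}$ hold, using $(1-x)^{k\Delta}\ge \mathrm{e}^{-1}$-type bounds when $x k\Delta$ is bounded. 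I would pick constants so that $\prod_{B'\in\Gamma(B_e)}(1-x(B'))^{-1} \le \exp(2/r)$ (roughly, $(1-x)^{-k\Delta}\le \mathrm{e}^{x k\Delta/(1-x)}$, and $x k\Delta \lesssim 1/r$).

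Then I would apply \eqref{equ:lll} twice. For the upper bound on $\mu_v(c)$: take $A$ to be the event $\{R_v=c\}$; it is independent of all bad events except those $B_e$ with $v\in e$, of which there are at most $\Delta$, so
\[
\mu_v(c)=\Pr\sqtp{R_v=c \;\middle|\; \textstyle\bigwedge_e \ol{B_e}} \le \Pr[R_v=c]\prod_{e\ni v}(1-x)^{-1} \le \frac{1}{|Q_v|}\exp\tp{\frac{2}{r}},
\]
provided the constants are arranged so that $\Delta\cdot x/(1-x) \le 2/r$. For the lower bound, apply \eqref{equ:lll} to the complementary event $A'=\{R_v\ne c\}$, again depending on at most $\Delta$ bad events, to get $1-\mu_v(c)=\Pr[R_v\ne c\mid \bigwedge_e\ol{B_e}] \le (1-\frac1{|Q_v|})\exp(2/r)$, hence $\mu_v(c)\ge 1-(1-\frac1{|Q_v|})\exp(2/r) \ge \frac1{|Q_v|}\exp(-2/r)$; the last inequality I would check by elementary manipulation (using $\exp(2/r)\le 1+O(1/r)$ and $|Q_v|\ge q_0\ge$ something large, or more cleanly the bound $1-(1-t)\mathrm{e}^{s}\ge t\,\mathrm{e}^{-s}$ valid when $s\ge 0$ and $t\in(0,1)$, which can be verified directly).

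The main obstacle is purely bookkeeping with constants: choosing $x$ and tracking the factors of $\mathrm{e}$, $r$, $\Delta$, $k$ so that \emph{simultaneously} (a) the LLL condition \eqref{equ:lll_cond} is satisfied under the single hypothesis $q_0^k\ge \mathrm{e} q_1 r\Delta$, and (b) the resulting product $\prod_{e\ni v}(1-x)^{-1}$ over the $\le\Delta$ relevant hyperedges is at most $\exp(2/r)$ rather than some larger constant. The slack in $r\ge k\ge 2$ and the factor $\mathrm{e}$ in the hypothesis are exactly what provide room for this; the inequalities $1-x\ge \mathrm{e}^{-x/(1-x)}$ and, for the final lower-bound step, a two-line estimate on $1-(1-t)\mathrm{e}^{s}$, are the only non-mechanical pieces. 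I expect no conceptual difficulty beyond faithfully reproducing the constant-chasing of \cite{GLLZ19}.
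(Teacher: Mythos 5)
Your upper-bound argument matches the paper's proof: set up the product distribution over $\prod_v Q_v$, take bad events $B_e$ (edge $e$ monochromatic), choose $x(B_e)=\frac{1}{r\Delta}$, verify the LLL condition from $\Pr[B_e]\le q_1/q_0^k\le\frac{1}{\mathrm e r\Delta}$, and apply \eqref{equ:lll} to $A=\{R_v=c\}$ to get $\mu_v(c)\le\frac{1}{|Q_v|}(1-\frac{1}{r\Delta})^{-\Delta}\le\frac{1}{|Q_v|}\exp(2/r)$. That part is sound.

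The lower bound, however, has a genuine gap. You propose to apply \eqref{equ:lll} to $A'=\{R_v\ne c\}$ and conclude
\[
1-\mu_v(c)\;\le\;\Bigl(1-\tfrac{1}{|Q_v|}\Bigr)\exp\Bigl(\tfrac{2}{r}\Bigr),
\qquad\text{hence}\qquad
\mu_v(c)\;\ge\;1-\Bigl(1-\tfrac{1}{|Q_v|}\Bigr)\exp\Bigl(\tfrac{2}{r}\Bigr)\;\overset{?}{\ge}\;\tfrac{1}{|Q_v|}\exp\Bigl(-\tfrac{2}{r}\Bigr).
\]
The last inequality, which you phrase as $1-(1-t)\mathrm e^s\ge t\,\mathrm e^{-s}$, is false for the relevant parameter range. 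Both sides equal $t$ at $s=0$, but $\frac{\partial}{\partial s}\bigl[1-(1-t)\mathrm e^s\bigr]\big|_{s=0}=-(1-t)$ while $\frac{\partial}{\partial s}\bigl[t\,\mathrm e^{-s}\bigr]\big|_{s=0}=-t$, so for $t<\frac12$ the left side immediately drops below the right as $s$ grows. Concretely, with $|Q_v|=100$ and $r=20$ (so $t=0.01$, $s=0.1$) you get $1-0.99\,\mathrm e^{0.1}\approx-0.094<0<0.009\approx t\,\mathrm e^{-s}$. In fact $(1-\frac{1}{|Q_v|})\exp(2/r)>1$ whenever $|Q_v|\gtrsim r/2$, which is always the case here, so the bound $1-\mu_v(c)\le(1-\frac{1}{|Q_v|})\exp(2/r)$ is vacuous: applying the LLL upper bound to the large-probability event $A'$ gives you nothing. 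The paper avoids this by instead defining, for each $e\ni v$, the event $\mathsf{Block}_e$ that all vertices of $e$ other than $v$ get colour $c$. If none of these occurs then colour $c$ is available for $v$, giving $\mu_v(c)\ge\frac{1}{|Q_v|}\Pr_\mu[\bigwedge_{e\ni v}\overline{\mathsf{Block}_e}]\ge\frac{1}{|Q_v|}(1-\sum_{e\ni v}\Pr_\mu[\mathsf{Block}_e])$, and each $\Pr_\mu[\mathsf{Block}_e]\le\frac{1}{r\Delta}$ is bounded via a second application of \eqref{equ:lll}. This yields $\mu_v(c)\ge\frac{1}{|Q_v|}(1-\frac1r)\ge\frac{1}{|Q_v|}\exp(-2/r)$. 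You will need this blocking-event idea (or something equivalent); symmetrising over $\{R_v=c\}$ versus $\{R_v\ne c\}$ cannot work.
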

\begin{proof}
Let $\+D$ denote the product distribution where each $v \in  V$ samples a colour in $Q_v$ uniformly at random. 
For each $e \in \+E$, let $B_e$ be the bad event that $e$ is monochromatic.	
Let $x(e) = \frac{1}{r\Delta}$ for all $e \in \+E$. 
Note that $r \geq k$.
We have
\begin{align*}
\Pr_{\+D}\sqtp{B_e} \leq \frac{q_1}{q_0^k} \leq \frac{1}{\mathrm{e}r\Delta} \leq \frac{1}{r\Delta}\tp{1 - \frac{1}{r\Delta}}^{k(\Delta -1)} \leq x(B_e)\prod_{B \in \Gamma(B_e)}(1-x(B)).	
\end{align*}
By \Cref{thm:lll}, it holds that 
\begin{align*}
\mu_v(c) \leq \frac{1}{\abs{Q_v}}\tp{1 - \frac{1}{r\Delta}}^{-\Delta} \leq \frac{1}{\abs{Q_v}}\exp\tp{\frac{2}{r}}.	
\end{align*}

For the lower bound, consider each hyperedge $e$ such that $v \in e$. 
Let $\mathsf{Block}_e$ be the event that all vertices in $e$ except $v$ have the colour $c$.
%
%
If none of $\mathsf{Block}_e$ occurs, then $v$ has colour $c$ with probability at least $1/{\abs{Q_v}}$.
By \Cref{thm:lll}, we have
\begin{align*}
\mu_v(c) \geq \frac{1}{\abs{Q_v}} \Pr_{\mu}\sqtp{ \bigwedge_{e \ni v} \overline{\mathsf{Block}_e} }	\geq \frac{1}{\abs{Q_v}}\tp{1 - \sum_{e \ni v}\Pr_{\mu}\sqtp{\mathsf{Block}_e}}.
\end{align*}
Note that $\Pr_{\+D}\sqtp{\mathsf{Block}_e} \leq q_0^{-k+1}$ and $|\Gamma(\mathsf{Block}_e)|\leq k(\Delta-1) + 1$. We have
\begin{align*}
\Pr_{\mu}\sqtp{\mathsf{Block}_e} \leq q_0^{-k+1}\tp{1 - \frac{1}{r\Delta}}^{-k(\Delta-1)-1} \leq q_0^{-k+1} \mathrm{e} \leq \frac{1}{r\Delta},
\end{align*}
where the last inequality holds because $q_0^{-k+1} \mathrm{e} \leq q_0^{-k} q_1 \mathrm{e} \leq \frac{1}{r\Delta}$,
which implies
\begin{align*}
\mu_v(c) \geq \frac{1}{\abs{Q_v}}\tp{1 - \sum_{e \ni v}\Pr_{\mu}\sqtp{\mathsf{Block}_e}} \geq \frac{1}{\abs{Q_v}}\tp{1 - \frac{1}{r}} \geq 	\frac{1}{\abs{Q_v}}\exp\tp{-\frac{2}{r}}. &\qedhere
\end{align*}

\end{proof}

\section{Algorithm}
Let $H = (V,\+E)$ be a $k$-uniform hypergraph and $[q]$ a set of colours.
Let $\mu$ denote the uniform distribution of proper hypergraph colourings. 
%
%
Our algorithm is a variant of the projected dynamics from~\cite{FGYZ21}, using a particular projection scheme from~\cite{FHY20}.
We first introduce some basic definitions and notations,  and then describe the sampling algorithm.

\subsection{Projection scheme, projected distribution and conditional distribution}

Our sampling algorithm is based on the following \emph{projection scheme} introduced in~\cite{FHY20}.

\begin{definition}[projection scheme \text{\cite{FHY20}}]\label{definition-projection}
Let $1\leq s \leq q$ be an integer. A (balanced) projection scheme with image size $s$ is a function $h: [q] \to [s]$ such that for any $j \in [s]$, $\abs{h^{-1}(j)} = \lfloor \frac{q}{s} \rfloor$ or $\abs{h^{-1}(j)} = \lceil \frac{q}{s} \rceil$.
\end{definition}


For any $\*X \in [q]^V$, define the projection \emph{image} $\*Y \in [s]^V$ of $\*X$ by
\begin{align*}
\forall v \in V,\quad Y_v = h(X_v).
\end{align*}
For simplicity, we often denote $\*Y = h(\*X)$, and for any subset $\Lambda \subseteq V$, we denote $\*Y_{\Lambda} = h(\*X_\Lambda)$.

Given a projection scheme, the following \emph{projected distribution} can be naturally defined.
\begin{definition}[projected distribution]
Given a projection scheme $h$, the projected distribution $\nu$ is the distribution of $\*Y = h(\*X)$, where $\*X \sim \mu$.	
\end{definition}

Given an image of the projection, we can define the following \emph{conditional distribution} over $[q]^V$.
\begin{definition}[conditional distribution]
Let $\Lambda \subseteq V$ be a subset of vertices.
Given a (partial) image $\sigma_{\Lambda} \in [s]^{\Lambda}$, the conditional distribution $\mu^{\sigma_\Lambda}$ is the distribution of $\*X \sim \mu$ conditional on $h(\*X_\Lambda) = \sigma_\Lambda$.
\end{definition}
\noindent
By definition, $\mu^{\sigma_{\Lambda}}$ is a distribution over $[q]^V$.
We use $\mu^{\sigma_\Lambda}_S$ to denote the marginal distribution on $S \subseteq V$ projected from $\mu^{\sigma_\Lambda}$, and we simply denote $\mu^{\sigma_\Lambda}_{\{v\}}$ by $\mu^{\sigma_\Lambda}_{v}$.

\subsection{The sampling algorithm}
In this section and what follows, we always assume that all vertices in $V$ are labeled by $\{0,1,\ldots,n-1\}$.
We also fix the parameter $s = \ctp{\sqrt{q}}$.
Given a projection scheme $h$ with image size $s$,
our sampling algorithm first samples $\*Y \in [s]^V$ from the projected distribution $\nu$, 
and then uses it to sample a random hypergraph colouring from the conditional distribution $\mu^{\*Y}$.
The pseudocode is given in~\Cref{alg:main}.

\begin{algorithm}[H]
\caption{\textsf{Sampling algorithm for hypergraph colouring}}\label{alg:main}
\KwIn{A hypergraph $H=(V,\+E)$, a set of colours $[q]$, an error bound $0<\epsilon<1$, and a balanced projection scheme $h:[q] \to [s]$, where $s = \ctp{\sqrt{q}}$}
\KwOut{A random colouring $\*X \in [q]^V$} 
sample $\*Y \in [s]^V$ uniformly at random\label{line-scan-start}\; 
\For{$t$ from $1$ to $T=\lceil 50n \log \frac{2n\Delta}{\epsilon}\rceil $}{
	let $v$ be the vertex with label $(t\mod n)$\label{line-choose-v}\;
    $X'_v \gets \sample\tp{H,h,\{v\},\*Y_{V \setminus \{v\}}, \frac{\epsilon}{4T}}$\; \label{line-scan-1}
    \Comment{The $\sample$ subroutine is given in \Cref{alg:sample}.}
	$Y_v \gets h(X'_v)$\label{line-scan-2}\;
}
\Return{$\*X \gets  \sample\tp{H,h,V,\*Y, \frac{\epsilon}{4T}}$\;\label{line-last}}
\end{algorithm}

The main ingredient of~\Cref{alg:main} is the part that samples $\*Y$ (\Cref{line-scan-start} to~\Cref{line-scan-2}).
It is basically a \emph{systematic scan} version of the Glauber dynamics for $\nu$.
In order to update the state of a particular vertex, we invoke a subroutine $\sample{}$, given in~\Cref{alg:sample}, to sample $X_v'$ first from the distribution conditional on $\*Y_{V\setminus\{v\}}$.
Also, $\sample{}$ is used to generate the random colouring conditional on $\*Y$ in \Cref{line-last}.
The subroutine $\sample{}$ in fact returns an approximate sample with high probability. 
Here we have to settle with some small error because exactly calculating the conditional distribution is intractable.
To implement $\sample{}$, we use standard rejection sampling, which is described in~\Cref{alg:rej-sample}.
Showing the correctness and efficiency of~\Cref{alg:sample} and~\Cref{alg:rej-sample} is one of our main contributions.

In the following we flesh out the outline above.
Let $\Lambda \subseteq V$ and $\*Y_{\Lambda} \in [s]^\Lambda$. 
Note that during the execution of \Cref{alg:main}, $\*Y_{\Lambda}$ is a random input to $\sample$.
Let $S \subseteq V$ and $\zeta \in (0,1)$. 
The subroutine $\sample\tp{H,h,S,\*Y_{\Lambda},\zeta}$ in \Cref{alg:main} returns a random sample $\*X_S \in [q]^S$ such that with probability at least $1-\zeta$, the total variation distance between $\*X_S$ and $\mu^{\*Y_\Lambda}_S$ is at most $\zeta$, where the probability is taken over the randomness of the input $\*Y_{\Lambda}$. 

In the $t$-th step of the systematic scan in~\Cref{alg:main}, we pick the vertex $v$ with label $(t \mod n)$, 
and use \Cref{line-scan-1} and \Cref{line-scan-2} to update the value of $Y_v$.
Ideally, we want to resample the value of $Y_v$ according to the conditional distribution $\nu^{\*Y_{V \setminus \{v\} }}_v$, where $\nu$ is the distribution projected from $\mu$.
However, exactly computing the conditional distribution is not tractable, and we approximate it by projecting from the random sample $X'_v \in [q]$ given by $\sample$ in \Cref{line-scan-1}.
%
%
It is straightforward to verify that $Y_v$ approximately follows the law of $\nu^{\*Y_{V \setminus \{v\} }}_v$ as long as $X'_v$ approximately follows the law of $\mu^{\*Y_{V \setminus \{v\} }}_v$.
In the last step, we use $\sample$ to draw approximate samples from the conditional distribution $\mu^{\*Y}$.

We explain the details of $\sample\tp{H,h,S,\*Y_{\Lambda},\zeta}$ next.
First we need some notations. Given a partial image $\*Y_\Lambda$, we say an hyperedge $e \in \+E$ is satisfied by $\*Y_\Lambda$ if there exists $u,v\in e \cap \Lambda$ such that $Y_u \neq Y_v$. In other words, for all $\*X \in [q]^V$ such that $\*Y_\Lambda = h(\*X_\Lambda)$, the hyperedge $e$ is not monochromatic with respect to $\*X$, and thus $e$ is always ``satisfied'' given $\*Y_\Lambda$.
Let $H^{\*Y_\Lambda} = (V,\+E^{\*Y_\Lambda})$ be the hypergraph obtained from $H$ by removing all hyperedges satisfied by $\*Y_\Lambda$.
Let $H^{\*Y_\Lambda}_1,H^{\*Y_\Lambda}_2,\ldots,H^{\*Y_\Lambda}_m$ denote the connected components of $H^{\*Y_\Lambda}$, where $H^{\*Y_\Lambda}_i = (V_i, \+E^{\*Y_\Lambda}_i)$. The following fact is straightforward to verify
\begin{align*}
\mu^{\*Y_\Lambda} = \mu_1^{\*Y_{\Lambda \cap V_1}} \times 	 \mu_2^{\*Y_{\Lambda \cap V_2}} \times \ldots \times  \mu_m^{\*Y_{\Lambda \cap V_m}},
\end{align*}
where 
$\mu_i$ is the uniform distribution over proper $q$-colourings of the sub-hypergraph $H_i^{\*Y_\Lambda}$ 
(namely, $\mu_i^{\*Y_{\Lambda \cap V_i}}$ is the uniform distribution over list colourings of $H_i^{\*Y_\Lambda}$ conditional on $\*Y_{\Lambda \cap V_i}$).
Without loss of generality, we assume $S \cap V_j \neq \emptyset$ for $1 \leq j \leq \ell$. 
To draw a random sample from $\mu^{\*Y_\Lambda}_S$, 
it suffices to draw a random sample from the product distribution $\mu_1^{\*Y_{\Lambda \cap V_1}} \times \mu_2^{\*Y_{\Lambda \cap V_2}} \times \ldots \times  \mu_\ell^{\*Y_{\Lambda \cap V_\ell}}$,
which we will do by drawing from each $\mu_i^{\*Y_{\Lambda \cap V_i}}$ individually using standard rejection sampling (given in \Cref{alg:rej-sample}).


One final detail about~\Cref{alg:sample} and~\Cref{alg:rej-sample} is about their efficiency.
Basically we set some thresholds to guard against two unlikely bad events.
We break out from the normal execution immediately and return an arbitrary random sample if one of the following two bad events occur:
\begin{itemize}
  \item for some $1 \leq i \leq \ell$, $|\+E^{\*Y_\Lambda}_i| > 4 \Delta k^3 \log\tp{\frac{n\Delta}{\zeta}}$;
  \item for some $1 \leq i \leq \ell$, the rejection sampling for $\mu_i^{\*Y_{\Lambda \cap V_i}}$ fails after $R$ trials, where 
\begin{align}\label{eq-def-eta}
  R &\defeq \ctp{10\tp{\frac{n\Delta}{\zeta}}^{\frac{1}{1000\eta}}\log \frac{n}{\zeta}} \quad\quad\quad\quad \text{ and } & \eta &\defeq \frac{1}{\Delta} \tp{\frac{q}{100}}^{\frac{k-3}{2}}.
\end{align}
\end{itemize}
In the analysis (see \Cref{lemma-subroutine-detail}), 
we will show that both of the two bad events above occur with low probability, 
and thus with high probability the $\sample$ subroutine returns an approximate sample with desired accuracy.

\begin{algorithm}[h]
\caption{$\sample\tp{H,h,S,\*Y_{\Lambda},\zeta}$}\label{alg:sample}
\KwIn{A hypergraph $H=(V,\+E)$, a projection scheme $h:[q] \to [s]$, a subset $S\subseteq V$, a (partial) image $\*Y_\Lambda \in [s]^{\Lambda}$ where $\Lambda \subseteq V$, and an error bound $\zeta \in (0,1)$%
}
\KwOut{A random (partial) colouring $\*X_S \in [q]^S$} 
remove all hyperedges in $H$ that are satisfied by $\*Y_{\Lambda}$ to obtain $H^{\*Y_\Lambda} = (V,\+E^{\*Y_\Lambda})$\;
let $H_i=(V_i,\+E^{\*Y_\Lambda}_i)$ for $1\leq i \leq \ell$ be the connected components such that $V_i \cap S \neq \emptyset$\label{line-find-connected}\;
\If{$\exists 1 \leq i \leq \ell$ such that $|\+E^{\*Y_\Lambda}_i| > 4 \Delta k^3 \log\tp{\frac{n\Delta}{\zeta}}$\label{line-bad-1-if}}{
\Return{$\*X_S \in [q]^S$ uniformly at random\;\label{line-bad-1}}
}
\For{$i$ from 1 to $\ell$}{
	$\*X_i \gets \Rejsample(H_i,h, \*Y_{\Lambda \cap V_i}, R)$, where $R = \ctp{10\tp{\frac{n\Delta}{\zeta}}^{\frac{1}{1000\eta}}\log \frac{n}{\zeta}}$\label{line-rj-sample}\;
    \Comment{The $\Rejsample$ subroutine is given in \Cref{alg:rej-sample}.}
	\If{$\*X_i = \perp$}{
		\Return{$\*X_S \in [q]^S$ uniformly at random \label{line-bad-2}\;}
	}
}
\Return{$\*X_S$ where $\*X = \biguplus_{i=1}^\ell \*X_i$\;\label{line-good}}
\end{algorithm}

\begin{algorithm}[h]
\caption{$\Rejsample(H,h, \*Y_{\Lambda}, R)$}\label{alg:rej-sample}
\KwIn{A hypergraph $H=(V,\+E)$, a projection scheme $h:[q] \to [s]$, a (partial) image $\*Y_{\Lambda} \in [s]^\Lambda$ where $\Lambda \subseteq V$ and an integer $R$}
\KwOut{A random colouring $\*X \in [q]^V$ or a special symbol $\perp$} 
for each $v \in V$, let $Q_v \gets h^{-1}(Y_v)$ if $v \in \Lambda$, and $Q_v \gets [q]$ if $v \notin \Lambda$\;
\For{$i$ from 1 to $R$}{
	sample $X_v \in Q_v$ uniformly at random for all $v \in V$ and let $\*X = (X_v)_{v \in V}$\;
	\If{$\*X$ is a proper hypergraph colouring of $H$}{
		\Return{$\*X$\;}
	}
}
\Return{$\perp$\;}
\end{algorithm}

\section{Proof of the main theorem}
Let $H=(V,\+E)$ be a simple $k$-uniform hypergraph with maximum degree $\Delta$.
Let $[q]$ be a set of $q$ colours.
Recall $s = \ctp{\sqrt{q}}$, where $s$ is the parameter of projection scheme $h$ (\Cref{definition-projection}).
To construct $h$, we partition $[q]$ into $s$ intervals, 
where the first $(q\bmod s)$ of them contains $\lceil q/s\rceil$ elements each
while the rest contains $\lfloor q/s\rfloor$ elements each. 
For each $i\in[q]$, set
\begin{equation}\label{equ:projection_construct}
h(i)=j\qquad\text{ where $i$ belongs to the $j$-th interval.}
\end{equation}
Note that this $h$ satisfies \Cref{definition-projection}.
In our algorithm, $h$ is implemented as an oracle, 
supporting the following two types of queries. 
\begin{itemize}
  \item Evaluation: given $i$, the oracle returns $h(i)$. 
  \item Inversion: given $j$, the oracle returns a uniform element in $h^{-1}(j)$. 
\end{itemize}
Obviously, each query can be answered in time $O(\log q)$ because of the construction of $h$. 

The next theorem is a stronger form of \Cref{theorem-main}.
It shows that our algorithm can run in time arbitrarily close to linear in $n$, the number of vertices, 
as long as sufficiently many colours are available.
\begin{theorem}\label{theorem-general-main}
The following result holds for any $\delta > 0$ and $0 < \alpha\leq 1$. Given any $\epsilon \in (0,1)$, any $q$-colouring instance on $k$-uniform simple hypergraph $H=(V,E)$ with maximum degree $\Delta$, and a balanced projection scheme, if $k \geq \frac{20(1+\delta)}{\delta}$ and $q \geq 100 \tp{\frac{\Delta}{\alpha}}^{\frac{2+\delta}{k - 4/\delta - 4}}$, \Cref{alg:main} returns a random colouring that is $\epsilon$-close to $\mu$ in total variation distance in time 
$O\tp{\Delta^2 k^5 n\tp{\frac{n\Delta}{\epsilon}}^{\alpha / 100} \log^4 \tp{\frac{n\Delta q}{\epsilon}}}$.
\end{theorem}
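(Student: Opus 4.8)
The plan is to establish the two assertions of \Cref{theorem-general-main} — that \Cref{alg:main} outputs a colouring within total variation distance $\epsilon$ of $\mu$, and that it runs within the stated time — separately, reducing each to three ingredients: rapid mixing of an idealized projected systematic scan, an accuracy-and-efficiency guarantee for the $\sample$ subroutine, and a triangle-inequality and union-bound argument gluing these together via \Cref{eqn:coupling-mixing}. For correctness I would introduce the \emph{idealized} chain $\widehat{\*Y}_t$ on $[s]^V$ that at step $t$ resamples the coordinate $v$ with label $t\bmod n$ exactly from $\nu^{\widehat{\*Y}_{V\setminus\{v\}}}_v$, and couple it step by step with the images produced in \Cref{line-scan-1}--\Cref{line-scan-2}. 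Since $h$ pushes $\mu^{\*Y_{V\setminus\{v\}}}_v$ forward onto $\nu^{\*Y_{V\setminus\{v\}}}_v$ and total variation distance does not increase under pushforward, each real step deviates from the idealized step by at most the error $\zeta=\frac{\epsilon}{4T}$ guaranteed by $\sample$; hence over the $T$ scan steps plus the final invocation in \Cref{line-last} the accumulated approximation error is at most $\epsilon/2$, and it remains to show $\mixingtime{\widehat{\*P},\epsilon/4}\le T=\ctp{50n\log\frac{2n\Delta}{\epsilon}}$.

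The mixing bound I would obtain from \Cref{eqn:coupling-mixing} via an information-percolation coupling of two copies of the systematic scan that initially differ at a single vertex. Under a greedy one-step coupling, a discrepancy at a vertex $v$ can be created at a neighbour $u$ only through a hyperedge containing $v$ that the current partial image fails to satisfy; by \Cref{lemma-local-uniform-colour} with colour lists of size $\Theta(\sqrt q)$ the marginal bucket of any vertex is nearly uniform, so an unsatisfied hyperedge forces $\Omega(k)$ of its vertices into one bucket and hence occurs with probability exponentially small in the number of distinct vertices it touches. Tracing discrepancies backwards through the scan produces a connected ``discrepancy structure'' in the line graph, and here simplicity is used: such a structure spanning $\ell$ hyperedges contains at least $\ell k-\binom{\ell}{2}$ distinct vertices. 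A union bound over these structures, following Hermon, Sly, and Zhang and handling with extra care hyperedges whose vertices are updated consecutively along the structure, then shows the discrepancy dies out within $T$ steps under the hypotheses $k\ge\frac{20(1+\delta)}{\delta}$ and $q\ge100\tp{\frac{\Delta}{\alpha}}^{\frac{2+\delta}{k-4/\delta-4}}$.

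For the $\sample$ subroutine I would prove that, except with probability at most $\zeta$, it returns a sample within total variation distance $\zeta$ of $\mu^{\*Y_\Lambda}_S$, using two high-probability events. First, with probability $1-\frac{1}{\mathrm{poly}(n\Delta/\zeta)}$ every connected component of the pruned hypergraph $H^{\*Y_\Lambda}$ meeting $S$ has at most $4\Delta k^3\log\frac{n\Delta}{\zeta}$ hyperedges: a larger component would contain a large ``$2$-block-tree'' (\Cref{def:2-block-tree}), a tree-like arrangement of blocks, each block a collection of $\theta$ connected hyperedges with at least $\theta k-\binom{\theta}{2}$ distinct vertices by simplicity, and by \Cref{lemma-local-uniform-colour} such a block is left unsatisfied by $\*Y_\Lambda$ with probability exponentially small in that vertex count; a new encoding-based count of the number of $2$-block-trees, together with the choice $\theta\ll k$, makes the union bound converge, and this is exactly where simplicity buys the improvement over the $\Delta^{3/(k-4)}$ bound for general hypergraphs. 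Second, conditioned on such a small component $H_i$, each run of $\Rejsample$ draws a uniform colour from every list ($h^{-1}(Y_v)$ for $v\in\Lambda$, $[q]$ otherwise) and accepts iff no hyperedge is monochromatic; a hyperedge is monochromatic with probability $\lesssim q^{-(k-1)/2}$ and these events satisfy the condition of \Cref{thm:lll}, so one run succeeds with probability at least (up to constants) $\tp{\frac{\zeta}{n\Delta}}^{1/(1000\eta)}$ with $\eta=\frac{1}{\Delta}\tp{\frac{q}{100}}^{(k-3)/2}$, whence $R$ runs fail with probability $\frac{1}{\mathrm{poly}(n/\zeta)}$, and conditioned on acceptance the output is exactly $\mu_i^{\*Y_{\Lambda\cap V_i}}$. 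Combining these with the product decomposition $\mu^{\*Y_\Lambda}=\prod_i\mu_i^{\*Y_{\Lambda\cap V_i}}$ yields the $\sample$ guarantee.

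Assembling the pieces, correctness follows immediately (mixing error $\le\epsilon/4$ plus approximation error $\le\epsilon/2$). For the running time, on the good events each of the $T+1$ invocations of $\sample$ runs at most $\ell$ rejection loops ($\ell=1$ for scan steps, and for the final call $S=V$ the total over all components is $O(n)$), each loop of $R$ trials costing $\mathrm{poly}(\Delta k)\log\frac{n\Delta}{\zeta}$ time plus $O(\log q)$ per projection-oracle query, so the whole algorithm costs $O\tp{\mathrm{poly}(\Delta k)\,n\,R\,\mathrm{polylog}\tp{\tfrac{n\Delta q}{\epsilon}}}$. It then remains to check $R=\ctp{10\tp{\frac{n\Delta}{\zeta}}^{1/(1000\eta)}\log\frac{n}{\zeta}}=O\tp{\tp{\tfrac{n\Delta}{\epsilon}}^{\alpha/100}\mathrm{polylog}}$: a short computation shows the hypotheses force the exponent $\frac{k-3}{2}\cdot\frac{2+\delta}{k-4/\delta-4}$ appearing in $\eta$ to exceed $1$, so $\eta\ge\frac{1}{\Delta}\cdot\frac{\Delta}{\alpha}=\frac{1}{\alpha}$ and $\frac{1}{1000\eta}\le\frac{\alpha}{1000}$; since $\zeta=\frac{\epsilon}{4T}$ gives $\frac{n\Delta}{\zeta}=O\bigl(\tfrac{n^2\Delta}{\epsilon}\,\mathrm{polylog}\bigr)$, this yields $\tp{\frac{n\Delta}{\zeta}}^{1/(1000\eta)}\le\tp{\tfrac{n\Delta}{\epsilon}}^{\alpha/100}\mathrm{polylog}$ with room to spare, and substituting gives $O\tp{\Delta^2k^5\,n\,\tp{\tfrac{n\Delta}{\epsilon}}^{\alpha/100}\log^4\tp{\tfrac{n\Delta q}{\epsilon}}}$. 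I expect the main obstacle to be the rapid-mixing argument — adapting information percolation to a systematic (rather than random) scan of the projected chain and controlling hyperedges with consecutively updated vertices — alongside the counting of $2$-block-trees, since the entire gain over \cite{JPV21,HSW21} rests on extracting the extra $\Omega(k)$ distinct vertices that simplicity provides in both places.
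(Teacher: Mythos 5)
Your proposal reproduces the paper's own proof structure faithfully: you reduce the theorem to the mixing bound for the idealized projected systematic scan (\Cref{lemma-mixing}), the accuracy/efficiency guarantees for $\sample$ (\Cref{lemma-subroutine-detail}), and a step-by-step coupling of the real chain with the idealized one plus a triangle-inequality and union-bound assembly, then obtain the running time exactly as the paper does — by showing the hypothesis $q \geq 100(\Delta/\alpha)^{(2+\delta)/(k-4/\delta-4)}$ forces $\eta \geq 1/\alpha$ so that $R=O((n\Delta/\epsilon)^{\alpha/100}\mathrm{polylog})$. The only divergences are cosmetic: your per-step coupling is phrased as a TV-distance bound of $\zeta$ rather than the paper's cleaner ``perfect sample conditional on neither bad event, each of probability $\le\zeta$,'' and your high-level sketch of the mixing lemma's use of simplicity (counting $\ell k-\binom{\ell}{2}$ distinct vertices) is closer in spirit to the $2$-block-tree argument than to the paper's actual information-percolation proof via extended hyperedges and the self/out neighbour split — but those are proofs of the supporting lemmas, not of this theorem.
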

\begin{remark}
The parameter $\alpha$ captures the relation between the local lemma condition and the running time of the algorithm. 
If $\alpha$ becomes smaller, the condition is more confined, and the running time is closer to linear. 
In particular, \Cref{theorem-main} is implied by setting $\alpha=1$.
\end{remark}

We need two lemmas to prove \Cref{theorem-general-main}.
The first lemma analyses the mixing time of the idealised systematic scan. 
Let $\nu$ be the projected distribution.
The idealised systematic scan for $\nu$ is defined as follows. 
Initially, let $\*X_0 \in [s]^V$ be an arbitrary initial configuration.
In the $t$-th step, the systematic scan does the following update steps.
\begin{itemize}
\item Pick the vertex $v \in V$ with label $(t \mod n)$ and let $X_t(V \setminus \{v\} ) \gets X_{t-1}(V \setminus \{v\} )$.
\item Sample $X_t(v) \sim \nu^{\*X_{t-1}(V \setminus \{v\} )}_v$.
\end{itemize}

\begin{lemma}\label{lemma-mixing}
  If $q \geq 40\Delta^{\frac{2}{k-4}}$ and $k \geq 20$, the systematic scan chain $\*P_{scan}$ for $\nu$ is irreducible, aperiodic and reversible with respect to $\nu$. Furthermore, the mixing time satisfies
  \begin{align*}
    \forall 0 < \epsilon <1,\quad T_{\mathrm{mix}}(\*P_{scan},\epsilon) \leq \ctp{50n\log \frac{n\Delta}{\epsilon}}.
  \end{align*}
\end{lemma}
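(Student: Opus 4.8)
The plan is to dispatch the Markov-chain-theoretic properties with routine arguments and to put the real work into the mixing-time bound, via an information-percolation coupling in the spirit of Hermon--Sly--Zhang~\cite{HSZ19}. For the basic properties: each step of $\*P_{scan}$ resamples a single coordinate $Y_v$ from the conditional marginal $\nu_v^{\*X(V\setminus\{v\})}$, i.e.\ it is a heat-bath update, so it obeys detailed balance with respect to $\nu$ and preserves $\nu$. The conditioned instance in such an update is a list-colouring instance in which $v$ has a list of size $q$ and every other vertex has a list of size $\ftp{q/s}$ or $\ctp{q/s}$, which is at least of order $\sqrt q$; since $q\ge 40\Delta^{2/(k-4)}$ and $k\ge 20$, the hypothesis of \Cref{lemma-local-uniform-colour} is met (take $r=k$, so $q_0^k$ is of order $q^{k/2}\ge \mathrm{e}\,q\,k\,\Delta$), hence every conditional marginal $\nu_v^{\sigma}$ has full support on $[s]$. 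Full support gives a self-loop at every state (aperiodicity) and, flipping coordinates one at a time, a path between any two states in $\mathrm{supp}(\nu)$ (irreducibility); together with stationarity this identifies $\nu$ as the unique stationary distribution, and with the single-site reversibility the first assertion follows.

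For the mixing time I would use \eqref{eqn:coupling-mixing}: it suffices to couple two copies $(\*X_t)$, $(\*Y_t)$ of $\*P_{scan}$ from arbitrary starts so that $\Pr[\*X_T\ne\*Y_T]\le\epsilon$ at $T=\ctp{50n\log(n\Delta/\epsilon)}$, i.e.\ after $L:=\ctp{50\log(n\Delta/\epsilon)}$ sweeps, and I would employ the greedy coupling: when $v$ is updated, draw $X_t(v),Y_t(v)$ from the optimal coupling of \Cref{lem:coupling_lemma} between $\nu_v^{\*X_{t-1}(V\setminus\{v\})}$ and $\nu_v^{\*Y_{t-1}(V\setminus\{v\})}$. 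The crucial quantitative input is that these two marginals differ only through hyperedges that are almost monochromatic in buckets: if the two boundary configurations agree off a set $D$, then $\DTV{\nu_v^{\*X_{t-1}(V\setminus\{v\})}}{\nu_v^{\*Y_{t-1}(V\setminus\{v\})}}$ is bounded by the probability, under either conditional distribution, that some hyperedge $e\ni v$ with $e\cap D\ne\emptyset$ has all of $e\setminus\{v\}$ (up to the discrepancy vertex) lying in a common bucket; by \Cref{lemma-local-uniform-colour} applied to the relevant colour-level conditional instance, a fixed such $e$ is ``bad'' with probability roughly $q^{-(k-2)/2}$, so a discrepancy is created at $v$ with probability at most about $k\Delta\cdot q^{-(k-2)/2}$, which is comfortably below $1$ once $k\ge 20$ and $q\ge 40\Delta^{2/(k-4)}$.

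Next, if $\*X_T(v)\ne\*Y_T(v)$, then tracing the greedy coupling backwards from this event exhibits a space--time ``discrepancy structure'': a connected chain of bad-edge events, each created from a discrepancy left by an earlier update, reaching from time $T$ back to time $0$. Because $H$ is simple, consecutive hyperedges in such a structure overlap in at most one vertex, so a structure meeting $\ell$ hyperedges contains at least $\ell(k-1)+1$ distinct updated vertices --- this is precisely where simplicity loosens the requirement on $q$ compared to general hypergraphs. One then union-bounds over such structures: the number of them of a given size $\ell$ is at most $(Ck\Delta)^{\ell}$ for an absolute constant $C$ (the time stamps being essentially forced by the spatial structure and the deterministic scan order once $T$ is fixed), while each occurs with probability at most the product of the one-step estimates along it, of order $q^{-\Omega(k\ell)}$; the hypotheses $k\ge 20$ and $q\ge 40\Delta^{2/(k-4)}$ make $Ck\Delta\cdot q^{-\Omega(k)}$ a constant $\rho<1/2$, and since a structure reaching time $0$ from time $T=\Theta(Ln)$ must have size $\ell\gtrsim L$, summing $\rho^{\ell}$ over $\ell\gtrsim L$ and over the $n$ choices of $v$ is at most $\epsilon$ for the (deliberately generous) constant $50$. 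Hence $\Pr[\*X_T\ne\*Y_T]\le\epsilon$, and \eqref{eqn:coupling-mixing} gives the claimed bound.

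I expect the combinatorics of the space--time witness to be the main obstacle: cleanly defining the discrepancy structure for a \emph{systematic} scan --- where, unlike in Glauber dynamics, several vertices of one hyperedge can be updated consecutively within a sweep, so that a discrepancy may travel through a single edge and the structure is not just an edge-path --- together with the accompanying enumeration and the verification that the geometric series stays subcritical in the regime $q\ge 40\Delta^{2/(k-4)}$ rather than the more comfortable $q\gtrsim\Delta^{3/(k-4)}$. By comparison, the local-uniformity one-step estimate and the basic Markov-chain facts are routine.
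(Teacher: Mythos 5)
Your overall architecture matches the paper's almost exactly: irreducibility/aperiodicity/reversibility via full support of $\nu$ (the paper proves $\nu(\tau)>0$ directly via the LLL, which is equivalent to your full-support-of-conditionals argument), and the mixing bound via the greedy (per-step optimal) coupling plus an information-percolation union bound over backward discrepancy traces in a space--time hypergraph. So this is the same route, not a genuinely different one.

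There is, however, a concrete overstatement that you flag yourself as an ``obstacle'' but which is actually the crux and would derail the calculation as written. You claim that because $H$ is simple, ``a structure meeting $\ell$ hyperedges contains at least $\ell(k-1)+1$ distinct updated vertices.'' This is false for the raw discrepancy trace under a \emph{systematic} scan: two consecutive extended hyperedges in the trace can correspond to the \emph{same} underlying hyperedge $e$ at two nearby update times, and then they share up to $k-1$ extended vertices, so the trace can make many hops within a single edge while accumulating almost no fresh update sites. Simplicity only gives a single-vertex overlap when consecutive hyperedges are \emph{distinct} in $\+E$. The paper handles exactly this by distinguishing self-neighbours $N_{\mathsf{self}}$ (same underlying hyperedge, bounded in number by $O(k)$ per step via \Cref{lemma-degree}) from outside-neighbours $N_{\mathsf{out}}$, and then \emph{pruning} the IPP to a subsequence $\ext{f}_1<\cdots<\ext{f}_m$ with pairwise overlaps $\le 1$ and $m\ge R_{\mathrm{out}}+\tfrac13(R_{\mathrm{self}}-b)$ (\Cref{lemma-ind-path}); only then does the fresh-vertex count per block become $\ge k-1$ and feed into the per-path probability bound (\Cref{lemma-single-path}). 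Without this pruning (or an equivalent accounting that trades the small self-neighbour branching factor $2k$ against the lost freshness), the summand in your union bound is not geometrically subcritical: a long run of self-neighbours contributes a branching factor $\approx(2k)^{\text{run length}}$ while buying almost no extra $q^{-\Omega(k)}$ factors.

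A second, smaller point: your one-step estimate ``discrepancy created at $v$ with probability $\lesssim k\Delta\,q^{-(k-2)/2}$'' is not what you should multiply along the path, because that bound already unions over the $\Delta$ incident hyperedges and over the bucket colour $c$; naively multiplying per-step bounds double-counts the choice of which hyperedge/bucket is bad at each step against the $(Ck\Delta)^\ell$ path enumeration. The paper instead fixes the IPP first, applies the chain rule to the disjoint sets of fresh extended vertices along it, and only afterwards unions over the $s^{m-1}$ choices of bucket labels and over IPPs (see \eqref{eq-upper-B-P} and the enumeration in the proof of \Cref{lemma-inf-per}); this keeps the bookkeeping consistent and is what makes the constants in $q\ge 40\Delta^{2/(k-4)}$, $k\ge 20$ close.
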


\Cref{lemma-mixing} is shown in \Cref{sec:mixing}.

Our next lemma analyzes the $\sample$ subroutine. 
Let $(\*Y_t)_{t=0}^T$ denote the sequence of random configurations in $[s]^V$ generated by \Cref{alg:main}, where $\*Y_0 \in [s]^V$ is the initial configuration and $\*Y_t$ is the configuration after the $t$-th iteration of the for-loop. 
For any $1\leq t \leq T + 1$, consider the $t$-th invocation of $\sample$ and define the following two bad events:
\begin{itemize}
\item $\+B_{\mathrm{com}}(t)$: in the $t$-th invocation, $\*X_S$ is returned by~\Cref{line-bad-1} in \Cref{alg:sample};
\item $\+B_{\mathrm{rej}}(t)$: in the $t$-th invocation, $\*X_S$ is returned by~\Cref{line-bad-2} in \Cref{alg:sample}.
\end{itemize}
Note that the $(T+1)$-th invocation of the subroutine $\sample$ is in \Cref{line-last} in \Cref{alg:main}.
Let $H=(V,\+E)$ denote the input hypergraph of \Cref{alg:main}. 

\begin{lemma}\label{lemma-subroutine-detail}
For any $1 \leq t \leq T+1$, the $t$-th invocation of the subroutine $\sample\tp{H,h,S,\*Y_{\Lambda},\zeta}$, where $h$ is given by (\ref{equ:projection_construct}), satisfies
\begin{enumerate}
\item \label{property-sub-1} the running time of the subroutine is bounded by $O\left(|S|\Delta^2k^5\left(\frac{n\Delta}{\zeta}\right)^{\frac{1}{1000\eta}}\log^3\left(\frac{n\Delta q}{\zeta}\right)\right)$;
\item \label{property-sub-2} conditional on neither $\+B_{\mathrm{com}}(t)$ nor $\+B_{\mathrm{rej}}(t)$ occurs, the subroutine returns a perfect sample from $\mu^{\*Y_\Lambda}_S$;
\item \label{property-sub-3}
  if $q \geq 100\Delta^{\frac{2}{k-3}}$ and $k \geq 20$, then
    $\Pr[\+B_{\mathrm{rej}}(t)] \leq \zeta$;
\item \label{property-sub-4}
  for any $\delta > 0$, if $k \geq \frac{20(\delta+1)}{\delta}$, $q \geq 100\Delta^{\frac{2+\delta}{k-4/\delta-3}}$, and $H$ is simple, then
    $\Pr[\+B_{\mathrm{com}}(t)] \leq \zeta$.	
\end{enumerate}
\end{lemma}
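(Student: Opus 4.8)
\emph{Plan.} The four items have quite different characters, and I would prove them in the order \ref{property-sub-2}, \ref{property-sub-1}, \ref{property-sub-3}, \ref{property-sub-4}, the last being by far the hardest. For item~\ref{property-sub-2} I would simply unwind the definition of $\sample$ and invoke correctness of rejection sampling: conditioned on neither bad event, the subroutine exits through \Cref{line-good} with $\*X=\biguplus_{i=1}^{\ell}\*X_i$, where $\*X_i$ is the first proper colouring among $R$ i.i.d.\ draws from $\bigotimes_{v\in V_i}\mathrm{Unif}(Q_v)$ ($Q_v=h^{-1}(Y_v)$ for $v\in\Lambda$, $Q_v=[q]$ otherwise); by exchangeability of the draws, conditioned on at least one being proper, $\*X_i$ is uniform over proper list colourings of $H_i$ consistent with $\*Y_{\Lambda\cap V_i}$, i.e.\ distributed as $\mu_i^{\*Y_{\Lambda\cap V_i}}$. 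Since the draws for distinct $i$ are independent, $S\subseteq\bigcup_{i\le\ell}V_i$, and isolated vertices of $S$ are edgeless one-vertex components, the product factorisation $\mu^{\*Y_\Lambda}=\prod_{i=1}^{m}\mu_i^{\*Y_{\Lambda\cap V_i}}$ recalled before the lemma gives $\*X_S\sim\mu^{\*Y_\Lambda}_S$. Item~\ref{property-sub-1} is a routine accounting of the pseudocode (needing no hypothesis on $q$ or $k$): a BFS from $S$ aborting any component once it has more than $4\Delta k^3\log(n\Delta/\zeta)$ hyperedges touches $O(\abs S\Delta k^3\log(n\Delta/\zeta))$ hyperedges at $O(\Delta k)$ cost each; when the size test passes, each relevant component has $O(\Delta k^4\log(n\Delta/\zeta))$ vertices, so one $\Rejsample$ trial costs $O(\Delta k^4\log(n\Delta/\zeta)\log q)$, and multiplying by the $R=\ceil{10(n\Delta/\zeta)^{1/(1000\eta)}\log(n/\zeta)}$ trials and the $\le\abs S$ components, then absorbing logarithms, gives the bound.

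For item~\ref{property-sub-3}, I would note that $\+B_{\mathrm{rej}}(t)$ occurs only after the size test has passed, so it suffices to bound, for a fixed component $H_i$ with $\le N:=4\Delta k^3\log(n\Delta/\zeta)$ hyperedges, the probability $\Rejsample$ returns $\perp$. One trial succeeds with probability $p_i=\Pr_{\+D_i}[\*X\text{ is proper}]$ with $\+D_i=\bigotimes_{v\in V_i}\mathrm{Unif}(Q_v)$; because $H$ is $k$-uniform, $\abs{Q_v}\in\{q,\floor{q/s}\}$ with $\floor{q/s}=\Theta(\sqrt q)$, and a monochromatic value of a hyperedge lies in the relevant small list, each hyperedge is monochromatic under $\+D_i$ with probability $O(q^{-(k-1)/2})$. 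I would then apply \Cref{thm:lll} with $x(e)$ of order $q^{-(k-1)/2}$ (admissible when $q\ge100\Delta^{2/(k-3)}$ and $k\ge20$) to get $p_i\ge\prod_e(1-x(e))\ge\exp\!\big(-O(\Delta k^3q^{-(k-1)/2}\log(n\Delta/\zeta))\big)$, and use $\eta\ge1$ (a consequence of $q\ge100\Delta^{2/(k-3)}$) to check $p_i\ge\tfrac1{10}(n\Delta/\zeta)^{-1/(1000\eta)}$; then $(1-p_i)^R\le e^{-p_iR}\le\zeta/n$, and a union bound over the $\le\abs S\le n$ components finishes.

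Item~\ref{property-sub-4} is the crux, and here the simple-hypergraph hypothesis enters; I would argue in three steps. \emph{(Structure.)} If $\+B_{\mathrm{com}}(t)$ occurs, a connected component of $H^{\*Y_\Lambda}$ meeting $S$ has more than $N$ hyperedges; from its connected line graph, and using that $H$ is simple, I would carve out a $2$-block-tree (\Cref{def:2-block-tree}) with block size $\theta$ of order $1/\delta$ and $m$ blocks, where $m$ is a suitable quantity that is $\Omega(N/\mathrm{poly}(k,\Delta))$ and hence $\Omega(\log(n\Delta/\zeta))$, rooted at some hyperedge of $H$, every hyperedge of which is unsatisfied by $\*Y_\Lambda$. \emph{(Probability of a fixed $\+T$.)} I would first prove, by induction along the $T+1$ invocations, that at every step each coordinate of the image $\*Y$, conditioned on all others, is within a factor $1+o(1)$ of uniform on $[s]$: the base case is the uniform initialisation, and the inductive step applies \Cref{lemma-local-uniform-colour} to the conditional list-colouring distribution $\mu^{\*Y_{V\setminus\{u\}}}$ (list sizes $q$ and $\Theta(\sqrt q)$, so its hypothesis holds for a suitable $r=\mathrm{poly}(k)$ under our bound on $q$), together with the $O(\zeta)$ accuracy of $\sample$; hence $\Pr[\*Y_W=\tau]\le((1+o(1))/s)^{\abs W}$ for all $W\subseteq V$ and $\tau\in[s]^W$. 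Then, since a connected block being unsatisfied forces its $\Lambda$-vertices into $O(\theta)$ image-monochromatic groups spanning (by simplicity) at least $\theta k-\binom{\theta}{2}$ vertices, and distinct blocks of a $2$-block-tree are vertex-disjoint, summing over the common value of each group gives $\Pr[\+T\text{ unsatisfied}]\le\big((1+o(1))^{\theta k}q^{-(\theta k-O(\theta^2))/2}\big)^m$. \emph{(Counting.)} An explicit encoding of a $2$-block-tree — revealing, block by block, its $\theta$ internal hyperedges (as a connected set) and the hyperedge linking it to its parent — shows the number of $2$-block-trees with $m$ blocks rooted at a fixed hyperedge is at most $(ck\Delta)^{(\theta+1)m}$. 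Combining, $\Pr[\+B_{\mathrm{com}}(t)]\le n\Delta\cdot(ck\Delta)^{(\theta+1)m}\big((1+o(1))^{\theta k}q^{-(\theta k-O(\theta^2))/2}\big)^m$; choosing $\theta$ of order $1/\delta$ (so that $\theta k-O(\theta^2)=\theta(k-O(1/\delta))$) and plugging in $q\ge100\Delta^{(2+\delta)/(k-4/\delta-3)}$ and $k\ge20(1+\delta)/\delta$ makes the per-block factor strictly below $1$, in fact small enough that the $n\Delta$ is swamped once $m=\Omega(\log(n\Delta/\zeta))$, so $\Pr[\+B_{\mathrm{com}}(t)]\le\zeta$.

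\emph{Where the difficulty lies.} Items~\ref{property-sub-1}--\ref{property-sub-3} are routine given \Cref{lemma-local-uniform-colour} and \Cref{thm:lll}. The hard part is item~\ref{property-sub-4}, on two fronts. Probabilistically, one has to push near-uniformity of the image through the full \emph{approximate} systematic-scan dynamics and extract enough independence across vertex-disjoint blocks while carefully handling the single non-pinned coordinate and hyperedges whose vertices are intertwined; the delicate point is that the $\theta k-O(\theta^2)$ distinct vertices per block coming from simplicity must out-weigh the $\approx(\theta+1)k$ per block lost to counting, and balancing this is exactly what forces the awkward exponent $\tfrac{2+\delta}{k-4/\delta-3}$ and the relation $k\ge20(1+\delta)/\delta$ between $k$, $\delta$ and the block size. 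Combinatorially, finding an encoding of $2$-block-trees of length only $O((\theta+1)m\log(k\Delta))$ — avoiding the $\Delta^{\Omega(km)}$ blow-up one would get from naively enumerating $2$-trees — is the second new ingredient, and the step I expect to be most technical.
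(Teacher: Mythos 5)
Your sketches of items~\ref{property-sub-1},~\ref{property-sub-2}, and~\ref{property-sub-3} are correct and essentially the paper's arguments (the DFS accounting plus $R$ i.i.d.\ trials, the exchangeability of rejection sampling, and a Lov\'asz-local-lemma lower bound on the per-trial acceptance probability). Your handling of the randomness of $\*Y$ via repeated use of \Cref{lemma-local-uniform-colour} in item~\ref{property-sub-4} is also in the same spirit as the paper's \Cref{lemma-random-Y}, and your proposed encoding of $2$-block-trees, while vague, targets the same bound as \Cref{lemma-block-tree-count}.

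The genuine gap is the structural step in item~\ref{property-sub-4}: you assert that from any connected subgraph of $L_H$ with $M>N=4\Delta k^3\log(n\Delta/\zeta)$ vertices one can carve out a $2$-block-tree with $m=\Omega(N/\mathrm{poly}(k,\Delta))$ blocks, \emph{and} that this forces $m=\Omega(\log(n\Delta/\zeta))$. The second implication requires the $\mathrm{poly}(k,\Delta)$ to be at most $O(\Delta k^3)$, but the natural greedy extraction (pick a block of $\theta$ vertices, delete it together with its $L_H$-neighbourhood, then delete the resulting sub-$\theta$-sized components) can lose $\Theta(\theta^2 d^2)$ vertices per block where $d=k\Delta$ is the maximum degree of $L_H$; the dominated sub-$\theta$-sized fragments, adjacent to the removed neighbourhood, contribute the extra factor. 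No lower bound of the required strength on the number of blocks is proved, and the conditions of the lemma do not force one. This is precisely why the paper does not assert such a bound: \Cref{alg:subcomponent} may well return $\ell<M/(2\theta k^2\Delta)$, and the paper splits $\mathrm{Con}_u(M)$ into $\mathrm{Con}_u^{(1)}(M)$ (many blocks, your union bound over $2$-block-trees via \Cref{lemma-block-tree-down} and \Cref{lemma-block-tree-count}) and $\mathrm{Con}_u^{(2)}(M)$ (few blocks), and for the latter it uses \Cref{proposition-component}, item~\ref{eq-block-tree-2}: after deleting $\bigcup_i\Gamma_G(C_i)$ the remaining graph splits into components $D_j$ each of size at most $\theta$, mutually at $L_H$-distance at least $2$, whose total size $M-\ell\theta k\Delta\ge M(1-1/(2k))$ supplies $\ge(k-\theta)M/2$ distinct constrained $H$-vertices; a union bound over connected sets (not over $2$-block-trees) then finishes. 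Without this second case your argument does not close, so you need either to prove a quantitatively sufficient lower bound on the extractable $2$-block-tree size (which the paper avoids, and which is not obvious) or to add the analogue of the paper's small-component analysis.
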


\Cref{lemma-subroutine-detail} is proved in \Cref{section-sample}~and~\ref{section-comp}.

Now we are ready to prove our main result, \Cref{theorem-general-main}.
\begin{proof}[Proof of \Cref{theorem-general-main}]
First note that the condition in \Cref{theorem-general-main} implies all the conditions in \Cref{lemma-mixing} and \Cref{lemma-subroutine-detail}.
Denote the output of \Cref{alg:main} by $\*X_{\-{alg}}$. 
To prove the correctness of our algorithm, the goal is to show
\[
  \DTV{\*X_{\-{alg}}}{\mu}\leq\epsilon.
\]
We first consider an idealized algorithm 
which, instead of simulating the transitions by the $\sample$ subroutine,
is able to run the ideal Glauber dynamics to obtain $\*Y_{\-{ideal}}$ 
before sampling $\*X_\-{ideal}$ from the distribution $\mu^{\*Y_{\-{ideal}}}$. 
By \Cref{lemma-mixing}, running this systematic scan 
for $T=\lceil 50n\log\frac{2n\Delta}{\epsilon}\rceil$ steps 
ensures $\DTV{\*Y_{\-{ideal}}}{\nu}\leq\frac{\varepsilon}{2}$. 
On the other hand, a perfect sample $\*X\sim \mu$ can be drawn by
sampling $\*Y\sim \nu$ first, followed by sampling $\*X\sim \mu^{\* Y}$ based on that. 
The upper bound on total variation distance allows us to couple the perfect $\*Y$ and $\*Y_{\-{ideal}}$
such that $\*Y\neq \*Y_{\-{ideal}}$ with probability no more than $\frac{\epsilon}{2}$.
Conditional on $\*Y=\*Y_{\-{ideal}}$, 
the samples $\*X$ and $\*X_{\-{ideal}}$ on original distribution can be perfectly coupled.
Together with the coupling lemma (\Cref{lem:coupling_lemma}), we have
\[
  \DTV{\*X_{\-{ideal}}}{\mu}\leq\frac{\epsilon}{2}.
\]

Hereinafter, we couple the idealized algorithm with \Cref{alg:main}. 
The nature of systematic scan warrants that both algorithms pick the same vertex in the same step on \Cref{line-choose-v}.
We then try to couple the vertex update as much as possible. 
That is, at Step $t$, if none of $\+B_{\mathrm{com}}(t)$ or $\+B_{\mathrm{rej}}(t)$ happens, 
then the output of $\sample$ subroutine at \Cref{line-scan-1} in \Cref{alg:main} is perfect, 
and hence we can couple it with the idealized systematic scan perfectly. 
The remaining coupling error emerges from the occurrence of $\+B_{\mathrm{com}}(t)$ or $\+B_{\mathrm{rej}}(t)$. 
By the coupling lemma (\Cref{lem:coupling_lemma}) and \Cref{lemma-subroutine-detail}, we have
\[
  \DTV{\*X_{\-{alg}}}{\*X_{\-{ideal}}}\leq\Pr\left[\bigvee_{i=1}^{T}\left(\+B_{\mathrm{com}}(t)\lor \+B_{\mathrm{rej}}(t)\right)\right]=2T\zeta=\frac{\epsilon}{2}
\]
where the last equality is due to the selection of $\zeta$ in \Cref{alg:main}.
Finally, a straightforward application of triangle inequality yields
\[
  \DTV{\*X_{\-{alg}}}{\mu}\leq\DTV{\*X_{\-{alg}}}{\*X_{\-{ideal}}}+\DTV{\*X_{\-{ideal}}}{\mu}=\epsilon
\]
as desired. 

There are $T+1$ invocations to the $\sample$ subroutine in total, 
with the first $T$ calls each costing
\[
  T_{\mathsf{step}}\defeq O\left(\Delta^2k^5\left(\frac{n\Delta}{\epsilon/4T}\right)^{\frac{1}{1000\eta}}\log^3\left(\frac{n\Delta q}{\epsilon/4T}\right)\right)
\]
and the final call on \Cref{line-last} costing
\[
  T_{\mathsf{final}}\defeq O\left(n\Delta^2k^5\left(\frac{n\Delta}{\epsilon/4T}\right)^{\frac{1}{1000\eta}}\log^3\left(\frac{n\Delta q}{\epsilon/4T}\right)\right). 
\]
Summing up, the total running time is
\begin{equation}\label{equ:total_running_time}
  T_{\mathsf{total}}=T\cdot T_{\mathsf{step}}+T_{\mathsf{final}}=O\left((T+n)\Delta^2k^5\left(\frac{n\Delta}{\epsilon/4T}\right)^{\frac{1}{1000\eta}}\log^3\left(\frac{n\Delta q}{\epsilon/4T}\right)\right)
\end{equation}
where
\begin{equation}\label{equ:running_time_expr1}
  T=50n\log\frac{2n\Delta}{\epsilon}\qquad\text{ and }\qquad\eta=\frac{1}{\Delta}\left(\frac{q}{100}\right)^{\frac{k-3}{2}}. 
\end{equation}
Note that the condition $q\geq100\tp{\frac{\Delta}{\alpha}}^{\frac{2+\delta}{k-4/\delta-4}}$ implies
\begin{align*}
  \eta=\frac{1}{\Delta}\left(\frac{q}{100}\right)^{\frac{k-3}{2}}\geq\frac{1}{\Delta}\left(\tp{\frac{\Delta}{\alpha}}^{\frac{2+\delta}{k-4/\delta-4}}\right)^{\frac{k-3}{2}}\geq\frac{1}{\alpha}\Delta^{\frac{(k-3)(1+\delta/2)}{k-4/\delta-4}-1}\geq\frac{1}{\alpha}
\end{align*}
and hence
\begin{equation}\label{equ:running_time_expr2}
  \left(\frac{n\Delta}{\epsilon/4T}\right)^{\frac{1}{1000\eta}}\leq\left(\frac{200n^2\Delta\log\frac{2n\Delta}{\epsilon}}{\epsilon}\right)^{\alpha/1000}=O\left(\left(\frac{n\Delta}{\epsilon}\right)^{\alpha/100}\right). 
\end{equation}
Plugging (\ref{equ:running_time_expr1}) and (\ref{equ:running_time_expr2}) back into (\ref{equ:total_running_time}), we get
\[
  T_{\mathsf{total}}=O\left(\Delta^2k^5n\left(\frac{n\Delta}{\epsilon}\right)^{\alpha/100}\log^4\left(\frac{n\Delta q}{\epsilon}\right)\right)
\]
as desired. 
\end{proof}

\section{Analysis of the \texorpdfstring{$\sample$}{Sample} subroutine}
\label{section-sample}
In this section, we analyse the subroutine $\sample$ and  prove \Cref{lemma-subroutine-detail}.
Properties~\ref{property-sub-1},~\ref{property-sub-2}, and~\ref{property-sub-3} in \Cref{lemma-subroutine-detail} can be proved using techniques developed in~\cite{FGYZ21,FHY20}. 
The proofs are given in \Cref{sec-subroutine-1} and \Cref{sec-subroutine-2}. 
We remark that proofs of the first three properties in \Cref{lemma-subroutine-detail} hold for general hypergraphs, not necessarily simple hypergraphs. 
It is property~\ref{property-sub-4} that requires a simple hypergraph as the input. 
The proof of property~\ref{property-sub-4} is quite involved and is left to~\Cref{section-comp}. 


\subsection{Proof of running time and correctness}\label{sec-subroutine-1}

\begin{proof}[Proof of Property~\ref{property-sub-1} and \ref{property-sub-2}, \Cref{lemma-subroutine-detail}]
Property~\ref{property-sub-2} is straightforwardly implied by the nature of rejection sampling. We now deal with Property~\ref{property-sub-1}. 

Assume all hypergraphs 
are stored as incidence lists. 
We first calculate the time cost of \Cref{line-find-connected}. 
Starting from each $v\in S$, we perform depth-first search (DFS) on $H$, 
and for each edge we encounter, we can check whether it is in $H^{Y_\Lambda}$ in time $O(k)$. 
This procedure can work simultaneously with \Cref{line-bad-1-if}, 
that once the current component reaches size $4\Delta k^3\log\left(\frac{n\Delta}{\zeta}\right)$, 
the subroutine exits in \Cref{line-bad-1}. 
The number of visits by DFS itself will be upper-bounded by 
the number of edges times maximum edge degree which is no larger than $\Delta k$. 
In all, the time complexity of DFS has a crude upper bound
\[
  T_{\mathsf{DFS}}=O\left(|S|\cdot k\cdot 4\Delta k^3\log\left(\frac{n\Delta}{\zeta}\right)\cdot \Delta k\right)=O\left(|S|\Delta^2 k^5\log\left(\frac{n\Delta}{\zeta}\right)\right).
\]
For the time cost of \Cref{line-rj-sample}, be aware $\ell$ is at most $|S|$. 
Suppose the cost of sampling a uniformly random colour from a colour list $Q \subseteq [q]$ is $O(\log q)$.
Each invocation of $\Rejsample$ contains $R$ rounds, 
each of which colours the subgraph $H_i$ and check if it is a proper colouring. 
The cost depends to the number of vertices in $H_i$, 
which is upper-bounded by $k\cdot 4\Delta k^3\log\left(\frac{n\Delta}{\zeta}\right)$. 
The total cost is then
\[
  T_{\mathsf{Rej}}=O\left(|S|\cdot R\cdot \Delta k^4\log\left(\frac{n\Delta}{\zeta}\right)\log q\right)\leq O\left(|S|\Delta k^4\left(\frac{n\Delta}{\zeta}\right)^{\frac{1}{1000\eta}}\log^3\left(\frac{n\Delta q}{\zeta}\right)\right). 
\]
The total running time of $\sample$ is hence given by
\[
  T_{\sample}=T_{\mathsf{DFS}}+T_{\mathsf{Rej}}=O\left(|S|\Delta^2 k^5\left(\frac{n\Delta}{\zeta}\right)^{\frac{1}{1000\eta}}\log^3\left(\frac{n\Delta q}{\zeta}\right)\right). \qedhere
\]
\end{proof}


\subsection{Bound the probability of \texorpdfstring{$\+B_{\mathrm{rej}}(t)$}{Brej(t)}}\label{sec-subroutine-2}

\begin{proof}[Proof of Property~\ref{property-sub-3}, \Cref{lemma-subroutine-detail}]
By the definition of~$\eta$ in~\eqref{eq-def-eta} and the condition in \Cref{lemma-subroutine-detail}, it holds that
\begin{align*}
  q = 100(\eta\Delta)^{\frac{2}{k-3}}, \quad \eta \geq 1, \quad  \text{and} \quad q \ge 100.	
\end{align*}
Consider \Cref{line-rj-sample} in \Cref{alg:sample}.
In the rejection sampling, the input is a hyperedge $H=(V,\+E)$ with at most $4\Delta k ^3 \log \tp{\frac{n\Delta}{\zeta}}$ hyperedges.
The size of the color list for each vertex $v \in V$ satisfies 
\begin{align*}
\abs{Q_v} \geq \ftp{\frac{q}{s}} = \ftp{\frac{q}{\ctp{q}}} \overset{(\ast)}{\geq} \frac{4}{5}\sqrt{q}, 
\end{align*}
where inequality $(\ast)$ holds because $q \geq 100$. 

Let $\+D$ denote the product distribution that each $v \in V$ samples a colour from $Q_v$ uniformly at random.
For each hyperedge $e \in \+E$, let $\+B_e$ denote the bad event that $e$ is monochromatic.
Note that $|Q_v| \leq q$ for all $v \in V$. 
We have for any $e \in \+E$,
\begin{align*}
\Pr_{\+D}[\+B_e] \leq \frac{q}{(\frac{4}{5}\sqrt{q})^{k-1}} = \tp{\frac{5}{4}}^{k-1}q^{\frac{3-k}{2}}	=  \tp{\frac{5}{4}}^{k-1} 100^{\frac{3-k}{2}} \frac{1}{\eta\Delta} \leq \frac{1}{10000\mathrm{e} k^3\eta \Delta},
\end{align*}
where the last inequality holds because $k \geq 20$. For each $e \in \+E$, define $x(e) = \frac{1}{10000 \eta \Delta k^3}$. Note that $\eta \geq 1$. It is straightforward to verify that 
\begin{align*}
\Pr_{\+D}[\+B_e] \leq x(e) \prod_{e':\+B_{e'} \in \Gamma(B_e)}\tp{1-x(e')}.	
\end{align*}

By Lov\'asz local lemma in~\Cref{thm:lll}, it holds that 
\begin{align*}
\Pr_{\+D}\sqtp{\bigwedge_{e \in \+E}\overline{\+B(e)}} \geq \tp{1 - \frac{1}{10000 \eta \Delta k^3}}^{\Delta k ^3 \log \tp{\frac{n\Delta}{\zeta}}}	\geq \exp\tp{-\frac{\log\tp{\frac{n\Delta}{\zeta}} }{5000\eta}} \geq \tp{\frac{\zeta}{n\Delta}}^{\frac{1}{1000\eta}}.
\end{align*}
The rejection sampling repeats for $R = \ctp{10\tp{\frac{n\Delta}{\zeta}}^{\frac{1}{1000\eta}}\log \frac{n}{\zeta}}$ times. 
Hence, the probability that the rejection sampling fails on one connected component is at most 
\begin{align*}
\tp{1 - \tp{\frac{\zeta}{n\Delta}}^{\frac{1}{1000\eta}}}^{R} \leq \exp\tp{-10\log\frac{n}{\zeta}} \leq \tp{\frac{\zeta}{n}}^2. 
\end{align*}
Since there are at most $n$ connected components, by a union bound, we have
\begin{align*}
  \Pr[\+B_{\mathrm{rej}}(t)]&\leq \zeta. \qedhere
\end{align*}
\end{proof}

\section{Analysis of connected components}
\label{section-comp}
In this section, we prove Property~\ref{property-sub-4} in \Cref{lemma-subroutine-detail}.
We assume that the input hypergraph $H$ is simple in this section.
Fix $1\leq t \leq T+1$. Consider the $t$-th invocation of the subroutine $\sample$. 
If $1\leq t \leq T$, we use $v_t$ to denote the vertex picked by the $t$-th step of the systematic scan, i.e. $v_t$ is the vertex with label $(t \mod n)$.
Recall that $\*Y_t \in [s]^V$ is the random configuration generated by \Cref{alg:main} after the $t$-th iteration of the for-loop.
Denote
\begin{align}\label{eq-def-Y}
\Lambda = \begin{cases}
 V \setminus \{v_t\} &\text{if } 1\leq t \leq T\\
 V &\text{if } t = T+1
 \end{cases}\quad and \quad
 \*Y  = \*Y_{t-1}(\Lambda),
\end{align}
so that the input partial configuration to $\sample$ is $\*Y$ (see \Cref{alg:main}).
Hence, we consider the subroutine $\sample\tp{H,h,S,\*Y,\zeta}$, where $\*Y \in [s]^\Lambda$ is a random configuration.

Let $H = (V, \+E)$ denote the input simple hypergraph.
Since $\*Y \in [s]^\Lambda$ is a random configuration, $H^{\*Y}$ is a random hypergraph, where $H^{\*Y}$ is obtained by removing all the hyperedges in $H$ satisfied by $\*Y$. 
Fix an arbitrary vertex $v \in V$.
We use $H_v^{\*Y} = (V^{\*Y}_v, \+E^{\*Y}_v)$ to denote the connected component in $H^{\*Y}$ that contains the vertex $v$. Note that $\+E^{\*Y}_v$ can be an empty set.
A hyperedge $e \in \+E$ is \emph{incident} to $v$ in the hypergraph $H$ if $v \in e$.
We prove the following lemma, which implies property~\ref{property-sub-4}.

\begin{lemma}\label{lemma-bound-compnent}
  For any $\delta > 0$, if $k \geq \frac{20(1+\delta)}{\delta}$, $q \geq 100\Delta^{\frac{2+\delta}{k-4/\delta-3}}$, and $H$ is simple,
  then for any $v \in V$, any $e$ incident to $v$ in $H$, it holds that 
  \begin{align*}
    \Pr_{\*Y}\left[ e \in \+E^{\*Y}_v \land |\+E^{\*Y}_v| \geq 4 \Delta k^3 \log\tp{\frac{n\Delta}{\zeta}}\right] \leq \frac{\zeta}{n\Delta}.	
  \end{align*}
\end{lemma}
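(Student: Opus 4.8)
The plan is to bound the probability that the component $H_v^{\*Y}$ containing $v$ is both large and contains a fixed incident edge $e$, by exhibiting a combinatorial witness inside any such component and then union-bounding over all possible witnesses weighted by their occurrence probabilities. First I would recall the basic probabilistic estimate: for a fixed hyperedge $f$, the probability (over the random image $\*Y$ on $\Lambda$) that $f$ survives into $H^{\*Y}$ (i.e. $f$ is monochromatic under $\*Y_\Lambda$, equivalently all vertices of $f$ lying in $\Lambda$ get the same bucket) is small — roughly $s^{-(k-1)} \approx q^{-(k-1)/2}$ up to the local-uniformity correction from \Cref{lemma-local-uniform-colour}, since $\*Y$ is distributed as the projection of $\mu$. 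Crucially, \emph{distinct} surviving edges that are vertex-disjoint on their $\Lambda$-parts give independent-ish events, so a collection of edges spanning $D$ distinct vertices survives with probability at most roughly $(\mathrm{e}/s)^{D - (\text{number of edges})}$ or so; making this precise via the conditional local uniformity bound applied sequentially is the first technical step.

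Next I would set up the witness structure. Rather than enumerating $2$-trees of surviving hyperedges (which the introduction explains are too numerous here), I would use the ``$2$-block-trees'' of \Cref{def:2-block-tree}: group the surviving edges of the large component into blocks of $\theta$ connected hyperedges each, arranged so the blocks form a $2$-tree in the line-graph-power sense. Because $H$ is simple, any two hyperedges share at most one vertex, so a block of $\theta$ connected hyperedges contains at least $\theta k - \binom{\theta}{2}$ distinct vertices; choosing $\theta$ of order a small constant times $k/\delta$ or so (matching the $4/\delta$ in the exponent) keeps this at least $(1-o(1))\theta k$, i.e. a constant fraction times the trivial bound, which is what drives the improved colour count. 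Given that $|\+E_v^{\*Y}| \ge 4\Delta k^3 \log(n\Delta/\zeta)$, I can greedily extract such a $2$-block-tree of $M \approx \Theta(\Delta k^2 \log(n\Delta/\zeta))$ blocks rooted at (a block containing) the edge $e$; the standard greedy argument — repeatedly peel off a block, delete its $2$-neighbourhood in the appropriate power graph, which removes only $O(\Delta\theta\cdot(\text{stuff}))$ edges — certifies that $M$ is at least the component size divided by a $\mathrm{poly}(\Delta,k)$ factor.

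Then comes the union bound over $2$-block-trees of size $M$ rooted at $e$. Here I would produce an explicit encoding: a $2$-block-tree is a tree on $M$ blocks where consecutive blocks are at line-graph distance $\le 2$, and each block is a connected set of $\theta$ edges; one can encode the tree shape with $4^M$ (or $C^M$ for a constant $C$) choices, encode which "slot" each child block hangs off with $(\Delta k \cdot k\cdot\text{something})^{O(M)}$ choices since there are at most $O(\Delta k^2)$ edges within line-graph distance $2$ of a given block, and encode each block internally by a spanning-tree-style walk on its $\theta$ edges, costing $(\Delta k)^{O(\theta)}$ per block. Altogether the number of $2$-block-trees of $M$ blocks rooted at $e$ is at most $(\Delta k)^{O(\theta M)} = (\Delta k)^{O(\theta M)}$. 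Each such structure survives with probability at most $\bigl(\mathrm{e}\,q^{-(k-1)/2}\bigr)^{\#\text{edges}}$ corrected down to roughly $q^{-\frac{1}{2}(\text{distinct vertices} - \#\text{edges})}$; using distinct vertices $\ge M\theta k - O(M\theta^2) - O(M)$ (the $O(M)$ accounting for vertices shared between different blocks of the tree, again at most one per block-adjacency since $H$ is simple) this is at most $q^{-\frac{1}{2}M\theta(k - O(\theta) - O(1))}$. The condition $k \ge 20(1+\delta)/\delta$ and $q \ge 100\Delta^{(2+\delta)/(k - 4/\delta - 3)}$ is exactly what makes the product $(\Delta k)^{O(\theta M)} \cdot q^{-\frac{1}{2}M\theta(k-O(\theta))}$ decay like $(\text{something} < 1)^M$, and since $M = \Theta(\Delta k^2\log(n\Delta/\zeta))$ this beats $\zeta/(n\Delta)$.

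The main obstacle I anticipate is twofold and both parts are bookkeeping-heavy. First, getting the survival probability of a $2$-block-tree right: the events ``edge $f$ is monochromatic under $\*Y$'' are not independent (the image $\*Y$ comes from $\mu$, not a product measure), so I must invoke \Cref{lemma-local-uniform-colour} — or rather a conditional version of it — to argue that conditioning on some edges being monochromatic only inflates the probability of another edge being monochromatic by a bounded factor, and I need to apply this simultaneously to $\Omega(M\theta k)$ vertices while keeping the accumulated $\mathrm{e}^{O(\cdot/r)}$ factors under control by choosing $r$ appropriately. Second, the counting bound for $2$-block-trees must be tight enough: a naive bound of $(\Delta k)^{O(\theta M)}$ with a bad constant in the exponent would force a worse condition on $q$, so the encoding has to be genuinely efficient — in particular exploiting that within a block the $\theta$ edges are \emph{connected}, so they can be listed by a DFS using only $O(\theta\log(\Delta k))$ bits rather than $\theta\log(|\+E|)$ bits — and I need to verify the constant in ``$k - O(\theta)$'' is small enough to match $4/\delta$. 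Reconciling the $\theta$-dependence across the vertex-count lower bound, the per-block encoding cost, and the target exponent $(2+\delta)/(k-4/\delta-3)$ is where the delicate balancing happens.
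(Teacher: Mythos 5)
Your proposal correctly identifies the $2$-block-tree structure, the encoding-based counting of $2$-block-trees, the distinct-vertex count $\ge \theta k - \binom{\theta}{2}$ that simplicity buys, and the role of local uniformity in bounding the probability that a given set of edges survives. These are all ingredients of the paper's proof. However, there is a genuine gap in the step where you assert the witness always exists.

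You write that ``the standard greedy argument --- repeatedly peel off a block, delete its $2$-neighbourhood\ldots --- certifies that $M$ is at least the component size divided by a $\mathrm{poly}(\Delta,k)$ factor.'' This is \emph{false} for block size $\theta > 1$, and this is exactly the new difficulty the paper has to confront. For ordinary $2$-trees ($\theta = 1$) the greedy works because any surviving vertex is a valid block, so each peeling step consumes at most $1 + d$ vertices. But with $\theta > 1$, when you remove a block and its neighbourhood the remaining graph may \emph{shatter} into connected pieces of size $< \theta$, none of which can serve as the next block; these orphaned vertices are lost without producing any block. A simple example: take $M/\theta$ paths of length $\theta - 1$ all attached to a single hub vertex; after peeling the one block through the hub, every other path has $< \theta$ surviving vertices and gets discarded, so you extract exactly one block no matter how large $M$ is. The paper handles this with an explicit two-case split (\Cref{alg:subcomponent} and \eqref{eq-def-Com12}): if the extraction yields $\ell \ge M/(2\theta k^2 \Delta)$ blocks it runs your union bound over $2$-block-trees, but if $\ell$ is small it instead observes that the leftover pieces $D_1,\ldots,D_m$ of size $\le \theta$ each still carry $\ge k - \theta$ fresh $\Lambda$-vertices per edge (by simplicity), so the survival probability of the \emph{whole} component $C$ is tiny, and a direct union bound over all $\binom{\ell-1}{r}$-free connected subsets of size $M$ (via \Cref{lem:connect_count}) closes the case. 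Without this second branch your argument does not go through: you cannot guarantee a $2$-block-tree of the claimed size exists inside the large component.

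One smaller inaccuracy: you describe $\theta$ as ``of order a small constant times $k/\delta$,'' but the paper takes $\theta = \lceil 4/\delta \rceil$, independent of $k$; it is the ratio $\theta/k$ that must be small, and $k \ge 20(1+\delta)/\delta$ ensures that. You also casually treat $\*Y$ as the projection of $\mu$; in fact $\*Y$ is the (not-yet-mixed) configuration produced by the scan at the time of the call, and the paper needs \Cref{lemma-random-Y} --- a chain-rule-plus-local-uniformity argument over the update history --- rather than a direct appeal to \Cref{lemma-local-uniform-colour}. You do flag this as a thing to be handled, so I count it as an acknowledged technicality rather than a gap, but the shattering issue is a substantive omission.
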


We now show that property~\ref{property-sub-4} is a corollary of \Cref{lemma-bound-compnent}. 
Since there are at most $\Delta$ hyperedges incident to $v$, by a union bound, we have for all $v \in V$,
\begin{align*}
\Pr_{\*Y}\sqtp{|\+E^{\*Y}_v| \geq 4 \Delta k^3 \log\tp{\frac{n\Delta}{\zeta}}} \leq  \sum_{e \ni v} \Pr_{\*Y}\left[ e \in \+E^{\*Y}_v \land |\+E^{\*Y}_v| \geq 4 \Delta k^3 \log\tp{\frac{n\Delta}{\zeta}}\right] \leq \frac{\zeta}{n}.
\end{align*}
By a union bound over all vertices $v \in V$, we have
\begin{align*}
\Pr_{\*Y}\sqtp{\exists v \in V \text{ s.t. } |\+E^{\*Y}_v| \geq 4 \Delta k^3 \log\tp{\frac{n\Delta}{\zeta}}} \leq \zeta.	
\end{align*}
This implies the property~\ref{property-sub-4} in \Cref{lemma-subroutine-detail}.
The rest of this section is dedicated to the proof of \Cref{lemma-bound-compnent}.

\subsection{Proof of \texorpdfstring{\Cref{lemma-bound-compnent}}{Lemma \ref{lemma-bound-compnent}}}
Denote by $L_H = (V_L,E_L) = \Lin(H)$ the line graph of $H$ (recall \Cref{def:line_graph}).
%
%
Let $e$ be the hyperedge in \Cref{lemma-bound-compnent} and let $u=u_e$ be the vertex in $L_H$ corresponding to $e$.
Let $L_H^{\*Y} = (V_L^{\*Y}, E_L^{\*Y})$ denote the line graph of $H^{\*Y}$. Note that $L_H^{\*Y}$  is random, and the randomness of $L_H^{\*Y}$ is determined by the randomness of $\*Y$.
Equivalently, the graph $L_H^{\*Y}$ can be generated as follows:
\begin{itemize}
\item remove all vertices $w \in V_L$ such that the corresponding hyperedges in $H$ are satisfied by $\*Y$; let $V_L^{\*Y} \subseteq V_L$ denote the set of remaining vertices;
\item let $L^{\*Y}_H = L_H[V_L^{\*Y}]$ be the subgraph of $L_H$ induced by $V_L^{\*Y}$.
\end{itemize}

Let $\+C \subseteq V_L$ denote the random set of all vertices in the connected component of $L_H^{\*Y}$ that contains the vertex $u$. 
If $u \notin V_L^{\*Y}$, let $\+C = \emptyset$. 
Define an integer parameter $\theta \defeq \ctp{\frac{4}{\delta}}$.
To prove \Cref{lemma-bound-compnent}, it suffices to show that 
\begin{align}
\label{eq-proof-M-target}
\forall M >\theta, \quad
\Pr_{\*Y}\sqtp{\abs{\+C} \geq M}	\leq \tp{\frac{1}{2}}^{\frac{M}{2\theta k^2 \Delta} - 1}.
\end{align}
This is because $k \geq \frac{20(\delta + 1)}{\delta} > \ctp{\frac{4}{\delta}}+1 = \theta+1$,
and setting $M = 4 \Delta k^3 \log\tp{\frac{n\Delta}{\zeta}}$ proves \Cref{lemma-bound-compnent}.

Define the following collection of subsets 
\begin{align*}
\mathrm{Con}_u(M) \defeq \left\{ C \subseteq V_L \mid u \in C \,\land\, \abs{C} = M \,\land\, L_H[C] \text{ is connected}  \right\}.
\end{align*}
It is straightforward to verify that 
\begin{align*}
\Pr_{\*Y}\sqtp{\abs{\+C} \geq M} \leq \Pr_{\*Y} \sqtp{ \exists\, C \in \mathrm{Con}_u(M) \text{ s.t. } C \subseteq V^{\*Y}_L }.	
\end{align*}
In our proof, we partition the set $\mathrm{Con}_u(M)$ into two disjoint subsets
\begin{align*}
\mathrm{Con}_u(M) = \mathrm{Con}_u^{(1)}(M) \uplus 	\mathrm{Con}_u^{(2)}(M), 
\end{align*}
and we bound the probability separately
\begin{align}
\label{eq-prob-cases}
\Pr_{\*Y}\sqtp{\abs{\+C} \geq M} \leq \Pr_{\*Y} \sqtp{ \exists\, C \in \mathrm{Con}_u^{(1)}(M) \text{ s.t. } C \subseteq V^{\*Y}_L } + 	\Pr_{\*Y} \sqtp{ \exists\, C \in \mathrm{Con}_u^{(2)}(M) \text{ s.t. } C \subseteq V^{\*Y}_L }.
\end{align}

We use \Cref{alg:subcomponent} to partition the set $\mathrm{Con}_u(M)$. Taking as an input any $C \in \mathrm{Con}_u(M)$, \Cref{alg:subcomponent} outputs an integer $\ell = \ell(C)$ and disjoint sets $C_1,C_2,\ldots,C_\ell \subseteq C$.  Let
\begin{align}\label{eq-def-Com12}
\forall C \in \mathrm{Con}_u(M), \quad C \in \begin{cases}
 \mathrm{Con}^{(1)}_u(M) &\text{if } \ell(C) \geq \frac{M}{2 \theta k^2 \Delta};\\
 \mathrm{Con}^{(2)}_u(M) &\text{if } \ell(C) < \frac{M}{2 \theta k^2 \Delta}.\\
 \end{cases}
\end{align}
%
We remark that \Cref{alg:subcomponent} is only used for analysis, and we do not need to implement this algorithm.

\begin{algorithm}[H]
\caption{\textsf{2-block-tree generator}}\label{alg:subcomponent}
\KwIn{the parameter $\delta \in (0,1)$ in \Cref{lemma-bound-compnent}, the line graph $L_H$, an integer $M>\theta$, a vertex $u$ in $L_H$, and a subset $C \in \mathrm{Con}_u(M)$}
\KwOut{an integer $\ell$ and connected subgraphs $C_1,\cdots,C_\ell\subseteq C$ 
}  
let $G = L_H[C] = (C,E_C)$ be the subgraph of $L_H$ induced by $C$\;
$\theta \gets \ctp{\frac{4}{\delta}}$, $\ell \gets 0$, $V \gets C$\;
\While{$|V| \geq \theta$}{
	$\ell \gets \ell + 1$\;
	\textbf{if} $\ell = 1$ \textbf{then} $u_\ell \gets u$\;
	\textbf{if} $\ell > 1$ \textbf{then}  let $u_\ell$ be an arbitrary vertex in $\Gamma_G(C \setminus V )$\label{line-pick-u}\;
	let $C_\ell \subseteq V$ be an arbitrary connected subgraph in $G$ such that $|C_\ell| = \theta$ and $u_\ell \in C_\ell$\label{line-pick-c}\;
	$V \gets V \setminus (C_\ell \cup \Gamma_G(C_\ell))$\label{line-delete-neighbour}\;
	\For{each connected component $G'=(V',E')$ in $G[V]$ such that $|V'| < \theta$\label{line-delete-small-if}}{
		$V \gets V \setminus V'$\label{line-delete-small}\;
	}
}
\Return{$\ell,\{C_1,C_2,\ldots,C_\ell\}$\;}
\end{algorithm}


In \Cref{line-pick-u} and \Cref{line-pick-c} of \Cref{alg:subcomponent}, we may use a specific rule to choose the vertex $u_\ell$ and the connected subgraph $C_\ell$
(e.g. pick the element with the smallest index according to an arbitrary but predetermined ordering). 
To explain this algorithm concretely, consider the first round of the \textbf{while}-loop running on the graph in \Cref{fig:2_block_tree_example}, with the parameter $\theta$ set to $3$. 

\begin{figure}[h]
  \centering

  \tikzset{every picture/.style={line width=0.75pt}} 

  \begin{tikzpicture}[x=0.75pt,y=0.75pt,yscale=-0.5,xscale=0.5]
  
  \draw    (340,150) -- (235,95) ;
  \draw    (220,90) -- (155,145) ;
  \draw  [fill={rgb, 255:red, 0; green, 0; blue, 0 }  ,fill opacity=1 ] (250,150) .. controls (250,144.48) and (254.48,140) .. (260,140) .. controls (265.52,140) and (270,144.48) .. (270,150) .. controls (270,155.52) and (265.52,160) .. (260,160) .. controls (254.48,160) and (250,155.52) .. (250,150) -- cycle ;
  \draw  [color={rgb, 255:red, 0; green, 0; blue, 0 }  ,draw opacity=1 ][fill={rgb, 255:red, 0; green, 0; blue, 0 }  ,fill opacity=1 ] (290,105) .. controls (290,99.48) and (294.48,95) .. (300,95) .. controls (305.52,95) and (310,99.48) .. (310,105) .. controls (310,110.52) and (305.52,115) .. (300,115) .. controls (294.48,115) and (290,110.52) .. (290,105) -- cycle ;
  \draw  [fill={rgb, 255:red, 0; green, 0; blue, 0 }  ,fill opacity=1 ] (290,195) .. controls (290,189.48) and (294.48,185) .. (300,185) .. controls (305.52,185) and (310,189.48) .. (310,195) .. controls (310,200.52) and (305.52,205) .. (300,205) .. controls (294.48,205) and (290,200.52) .. (290,195) -- cycle ;
  \draw    (160,150) -- (190,150) ;
  \draw    (205,145) -- (225,100) ;
  \draw    (225,200) -- (205,155) ;
  \draw    (225,80) -- (210,50) ;
  \draw    (235,80) -- (250,50) ;
  \draw    (210,250) -- (225,220) ;
  \draw    (250,250) -- (235,220) ;
  \draw    (255,140) -- (235,100) ;
  \draw    (235,200) -- (255,160) ;
  \draw    (290,105) -- (240,90) ;
  \draw    (290,195) -- (240,210) ;
  \draw    (330,150) -- (270,150) ;
  \draw    (295,110) -- (265,145) ;
  \draw    (295,190) -- (265,155) ;
  \draw    (305,110) -- (335,140) ;
  \draw    (345,145) -- (390,115) ;
  \draw    (345,155) -- (385,190) ;
  \draw    (390,120) -- (390,180) ;
  \draw    (200,285) -- (205,270) ;
  \draw    (220,285) -- (215,270) ;
  \draw    (240,285) -- (245,270) ;
  \draw    (260,285) -- (255,270) ;
  \draw    (215,35) -- (220,20) ;
  \draw    (205,35) -- (200,20) ;
  \draw    (255,35) -- (260,20) ;
  \draw    (245,35) -- (240,20) ;
  \draw    (390,100) -- (395,85) ;
  \draw    (395,110) -- (415,100) ;
  \draw    (400,215) -- (390,200) ;
  \draw    (395,190) -- (415,195) ;
  \draw    (220,210) -- (155,155) ;
  \draw  [dash pattern={on 0.84pt off 2.51pt}] (135,141) .. controls (135,137.69) and (137.69,135) .. (141,135) -- (209,135) .. controls (212.31,135) and (215,137.69) .. (215,141) -- (215,159) .. controls (215,162.31) and (212.31,165) .. (209,165) -- (141,165) .. controls (137.69,165) and (135,162.31) .. (135,159) -- cycle ;
  \draw  [fill={rgb, 255:red, 255; green, 255; blue, 255 }  ,fill opacity=1 ] (200,40) .. controls (200,34.48) and (204.48,30) .. (210,30) .. controls (215.52,30) and (220,34.48) .. (220,40) .. controls (220,45.52) and (215.52,50) .. (210,50) .. controls (204.48,50) and (200,45.52) .. (200,40) -- cycle ;
  \draw  [fill={rgb, 255:red, 255; green, 255; blue, 255 }  ,fill opacity=1 ] (240,40) .. controls (240,34.48) and (244.48,30) .. (250,30) .. controls (255.52,30) and (260,34.48) .. (260,40) .. controls (260,45.52) and (255.52,50) .. (250,50) .. controls (244.48,50) and (240,45.52) .. (240,40) -- cycle ;
  \draw  [fill={rgb, 255:red, 255; green, 255; blue, 255 }  ,fill opacity=1 ] (200,260) .. controls (200,254.48) and (204.48,250) .. (210,250) .. controls (215.52,250) and (220,254.48) .. (220,260) .. controls (220,265.52) and (215.52,270) .. (210,270) .. controls (204.48,270) and (200,265.52) .. (200,260) -- cycle ;
  \draw  [fill={rgb, 255:red, 255; green, 255; blue, 255 }  ,fill opacity=1 ] (240,260) .. controls (240,254.48) and (244.48,250) .. (250,250) .. controls (255.52,250) and (260,254.48) .. (260,260) .. controls (260,265.52) and (255.52,270) .. (250,270) .. controls (244.48,270) and (240,265.52) .. (240,260) -- cycle ;
  \draw  [fill={rgb, 255:red, 226; green, 226; blue, 226 }  ,fill opacity=1 ] (140,150) .. controls (140,144.48) and (144.48,140) .. (150,140) .. controls (155.52,140) and (160,144.48) .. (160,150) .. controls (160,155.52) and (155.52,160) .. (150,160) .. controls (144.48,160) and (140,155.52) .. (140,150) -- cycle ;
  \draw  [fill={rgb, 255:red, 121; green, 121; blue, 121 }  ,fill opacity=1 ] (220,90) .. controls (220,84.48) and (224.48,80) .. (230,80) .. controls (235.52,80) and (240,84.48) .. (240,90) .. controls (240,95.52) and (235.52,100) .. (230,100) .. controls (224.48,100) and (220,95.52) .. (220,90) -- cycle ;
  \draw  [fill={rgb, 255:red, 121; green, 121; blue, 121 }  ,fill opacity=1 ] (220,210) .. controls (220,204.48) and (224.48,200) .. (230,200) .. controls (235.52,200) and (240,204.48) .. (240,210) .. controls (240,215.52) and (235.52,220) .. (230,220) .. controls (224.48,220) and (220,215.52) .. (220,210) -- cycle ;
  \draw  [fill={rgb, 255:red, 121; green, 121; blue, 121 }  ,fill opacity=1 ] (330,150) .. controls (330,144.48) and (334.48,140) .. (340,140) .. controls (345.52,140) and (350,144.48) .. (350,150) .. controls (350,155.52) and (345.52,160) .. (340,160) .. controls (334.48,160) and (330,155.52) .. (330,150) -- cycle ;
  \draw  [fill={rgb, 255:red, 226; green, 226; blue, 226 }  ,fill opacity=1 ] (190,150) .. controls (190,144.48) and (194.48,140) .. (200,140) .. controls (205.52,140) and (210,144.48) .. (210,150) .. controls (210,155.52) and (205.52,160) .. (200,160) .. controls (194.48,160) and (190,155.52) .. (190,150) -- cycle ;
  \draw  [fill={rgb, 255:red, 255; green, 255; blue, 255 }  ,fill opacity=1 ] (380,110) .. controls (380,104.48) and (384.48,100) .. (390,100) .. controls (395.52,100) and (400,104.48) .. (400,110) .. controls (400,115.52) and (395.52,120) .. (390,120) .. controls (384.48,120) and (380,115.52) .. (380,110) -- cycle ;
  \draw  [fill={rgb, 255:red, 255; green, 255; blue, 255 }  ,fill opacity=1 ] (380,190) .. controls (380,184.48) and (384.48,180) .. (390,180) .. controls (395.52,180) and (400,184.48) .. (400,190) .. controls (400,195.52) and (395.52,200) .. (390,200) .. controls (384.48,200) and (380,195.52) .. (380,190) -- cycle ;
  \draw    (230,200) -- (230,100) ;
  \draw  [dash pattern={on 0.84pt off 2.51pt}] (317.5,67.5) -- (317.5,232.5) -- (235,150) -- cycle ;
  
  \draw (322,77.4) node [anchor=north west][inner sep=0.75pt]    {$u$};
  \draw (192,7.4) node [anchor=north west][inner sep=0.75pt]    {$\cdots $};
  \draw (232,7.4) node [anchor=north west][inner sep=0.75pt]    {$\cdots $};
  \draw (192,288.4) node [anchor=north west][inner sep=0.75pt]    {$\cdots $};
  \draw (232,288.4) node [anchor=north west][inner sep=0.75pt]    {$\cdots $};
  \draw (385,62.4) node [anchor=north west][inner sep=0.75pt]    {$\cdots $};
  \draw (410,82.4) node [anchor=north west][inner sep=0.75pt]    {$\cdots $};
  \draw (411,182.4) node [anchor=north west][inner sep=0.75pt]    {$\cdots $};
  \draw (386,217.4) node [anchor=north west][inner sep=0.75pt]    {$\cdots $};
  \draw (323,222.4) node [anchor=north west][inner sep=0.75pt]    {$C_{1}$};

  \end{tikzpicture}

  \caption{The example graph where \Cref{alg:subcomponent} runs on. }
  \label{fig:2_block_tree_example}
\end{figure}
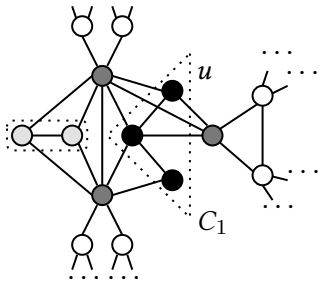

In \Cref{line-pick-c}, the algorithm picks the connected subgraph $C_1$ containing $u$, represented by black circles. 
Then in \Cref{line-delete-neighbour}, the algorithm removes $C_1$, together with its neighbours, depicted by circles in dark grey, from the vertex set $V$. 
Afterwards, the algorithm checks all remaining connected components, and removes those with size less than $\theta=3$ from $V$ in \Cref{line-delete-small}. 
In this example, the algorithm captures and deletes the component in the dotted box. 
Be aware that their neighbours (dark grey circles) have already been removed from $V$. 
As the algorithm goes into the second round of the \textbf{while}-loop, 
the next candidate starting point $u_2$ is selected, as of in \Cref{line-pick-u}, among the vertices depicted by white circles.

To formalize the properties of \Cref{alg:subcomponent}, we begin with the following proposition, which asserts that \Cref{alg:subcomponent} is well defined. The proof is given in \Cref{section-2-block-tree-alg}.

\begin{proposition}\label{proposition-valid}
  Given the input $\delta$, $L_H$, $M$, $u$, and $C \in \mathrm{Con}_u(M)$, \Cref{alg:subcomponent} terminates and generates a unique output.
  Moreover, when \Cref{alg:subcomponent} terminates, $V=\emptyset$.
\end{proposition}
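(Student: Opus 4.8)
The plan is to verify three things in turn: that the \textbf{while}-loop makes progress (so the algorithm halts), that every step inside a loop iteration is actually executable (so the output is well-defined), and that the vertex set $V$ is exhausted upon termination. First I would observe that each iteration of the \textbf{while}-loop strictly decreases $|V|$: in \Cref{line-delete-neighbour} we remove $C_\ell$, which has size exactly $\theta \geq 1$, so $|V|$ drops by at least $\theta$ before the small-component cleanup in \Cref{line-delete-small} can only decrease it further. Since $|V|$ is a non-negative integer starting at $M$, the loop runs for at most $M/\theta$ iterations and then terminates; determinism of the output follows once we fix the tie-breaking rule mentioned in the paragraph after the algorithm (smallest index in a predetermined ordering), since every choice in \Cref{line-pick-u} and \Cref{line-pick-c} becomes canonical.

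The crux is showing \Cref{line-pick-u} and \Cref{line-pick-c} never fail, i.e.\ that whenever the loop body runs there really is a vertex $u_\ell$ of the required kind and a connected $\theta$-subset $C_\ell \subseteq V$ containing it inside $G[V]$. I would argue by maintaining the following invariant at the top of each iteration: every connected component of $G[V]$ has size at least $\theta$, and moreover for $\ell > 1$ the set $\Gamma_G(C \setminus V)$ meets $V$. The first half of the invariant is restored at the end of each iteration precisely by the cleanup loop in \Cref{line-delete-small-if}--\Cref{line-delete-small}, which deletes exactly those components of size $< \theta$; and it holds initially because $G = L_H[C]$ is connected with $|C| = M > \theta$, so its single component has size $M \geq \theta$. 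Given the invariant, the component of $G[V]$ containing $u_\ell$ has at least $\theta$ vertices, and since a connected graph on $\geq \theta$ vertices contains a connected subtree on exactly $\theta$ vertices through any prescribed vertex (grow a BFS/DFS tree from $u_\ell$ and truncate), \Cref{line-pick-c} succeeds. For the existence of $u_\ell$ when $\ell > 1$: at that point $C \setminus V \neq \emptyset$ (the loop has run at least once, deleting $C_1$) and $V \neq \emptyset$ (the \textbf{while} guard holds), and since $G = L_H[C]$ is connected there must be an edge of $G$ with one endpoint in $C \setminus V$ and the other in $V$ — otherwise $V$ and $C\setminus V$ would be a separation of the connected graph $G$ — so $\Gamma_G(C \setminus V) \cap V \neq \emptyset$ and \Cref{line-pick-u} has a valid choice.

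Finally, for the claim that $V = \emptyset$ on termination: when the \textbf{while}-loop exits we have $|V| < \theta$, but the invariant guarantees that $G[V]$ has no component of size in $\{1, \dots, \theta - 1\}$ (the cleanup removed all of them at the end of the last iteration), hence $G[V]$ has no component at all, i.e.\ $V = \emptyset$. The main obstacle I anticipate is purely bookkeeping: one must be careful that the cleanup in \Cref{line-delete-small} is applied after the neighbourhood removal in \Cref{line-delete-neighbour} within the \emph{same} iteration, so that the ``every component has size $\geq \theta$'' invariant is genuinely re-established before the next pass — and correspondingly that the invariant is stated for the configuration at the \emph{start} of an iteration, after the previous iteration's cleanup, rather than mid-iteration. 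With the invariant phrased that way the induction is routine.
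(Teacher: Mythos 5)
Your proof is correct and takes essentially the same approach as the paper's: you show $|V|$ strictly decreases each iteration (since $|C_\ell|=\theta\ge 1$), that the invariant ``every component of $G[V]$ has size $\ge\theta$'' is maintained by the cleanup loop and ensures \Cref{line-pick-c} succeeds, that connectivity of $G=L_H[C]$ gives a nonempty $\Gamma_G(C\setminus V)$ for \Cref{line-pick-u}, and that the same invariant forces $V=\emptyset$ at termination. The paper packages these points a little more tersely (without the explicit ``invariant'' framing), but the substance and the case split between the first iteration and later ones are identical to yours.
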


The next proposition, yet of more importance, establishes a few properties of the output of \Cref{alg:subcomponent}. 
They will eventually be used to bound the probabilities on the right hand side (RHS) of~\eqref{eq-prob-cases}.
Before characterising these properties, we introduce a notion called ``2-block-tree''.

\begin{definition} [2-block-tree]\label{def:2-block-tree}
Let $\theta \geq 1$ be an integer.
Let $G=(V,E)$ be a graph.
A set $\{C_1,C_2,\ldots,C_\ell\}$ is a \emph{2-block-tree} with block size $\theta$ and tree size $\ell$ in $G$ if 
\begin{enumerate}[label =\color{blue} (B\arabic*)]
  \item for any $1 \leq i \leq \ell$,  $C_i \subseteq V$, $\abs{C_i} = \theta$, and the induced subgraph $G[C_i]$ is connected; \label{def-2-block-tree-1}
\item for any distinct $1 \leq i,j \leq \ell$, $\dist_G(C_i,C_j) \geq 2$; \label{def-2-block-tree-2}
\item $\{C_1,\cdots,C_\ell\}$ is connected on $G^2$. (Recall \Cref{def:graph_power} of graph powers.) \label{def-2-block-tree-3}
\end{enumerate}
\end{definition}

One can easily observe that the notion of 2-block-trees is a generalisation of 2-trees in~\cite{Alon91} by setting $\theta = 1$. The output of \Cref{alg:subcomponent} is a 2-block-tree in $L_H$. This explains the name  ``2-block-tree generator''.

\begin{proposition}
\label{proposition-component}
The output $\{C_1,C_2,\ldots,C_\ell\}$ of \Cref{alg:subcomponent} satisfies that
\begin{enumerate}
\item $\{C_1,C_2,\ldots,C_\ell\}$ is a 2-block-tree in $L_H$ with block size $\theta$ satisfying $u \in C_1$ and $\cup_{i=1}^\ell C_i \subseteq C$;\label{eq-block-tree-1}
\item if all vertices in $\Gamma_G(C_i)$ are removed from $G$, where $G = L_H[C]$, 
  then the resulting graph $G[C']$ is a collection of connected components whose sizes are at most $\theta$, 
  where $C' = C \setminus (\cup_{i=1}^\ell \Gamma_G(C_i) )$. \label{eq-block-tree-2}
\end{enumerate}
\end{proposition}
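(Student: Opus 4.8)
The plan is to follow the execution of \Cref{alg:subcomponent} round by round, keeping track of the surviving set $V$ and the removed set $R\defeq C\setminus V$. For the $i$-th iteration of the \textbf{while}-loop, let $C_i$ be the block picked in \Cref{line-pick-c}, $\Gamma_G(C_i)$ the neighbourhood removed in \Cref{line-delete-neighbour}, $S_i$ the union of the small components discarded in \Cref{line-delete-small}, and $W_i$ the value of $V$ just after \Cref{line-delete-neighbour} in round $i$; thus $S_i$ is a union of connected components of $G[W_i]$, each of size $<\theta$, and $V$ becomes $W_i\setminus S_i$ at the end of round $i$. Observe that $V$ is monotonically shrinking, that after round $i$ we have $R=\bigcup_{j\le i}(C_j\cup\Gamma_G(C_j)\cup S_j)$, and (by \Cref{proposition-valid}) that every $u_\ell,C_\ell$ is well defined and $V=\emptyset$ on termination. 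The easy assertions of item~\ref{eq-block-tree-1} are then immediate: $u\in C_1$ and $\bigcup_iC_i\subseteq C$ by construction; (B1) holds since \Cref{line-pick-c} makes each $C_i$ connected on exactly $\theta$ vertices; and (B2) holds because for $i<j$ the block $C_j$ is chosen from a subset of the $V$ left after round $i$, which excludes $C_i\cup\Gamma_G(C_i)$, so $C_i$ and $C_j$ are disjoint and non-adjacent in $G$, hence (being subsets of $C$) in $L_H$, giving $\dist_{L_H}(C_i,C_j)\ge2$.

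The heart of the argument is one structural observation about the small components pruned in \Cref{line-delete-small}: \emph{any vertex $x$ that survives into a round $i'>i$ is neither in, nor $G$-adjacent to, any component $V'$ discarded in round $i$.} Indeed $V'$ is a maximal connected set in $G[W_i]$, whereas $x\in V_{i'-1}\subseteq V_i=W_i\setminus S_i\subseteq W_i$ and $x\notin S_i\supseteq V'$, so $x$ lies in $W_i\setminus V'$ and thus has no $G$-neighbour in $V'$. By the same monotonicity, a vertex surviving round $j$ has no $G$-neighbour in $C_j$ (the set $C_j\cup\Gamma_G(C_j)$ was already removed). Granting this, (B3) follows: for $\ell\ge2$ the vertex $u_\ell$ chosen in \Cref{line-pick-u} has a $G$-neighbour $w\in R=\bigcup_{j<\ell}(C_j\cup\Gamma_G(C_j)\cup S_j)$, and the observation forbids $w\in C_j$ as well as $w\in S_j$, leaving $w\in\Gamma_G(C_j)$ for some $j<\ell$; since $w$ then has a $G$-neighbour in $C_j$, we get $\dist_G(C_\ell,C_j)\le\dist_G(u_\ell,C_j)\le2$. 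Iterating, $\{C_1,\dots,C_\ell\}$ is connected in $G^2$, hence in $L_H^2$ (as $G=L_H[C]$ is induced, paths of length $\le2$ are preserved), so the output is a 2-block-tree in $L_H$, proving item~\ref{eq-block-tree-1}.

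For item~\ref{eq-block-tree-2}, put $C'\defeq C\setminus\bigcup_i\Gamma_G(C_i)$. Since $C_i\cap\Gamma_G(C_j)=\emptyset$ for all $i,j$ (trivially for $i=j$, by (B2) for $i\ne j$), each $C_i\subseteq C'$; and because every $G$-neighbour of $C_i$ lying outside $C_i$ belongs to $\Gamma_G(C_i)$, the connected set $C_i$ is an entire connected component of $G[C']$, of size $\theta$. The remaining components lie in $C''\defeq C'\setminus\bigcup_iC_i$, and since $V=\emptyset$ at termination we get $C''\subseteq\bigcup_iS_i$, a disjoint union of the pruned small components, each of size $<\theta$. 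By the structural observation, two distinct pruned components $V'\subseteq S_i$ and $V''\subseteq S_j$ with $i\le j$ admit no $G$-edge between them (if $i=j$ they are distinct components of $G[W_i]$; if $i<j$ a vertex of $V''$ survives round $i$ and would then be $G$-adjacent to $V'$). Hence each connected component of $G[C'']$ is contained in a single $V'$ and so has size $<\theta$; together with the blocks $C_i$ this shows every connected component of $G[C']$ has size at most $\theta$.

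The main obstacle is isolating and proving the structural observation above --- that a small component, once pruned in \Cref{line-delete-small}, is ``sealed off'' in $G$ from everything that survives to a later round. This single fact both sharpens the block-to-block distance bound used for (B3) from $3$ to $2$ and rules out $G$-edges between distinct pruned components in item~\ref{eq-block-tree-2}; once it is established, the rest is routine bookkeeping with the monotone shrinking of $V$ and \Cref{proposition-valid}.
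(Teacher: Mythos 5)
Your proof is correct and takes essentially the same approach as the paper's: both establish (B1)--(B2) directly from the algorithm, derive (B3) from a ``distance~$2$'' fact about the starting vertex of each new block, and argue Item~\ref{eq-block-tree-2} via the fact that a pruned small component cannot be adjacent in $G$ to anything that survives it. Your packaging is slightly cleaner in that you factor this last fact into one reusable structural observation and apply it both for (B3) (where the paper instead states it narrowly as \Cref{prop-algo-dist-2}) and to give a direct rather than by-contradiction argument for Item~\ref{eq-block-tree-2}, but the underlying reasoning is identical.
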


In \Cref{proposition-component}, Item~\ref{eq-block-tree-1} is stated with respect to the line graph $L_H$, but Item~\ref{eq-block-tree-2} is stated with respect to the induced subgraph $L_H[C]$. The proof of \Cref{proposition-component} is also given in \Cref{section-2-block-tree-alg}.

Finally, to bound the probabilities on the RHS of~\eqref{eq-prob-cases}, we need the following lemma about the random configuration $\*Y \in [s]^\Lambda$. The proof of \Cref{lemma-random-Y} is given in~\Cref{section-local-uniform}.

\begin{lemma}\label{lemma-random-Y}
If $\lfloor q/s \rfloor^k \geq 2\mathrm{e}qk\Delta$, then for any $R \subseteq \Lambda$, any $\sigma \in [s]^R$, it holds that 
\begin{align*}
\Pr[\*Y_R = \sigma] \leq  \tp{\frac{1}{s}+\frac{1}{q}}^{\abs{R}}\exp\tp{\frac{\abs{R}}{k}}.
\end{align*}
\end{lemma}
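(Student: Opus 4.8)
The plan is to bound $\Pr[\*Y_R = \sigma]$ by comparing the conditioned-on-properness distribution $\mu$ with the underlying product distribution $\+D$ in which each vertex independently picks a uniform colour from its list (here all lists equal $[q]$ since $\*Y$ is the image of a colour assignment on $\Lambda$, though the projection forces us to look at buckets). First I would set up the bad events exactly as in the proof of \Cref{lemma-local-uniform-colour}: for each hyperedge $e\in\+E$ let $B_e$ be the event that $e$ is monochromatic under a uniform colouring, and use $x(e) = \frac{1}{r\Delta}$ for a suitable $r$ (taking $r = k$ or a small constant multiple suffices under the hypothesis $\lfloor q/s\rfloor^k \ge 2\mathrm{e} q k\Delta$, which is exactly the local-uniformity condition of \Cref{lemma-local-uniform-colour} with $q_0 = \lfloor q/s\rfloor$, $q_1 = \lceil q/s\rceil \le q$, and $r = 2k$). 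This gives, via \Cref{thm:lll} applied to the atomic event $A = \{\*X_R \text{ has image } \sigma\}$, an upper bound of the form $\Pr_\mu[\*Y_R = \sigma] \le \Pr_{\+D}[\*Y_R = \sigma] \cdot \prod_{e \in \Gamma(A)}(1 - x(e))^{-1}$.

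Next I would estimate the two factors separately. For the product-distribution term: under $\+D$, the events $\{h(X_v) = \sigma_v\}$ for $v \in R$ are independent, and each has probability $\frac{|h^{-1}(\sigma_v)|}{q} \le \frac{\lceil q/s\rceil}{q} \le \frac{1}{s} + \frac{1}{q}$ (using $\lceil q/s\rceil \le q/s + 1$). Hence $\Pr_{\+D}[\*Y_R = \sigma] \le \left(\frac{1}{s} + \frac{1}{q}\right)^{|R|}$, which already produces the leading factor in the statement. For the local-lemma correction term: the event $A$ depends only on the variables $\{X_v : v \in R\}$, so $\Gamma(A)$ consists of hyperedges meeting $R$; there are at most $|R|\Delta$ of them (each vertex of $R$ lies in at most $\Delta$ hyperedges), and $(1-x(e))^{-1} = (1 - \frac{1}{r\Delta})^{-1}$. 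Thus $\prod_{e\in\Gamma(A)}(1-x(e))^{-1} \le \left(1 - \frac{1}{r\Delta}\right)^{-|R|\Delta} \le \exp\!\left(\frac{|R|\Delta}{r\Delta - 1}\right) \le \exp\!\left(\frac{2|R|}{r}\right)$, and choosing $r = 2k$ (which the hypothesis $\lfloor q/s\rfloor^k \ge 2\mathrm{e}qk\Delta = \mathrm{e}\, q_1\, r\, \Delta$ with $r=2k$ supports, after checking $q_1 \le q$) makes this at most $\exp(|R|/k)$, the second factor in the statement.

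Combining the two estimates via the multiplicative bound from \Cref{thm:lll} gives exactly $\Pr[\*Y_R = \sigma] \le \left(\frac{1}{s} + \frac{1}{q}\right)^{|R|}\exp\!\left(\frac{|R|}{k}\right)$. The one point requiring a little care — and the step I expect to be the main (minor) obstacle — is verifying that the local lemma condition \eqref{equ:lll_cond} actually holds with the chosen $x$ and $r$: I need $\Pr_{\+D}[B_e] \le \frac{1}{s\lfloor q/s\rfloor}$-type bound, i.e. roughly $\lfloor q/s\rfloor^{-(k-1)} \le \frac{1}{r\Delta}(1 - \frac{1}{r\Delta})^{k(\Delta-1)}$, which follows from $\lfloor q/s\rfloor^k \ge \mathrm{e}\, q_1\, r\, \Delta$ since $\Pr_{\+D}[B_e] \le q_1 / q_0^k$ (there are at most $q_1$ choices of the common colour and each of the $k$ vertices must hit it, contributing $q_0^{-1}$ each — wait, more precisely $\Pr_{\+D}[B_e] \le \sum_{c} \prod_{v\in e}\frac{[c \in Q_v]}{|Q_v|} \le \frac{q_1 \cdot q_1^{k}\cdot\ldots}{}$; the clean bound is $q_1 \cdot q_0^{-k}$ as in the proof of \Cref{lemma-local-uniform-colour}). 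So I would simply invoke the computation already carried out there with $q_0 = \lfloor q/s\rfloor$, $q_1 = \lceil q/s\rceil$, $r = 2k$, and then feed the resulting $x$ into the ``moreover'' clause \eqref{equ:lll} of \Cref{thm:lll} with $A$ the image event. Everything else is the routine arithmetic sketched above.
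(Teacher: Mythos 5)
There is a genuine gap, and it is the central idea of the lemma rather than a technicality. You have misidentified the distribution of $\*Y$. By~\eqref{eq-def-Y}, $\*Y = \*Y_{t-1}(\Lambda)$ is the \emph{state of the systematic-scan Markov chain after $t-1$ transitions}, i.e.\ the randomness is over the algorithm's execution — not a sample from the projected stationary distribution $\nu = h_{\#}\mu$. Your entire argument, however, bounds $\Pr_{\*X\sim\mu}[h(\*X_R)=\sigma]$ via the ``moreover'' clause of \Cref{thm:lll}: this is a true statement, and the arithmetic (product-distribution factor $(\tfrac1s+\tfrac1q)^{|R|}$, dependency count $|R|\Delta$, LLL correction $\exp(|R|/k)$ with $r=2k$) is fine, but it is a statement about $\nu$, not about $\*Y_{t-1}(\Lambda)$. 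Since the chain has not mixed at an arbitrary time $t-1$, the law of $\*Y_{t-1}(\Lambda)$ is in general quite far from $\nu$, and the LLL bound for $\nu$ does not transfer.

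The paper's proof handles this with a different mechanism. For each $w\in R$ let $t(w)$ be the last time $\le t-1$ at which $w$ was updated; order $R$ by increasing $t(w)$ and apply the chain rule $\Pr[\*Y_R=\sigma]=\prod_i p_i$ where $p_i$ conditions on the earlier updates. Each $p_i$ is then bounded \emph{pointwise} by examining how $Y_{w_i}$ was set in Algorithm~\ref{alg:main}: either $w_i$ was never updated (so $Y_{w_i}$ is uniform on $[s]$), or $\sample$ fell into a bad branch (\Cref{line-bad-1}/\Cref{line-bad-2}, again uniform), or $\sample$ returned a perfect conditional sample, in which case \Cref{lemma-local-uniform-colour} (with $q_0=\lfloor q/s\rfloor$, $q_1=q$, $r=2k$, which is precisely what the hypothesis $\lfloor q/s\rfloor^k\geq 2\mathrm{e}qk\Delta$ buys) gives $\Pr[X'_{w_i}=c\mid\cdots]\le\frac1q\exp(1/k)$, hence $p_i\le\frac{\lceil q/s\rceil}{q}\exp(1/k)$. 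Multiplying these per-coordinate bounds gives the claim. So the right tool is the local-uniformity lemma applied \emph{per update of the algorithm} and chained, not a single LLL estimate against $\mu$; your proposal lacks this chain-rule-over-update-times step entirely, and it cannot be patched by choosing parameters differently.
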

The following result is a straightforward corollary of \Cref{lemma-random-Y}.
\begin{corollary}\label{corollary-random-Y}
  Let $\delta > 0$ and $R_1,R_2,\ldots,R_\ell \subseteq \Lambda$ be disjoint subsets.
  For each $1\leq i \leq \ell$, let $\+S_i \subseteq [s]^{R_i}$ be a subset of configurations (namely an event).
If $k \geq \frac{20(\delta+1)}{\delta}$ and $q \geq 100\Delta^{\frac{2+\delta}{k-4/\delta-3}}$, then it holds that 
\begin{align*}
\Pr\sqtp{\bigwedge_{i=1}^\ell \tp{ \*Y_{R_i} \in \+S_i}} \leq  \prod_{i=1}^\ell \abs{\+S_i}\tp{\frac{1}{s}+\frac{1}{q}}^{\abs{R_i}} \exp\tp{\frac{\abs{R_i}}{k}}.
\end{align*}
\end{corollary}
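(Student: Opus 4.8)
The plan is to reduce the claim to a single application of \Cref{lemma-random-Y} together with a union bound over a product event. Since $R_1,\dots,R_\ell$ are pairwise disjoint subsets of $\Lambda$, put $R=\bigcup_{i=1}^\ell R_i\subseteq\Lambda$; because $R$ is the \emph{disjoint} union of the $R_i$, an assignment $\sigma\in[s]^R$ carries the same data as the tuple of its restrictions $(\sigma|_{R_1},\dots,\sigma|_{R_\ell})$. Hence the event $\bigwedge_{i=1}^\ell(\*Y_{R_i}\in\+S_i)$ coincides with $\*Y_R\in\+S$, where $\+S\defeq\{\sigma\in[s]^R:\sigma|_{R_i}\in\+S_i\text{ for every }i\}$, and disjointness gives $\abs{\+S}=\prod_{i=1}^\ell\abs{\+S_i}$ as well as $\abs{R}=\sum_{i=1}^\ell\abs{R_i}$.

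Next I would bound
\[
\Pr\!\left[\bigwedge_{i=1}^\ell(\*Y_{R_i}\in\+S_i)\right]=\Pr[\*Y_R\in\+S]\le\sum_{\sigma\in\+S}\Pr[\*Y_R=\sigma]\le\abs{\+S}\cdot\max_{\sigma\in[s]^R}\Pr[\*Y_R=\sigma],
\]
and apply \Cref{lemma-random-Y} to the last maximum, which gives $\Pr[\*Y_R=\sigma]\le(\tfrac1s+\tfrac1q)^{\abs R}\exp(\abs R/k)$ for every $\sigma$. Substituting $\abs{\+S}=\prod_i\abs{\+S_i}$ and factoring $(\tfrac1s+\tfrac1q)^{\abs R}\exp(\abs R/k)=\prod_{i=1}^\ell(\tfrac1s+\tfrac1q)^{\abs{R_i}}\exp(\abs{R_i}/k)$ via $\abs R=\sum_i\abs{R_i}$ produces exactly the asserted bound.

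The only step with any content is verifying that the hypotheses $k\ge\frac{20(\delta+1)}{\delta}$ and $q\ge100\Delta^{\frac{2+\delta}{k-4/\delta-3}}$ imply the hypothesis $\ftp{q/s}^k\ge2\mathrm e qk\Delta$ required to invoke \Cref{lemma-random-Y}. For this I would use $s=\ctp{\sqrt q}$ and $\ftp{q/s}\ge\tfrac45\sqrt q$ (valid once $q\ge100$, exactly as in the proof of Property~\ref{property-sub-3}), so that $\ftp{q/s}^k\ge(4/5)^kq^{k/2}$, reducing the task to $(4/5)^kq^{k/2-1}\ge2\mathrm e k\Delta$. The hypothesis $k\ge\frac{20(\delta+1)}{\delta}$ guarantees $k-4/\delta-3>0$, so the lower bound on $q$ yields at least a linear power of $\Delta$, while the base $100$ (raised to a power growing with $k$) together with $k\ge20$ comfortably swallow the $(4/5)^k$ and $2\mathrm e k$ factors; this is a routine estimate of the same flavour as the parameter checks performed elsewhere in the paper, and I do not anticipate any genuine obstacle — it is the only place where the stated hypotheses are actually used.
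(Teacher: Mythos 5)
Your proposal is correct and follows essentially the same route as the paper: reduce the conjunction to a single event for the disjoint union $R$, apply \Cref{lemma-random-Y} (after checking its hypothesis $\lfloor q/s\rfloor^k\geq 2\mathrm{e}qk\Delta$ via $\lfloor q/s\rfloor\geq c\sqrt q$ for $q\geq 100$), and factor the resulting bound over the $R_i$. The paper writes the reduction as a sum over the product set $\+S_1\otimes\cdots\otimes\+S_\ell$ rather than $\abs{\+S}\cdot\max_\sigma$, and uses $\lfloor q/s\rfloor\geq \sqrt q/4$ with the intermediate observation that $q\geq100\Delta^{2/(k-2)}$, but these are cosmetic differences from your argument.
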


\begin{proof}
Let $R = R_1 \uplus R_2\uplus \ldots \uplus R_\ell$.
Note that $\bigwedge_{i=1}^\ell \tp{ \*Y_{R_i} \in \+S_i}$ if and only if $\*Y_R \in \+S_1 \otimes \+S_2 \otimes \ldots \otimes \+S_\ell$, where
\begin{align*}
 \+S_1 \otimes \+S_2 \otimes \ldots \otimes \+S_\ell \defeq \left\{\sigma \in [s]^R \mid \forall 1 \leq i \leq \ell, \sigma_{R_i} \in \+S_i\right\}.	
\end{align*}
We now verify the condition in \Cref{lemma-random-Y} that $\lfloor q/s \rfloor^k \geq 2\mathrm{e}qk\Delta$. Since $s = \ctp{\sqrt{q}}$ and $q \geq 100$, $\lfloor q/s \rfloor \ge \sqrt{q}/4$.
Thus it suffices to verify $(\sqrt{q}/4)^k \geq 2\mathrm{e}qk\Delta$. The condition in \Cref{corollary-random-Y} implies that $q \geq 100\Delta^{\frac{2}{k-2}}$ and $k \geq 20$, which implies $(\sqrt{q}/4)^k \geq 2\mathrm{e}qk\Delta$. Hence, the condition in \Cref{lemma-random-Y} holds. We have
\begin{align*}
\Pr\sqtp{\bigwedge_{i=1}^\ell \tp{ \*Y_{R_i} \in \+S_i}} = \sum_{\sigma \in  \+S_1 \uplus \+S_2 \uplus \ldots \uplus \+S_\ell}\Pr\sqtp{\*Y_R = \sigma} \leq \prod_{i=1}^\ell \abs{\+S_i}\tp{\frac{1}{s}+\frac{1}{q}}^{\abs{R_i}} \exp\tp{\frac{\abs{R_i}}{k}}. &\qedhere
\end{align*}
\end{proof}

Now, we are ready to bound the probabilities on the RHS of~\eqref{eq-prob-cases}.
We handle the two terms separately:
\begin{align}
  \label{eq-first-case}
  \Pr_{\*Y} \sqtp{ \exists\, C \in \mathrm{Con}_u^{(1)}(M) \text{ s.t. } C \subseteq V^{\*Y}_L } & < \tp{\frac{1}{2}}^{\frac{M}{2\theta k^2 \Delta}};\\
  \label{eq-second-case}
  \Pr_{\*Y} \sqtp{ \exists\, C \in \mathrm{Con}_u^{(2)}(M) \text{ s.t. } C \subseteq V^{\*Y}_L } & < \tp{\frac{1}{2}}^{M}.	
\end{align}
Combining~\eqref{eq-prob-cases} with~\eqref{eq-first-case} and~\eqref{eq-second-case}, we have
\begin{align*}
\Pr_{\*Y}\sqtp{\abs{\+C} \geq M} &\leq \Pr_{\*Y} \sqtp{ \exists\, C \in \mathrm{Con}_u^{(1)}(M) \text{ s.t. } C \subseteq V^{\*Y}_L } + 	\Pr_{\*Y} \sqtp{ \exists\, C \in \mathrm{Con}_u^{(2)}(M) \text{ s.t. } C \subseteq V^{\*Y}_L }\\
&\leq  \tp{\frac{1}{2}}^{\frac{M}{2\theta k^2 \Delta}} + \tp{\frac{1}{2}}^M \leq  \tp{\frac{1}{2}}^{\frac{M}{2\theta k^2 \Delta} - 1}.
\end{align*}
This proves the desired inequality~\eqref{eq-proof-M-target}.

In the next two subsections, we give proofs of \eqref{eq-first-case} and \eqref{eq-second-case}.

\subsubsection{Proof of inequality (\ref{eq-first-case})}
We first prove \eqref{eq-first-case}.
We need to use the following two properties of 2-block-trees, 
the proofs of which are deferred till \Cref{section-2-block-tree}.
\begin{lemma}\label{lemma-block-tree-down}
Let $\theta \geq 1$ be an integer.
Let $G=(V,E)$ be a graph.
For any integer $\ell \geq 2$, any vertex $v \in V$,
if $G$ has a 2-block-tree $\{C_1,C_2,\ldots,C_\ell\}$ with block size $\theta$ and tree size $\ell$ such that $v \in \cup_{i=1}^\ell C_i$, then
there exists an index $1\leq i \leq \ell$ such that $\{C_1,C_2,\ldots,C_\ell\} \setminus \{C_i\}$ is a 2-block-tree in $G$ with block size $\theta$ and tree size $\ell - 1$
and  $v \in \cup_{1\leq j\leq \ell: j\neq i} C_j$.
\end{lemma}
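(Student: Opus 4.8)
The plan is to argue by a ``peeling from the boundary'' argument on the tree structure $G^2$. First I would observe that, by condition \ref{def-2-block-tree-3}, the auxiliary graph $\+T$ on vertex set $\{C_1,\dots,C_\ell\}$ with an edge between $C_a$ and $C_b$ whenever $\dist_G(C_a,C_b)=2$ (this is the only possibility for adjacency in $G^2$ since condition \ref{def-2-block-tree-2} rules out distance $\le 1$) is connected. Pick any spanning tree of $\+T$; since $\ell\ge 2$, this spanning tree has at least two leaves, so there is a leaf $C_i$ with $C_i$ not containing the designated vertex $v$ (more precisely, not equal to the unique block containing $v$ if $v$ lies in only one block; if $v$ lies in several blocks pick a leaf among those that do not contain $v$ --- but actually the blocks are pairwise at distance $\ge 2$, hence disjoint, so $v$ lies in at most one $C_j$). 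Removing that leaf $C_i$ I must check that $\{C_1,\dots,C_\ell\}\setminus\{C_i\}$ is still a 2-block-tree.

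The key steps, in order, are: (1) record that conditions \ref{def-2-block-tree-1} and \ref{def-2-block-tree-2} are hereditary --- deleting a block does not affect the size, connectedness, or pairwise-distance requirements of the remaining blocks, so these two conditions are automatic for any subfamily. (2) Note that the blocks are pairwise disjoint (distance $\ge 2$ in particular means non-intersecting), hence $v$ belongs to at most one block; if $v$ belongs to none, any leaf works, and if $v\in C_{j_0}$ for a unique $j_0$, I need a leaf of the spanning tree different from $C_{j_0}$, which exists because a tree on $\ge 2$ vertices has $\ge 2$ leaves. Call this leaf $C_i$; then $v\in\cup_{j\ne i}C_j$ as required. (3) Verify condition \ref{def-2-block-tree-3} for the reduced family: the spanning tree of $\+T$ minus a leaf is still a spanning tree of $\+T - C_i$, hence $\+T-C_i$ is connected, i.e. $\{C_j : j\ne i\}$ is connected in $G^2$. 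Putting these together gives the lemma.

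The main obstacle --- really the only subtlety --- is step (3): I must make sure that connectivity of the blocks in $G^2$ is genuinely inherited after deleting a \emph{leaf} of the spanning tree, and that a ``leaf'' here is the right notion. This is where the choice of a \emph{spanning tree} of $\+T$ (rather than working with $\+T$ directly) does the work: deleting a non-cut vertex of $\+T$ preserves connectivity, and a leaf of a spanning subtree is automatically a non-cut vertex of $\+T$ itself. The one edge case to handle explicitly is $\ell=2$: then the spanning tree is a single edge with two leaves, and we may delete either one that is not the block containing $v$; the remaining single block trivially satisfies all three conditions (connectivity in $G^2$ being vacuous for one vertex). I expect no heavy computation anywhere; the proof is a short graph-theoretic argument, and the writeup should be only a handful of lines once the hereditary nature of \ref{def-2-block-tree-1}--\ref{def-2-block-tree-2} and the leaf-existence fact are stated.
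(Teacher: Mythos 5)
Your proof is correct and is essentially the same argument as the paper's: both construct the auxiliary graph on blocks with adjacency given by $\dist_G(\cdot,\cdot)=2$, take a spanning tree, delete a leaf not containing $v$ (exists since a tree on $\ge2$ vertices has $\ge2$ leaves), and observe that conditions \ref{def-2-block-tree-1}--\ref{def-2-block-tree-2} are hereditary while \ref{def-2-block-tree-3} is preserved because deleting a spanning-tree leaf keeps the auxiliary graph connected. The only thing you add beyond the paper's writeup is the explicit remark that blocks are pairwise disjoint, which cleanly justifies that $v$ lies in a unique block.
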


\begin{lemma}\label{lemma-block-tree-count}
Let $\theta \geq 1$ be an integer.
Let $G=(V,E)$ be a graph with maximum degree $d$.
For any integer $\ell \geq 1$, any vertex $v \in V$, the number of 2-block-trees $\{C_1,C_2,\ldots,C_\ell\}$ with block size $\theta$ and tree size $\ell$ such that $v \in \cup_{i=1}^\ell C_i$ is at most $(\theta\mathrm{e}^{\theta}d^{\theta+1})^\ell$.
\end{lemma}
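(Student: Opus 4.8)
The plan is to prove \Cref{lemma-block-tree-count} by exhibiting an encoding of each 2-block-tree $\{C_1,\ldots,C_\ell\}$ (with $v \in \cup_i C_i$) by a bounded-length sequence of ``local'' choices, so that the total number of such sequences is at most $(\theta\mathrm{e}^\theta d^{\theta+1})^\ell$. The natural order in which to carry this out is: (i) fix a canonical spanning-tree structure on the blocks in $G^2$; (ii) traverse the blocks in a canonical order and, for each block, record how it attaches to the previously-revealed structure and what its $\theta$ vertices are; (iii) count the number of possible records at each step.

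First I would use \ref{def-2-block-tree-3}: the blocks $C_1,\ldots,C_\ell$ form a connected set in $G^2$, so pick a spanning tree $\mathcal{T}$ of them in $G^2$ rooted at the (unique) block $C_{i_0}$ containing $v$ — or, if $v$ lies in several blocks, pick any one containing $v$ as the root. Linearly order the blocks $B_1 = C_{i_0}, B_2, \ldots, B_\ell$ by a DFS/BFS of $\mathcal{T}$. Now encode the tree block-by-block. For $B_1$: it is a connected $\theta$-vertex subgraph of $G$ containing $v$; growing it greedily one vertex at a time, each new vertex is a $G$-neighbour of an already-chosen vertex, so there are at most $(\theta d)^{\theta-1} \le (\theta d)^\theta$ choices (in fact $(\theta d)^{\theta-1}$, but the crude bound suffices). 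For each subsequent block $B_j$ ($j \ge 2$): its parent $B_{j'}$ in $\mathcal{T}$ is already revealed, and since $\{B_{j'},B_j\}$ is an edge of $G^2$ there exist $x \in B_{j'}$, $y \in B_j$ with $\dist_G(x,y) \le 2$. Record which of the (at most $\theta$) vertices of $B_{j'}$ is $x$, then record $y$ as an endpoint of a path of length $\le 2$ from $x$: there are at most $1 + d + d^2 \le (d+1)^2$ such vertices $y$ — actually I would be slightly more careful and bound this by, say, $d^2 + d + 1$, but it is absorbed. Having fixed $y \in B_j$, grow the rest of $B_j$ greedily as before: at most $(\theta d)^{\theta-1}$ choices. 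Finally, since the blocks need not be presented in DFS order to the receiver, also record, at each step $j \ge 2$, which previously-revealed block is the parent: at most $j-1 \le \ell$ choices — but this $\ell^\ell$ factor is too lossy, so instead I would have the traversal itself be canonical (e.g. DFS where children are ordered by the canonical vertex-ordering of $G$), so that the parent is determined and no such factor is needed; alternatively record only the number of ``backtracks'', the standard trick for encoding rooted trees, contributing a single $2^\ell$-type factor absorbed into $\mathrm{e}^\theta \ge 2$.

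Multiplying the per-block costs: block $B_1$ contributes $\le (\theta d)^\theta$; each of $B_2,\ldots,B_\ell$ contributes $\le \theta \cdot (d+1)^2 \cdot (\theta d)^{\theta-1} \le \theta^\theta d^{\theta+1} \cdot c$ for a small constant $c$, and the tree-shape bookkeeping contributes an overall factor $\le 2^\ell$. Collecting everything and using $\theta^\theta \le \theta \cdot \mathrm{e}^{\theta}$ crudely (in fact one can be cleaner, bounding the greedy growth of a $\theta$-vertex connected subgraph through a vertex by $\le \mathrm{e}^{\theta-1}(\theta-1)!\cdot\ldots$, but the stated bound $\theta\mathrm{e}^\theta d^{\theta+1}$ per factor is generous), one gets the claimed $(\theta\mathrm{e}^\theta d^{\theta+1})^\ell$. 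I would double-check that the exponent of $d$ is exactly $\theta+1$ per block: $\theta$ vertices, of which one (namely $y$, or for $B_1$ the first vertex) is pinned down by a distance-$\le 2$ step costing $d^2$, and the remaining $\theta-1$ cost $d$ each, giving $d^{\theta-1}\cdot d^2 = d^{\theta+1}$ — this matches.

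The main obstacle I expect is the encoding of the $G^2$-tree structure on the blocks \emph{without} paying an $\ell^{\Theta(\ell)}$ factor: naively recording, for each new block, which old block it attaches to costs too much. The fix is to make the block-traversal order itself canonical — fix once and for all a total order on $V(G)$, let the ``name'' of a block be (say) its lexicographically least vertex, DFS the spanning tree of blocks visiting children in increasing order of name, and encode the tree shape by the classical depth-first ``down/up'' bit sequence of length $2\ell$. Then the parent of each newly-revealed block is forced, and the only genuine choices are the $\le \theta$ attachment vertex in the parent block, the $\le d^2 + d + 1$ distance-$\le 2$ landing vertex, and the $\le (\theta d)^{\theta - 1}$ completion of the block — exactly as above. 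A secondary subtlety is the case $\ell = 1$, which the lemma includes: then the bound is just $(\theta d)^{\theta-1} \le \theta \mathrm{e}^\theta d^{\theta+1}$, trivially true, so no separate argument is needed, but I would state it explicitly.
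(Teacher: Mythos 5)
Your overall strategy — fix a spanning tree of the blocks in $G^2$, traverse it canonically, and encode each block by its attachment point to the parent plus a description of its $\theta$ vertices — is essentially the same decomposition the paper uses in its encoding (\Cref{lemma-encoding-block-tree}). However, there is a genuine quantitative gap in the per-block vertex count. You bound the number of connected $\theta$-vertex induced subgraphs of $G$ through a fixed vertex by greedy growth, getting $(\theta d)^{\theta-1}$ (or, with the slightly sharper ``new vertex is a neighbour of one of the $i$ already chosen'' count, $(\theta-1)!\,d^{\theta-1}$). Your closing step then asserts $\theta^\theta \le \theta\mathrm{e}^\theta$, equivalently $\theta^{\theta-1}\le \mathrm{e}^\theta$; but this is false already at $\theta=4$ ($4^3=64 > \mathrm{e}^4\approx 54.6$), and the refined version $(\theta-1)!\lesssim \mathrm{e}^\theta$ also fails from $\theta=7$ on. So the per-block factor you actually obtain is on the order of $\theta^\theta d^{\theta+1}$ (or $\theta!\,d^{\theta+1}$), which is strictly larger than the claimed $\theta\mathrm{e}^\theta d^{\theta+1}$ for all but the smallest $\theta$, and the argument does not close.

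The missing ingredient is \Cref{lem:connect_count} (the bound of Borgs, Chayes, Kahn, and Lov\'asz \cite{borgs2013left}): in a graph of maximum degree $d$ the number of connected induced subgraphs of size $\theta$ containing a fixed vertex is at most $(\mathrm{e}d)^{\theta-1}/2$, not $(\theta d)^{\theta-1}$. The $\mathrm{e}^{\theta-1}$ here is exactly what produces the $\mathrm{e}^\theta$ in the statement, and it cannot be recovered from a one-vertex-at-a-time greedy count — one needs the BCKL encoding via spanning trees. The paper applies this lemma twice, once per block (giving the map $\Upsilon_v$ with range $[(\mathrm{e}d)^{\theta-1}/2]$) and once more to count the subtrees $T$ of the $\theta d^2$-ary infinite tree (this second application simultaneously absorbs your ``$\theta$ choices of attachment vertex $\times$ $d^2$ choices of distance-2 landing'' factor and the tree-shape bookkeeping, replacing your separate $4^\ell$-type term for the DFS bit string). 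If you replace your greedy-growth bound with \Cref{lem:connect_count} in both places, your proof goes through and matches the paper's.
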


In the rest of this subsection we fix $\ell = \ctp{ \frac{M}{2 \theta k^2 \Delta}}$.
By \eqref{eq-def-Com12}, \Cref{proposition-component}, and \Cref{lemma-block-tree-down}, for any $C \in \mathrm{Con}_u^{(1)}(M)$, there is a 2-block-tree tree $\{C_1,C_2,\ldots,C_\ell\}$ in the line graph $L_H$ with block size $\theta$ and tree size $\ell$ satisfying:
\begin{enumerate}[label =\color{blue} (P\arabic*)]
\item $u \in C_1 \cup C_2 \cup \ldots\cup C_\ell$;\label{2-tree-1}
\item $C_1 \cup C_2 \cup \ldots\cup C_\ell \subseteq C$.\label{2-tree-2}
\end{enumerate}
We denote a 2-block-tree tree with block size $\theta$ and tree size $\ell$ by $(\theta,\ell)$-2-block-tree.
This implies that 
\begin{align}
\label{eq-COM1}
&\Pr_{\*Y} \sqtp{ \exists\, C \in \mathrm{Con}_u^{(1)}(M) \text{ s.t. } C \subseteq V^{\*Y}_L }\notag\\ 
\leq\,& \Pr_{\*Y} \sqtp{ \exists\, (\theta,\ell)\text{-2-block-tree } \{C_1,C_2,\ldots,C_\ell\} \text{ in $L_H$ satisfying~\ref{2-tree-1} s.t. } \forall 1\leq i \leq \ell, C_i \subseteq V^{\*Y}_L}.	
\end{align}
Note that we only need to consider ($\theta,\ell$)-2-block trees satisfying~\ref{2-tree-1}, because \ref{2-tree-2} implies the event that $\forall 1\leq i \leq \ell$, $C_i \subseteq V^{\*Y}_L$.

To bound the probability, we fix a ($\theta,\ell$)-2-block tree $\{C_1,C_2,\ldots,C_\ell\}$ in $L_H$ satisfying~\ref{2-tree-1}.
Fix an index $1 \leq j \leq \ell$.
By \Cref{def:2-block-tree}, $\abs{C_j} = \theta$.
Note that each vertex in $C_j$ represents a hyperedge in the input hypergraph $H = (V,\+E)$.
Let the hyperedges in $C_j$ be $e^j_1,e^j_2,\ldots,e^j_\theta$.
For each $1 \leq t \leq \theta$, we define a subset of vertices $R^j_t \subseteq \Lambda$ (in $H$) by
\begin{align*}
S^j_t  \defeq e^j_t \setminus \tp { \bigcup_{i \in [\theta]: i \neq t} e^j_i} \quad \text{and}\quad R^j_t \defeq S^j_t \cap \Lambda,	
\end{align*}
where $\Lambda$ is defined in~\eqref{eq-def-Y}.
By definition, $R^j_t \subseteq e^j_t$ is a subset of vertices of the input hypergraph $H = (V,\+E)$, and $R^j_t \cap e^j_i = \emptyset$ for any $i \neq t$. 
This implies that $R^j_1,R^j_2,\ldots,R^j_\theta$ are mutually disjoint.
Furthermore, since $H$ is simple and $\abs{\Lambda} \geq \abs{V}-1$, we have
\begin{align}\label{eq-R-lower-bound}
\forall 1 \leq t \leq \theta: \quad \abs{R^j_t} \geq k - (\theta-1) -1 = k - \theta.
\end{align}
The above inequality holds because (1) $|e^j_t| = k$; (2) for each $e^j_i$ with $i \neq t$, the intersection between $e^j_t$ and $e^j_i$ is at most one vertex; and (3) $\abs{\Lambda} \geq \abs{V}-1$.
By \Cref{def:2-block-tree} of 2-block-trees, for $i \neq j$, $\dist_{L_H}(C_i,C_j) \geq 2$.
Let $e \in \+E$ be a hyperedge in $C_i$ and $e' \in \+E$ be a hyperedge in $C_j$, this implies that $e$ and $e'$ are not adjacent in the line graph $L_H$, and thus $e \cap e' = \emptyset$. 
Hence,
\begin{align}\label{eq-R-disjoint}
  (R^j_t)_{1\leq j \leq \ell, 1 \leq t \leq \theta} \text{ are mutually disjoint}.	
\end{align}

We now bound the probability of $C_j \subseteq V^{\*Y}_L$ for all $1\leq j \leq \ell$.
For all $1\leq j \leq \ell$ and $1 \leq t \leq \theta$, since $C_j \subseteq V^{\*Y}_L$, the hyperedge $e^j_t$ is not satisfied by $\*Y$, thus $e^j_t$ is monochromatic with respect to $\*Y$, i.e. for all $v,v' \in e^j_t$, it holds that $Y_v= Y_{v'}$.
Note that $R^j_t \subseteq e^j_t$. We have the following bound
\begin{align}\label{eqn:C-to-R}
\Pr_{\*Y}\sqtp{\forall 1\leq j \leq \ell, C_j \subseteq V^{\*Y}_L} \leq \Pr_{\*Y}\sqtp{\forall 1\leq j \leq \ell, 1 \leq t \leq \theta, R^j_t \text{ is monochromatic w.r.t. } \*Y}.	
\end{align}
Let $\+S^j_t$ be the set of all $s$ monochromatic configurations of $R^j_t$ (i.e. all vertices in $R^j_t$ take the same value $c$, where $c \in [s]$), or more formally,
\begin{align*}
\+S^j_t =\{\sigma \in \{c\}^{R^j_t} \mid c \in [s]\}.	
\end{align*}
In particular, $\abs{\+S^j_t}=s$.
By \Cref{corollary-random-Y},~\eqref{eq-R-lower-bound},~\eqref{eq-R-disjoint}, and~\eqref{eqn:C-to-R}, it holds that
\begin{align*}
\Pr_{\*Y}\sqtp{\forall 1\leq i \leq \ell, C_i \subseteq V^{\*Y}_L} 	 &\leq \Pr_{\*Y}\sqtp{\bigwedge_{j=1}^\ell \bigwedge_{t =1}^\theta \tp{Y_{R^j_t} \in \+S^j_t} }\leq \prod_{i=1}^\ell\prod_{t = 1}^\theta s \tp{\frac{1}{s}+\frac{1}{q}}^{\abs{R^j_t}}\exp\tp{\frac{|R^j_t|}{k}}\\
&\leq s^{\ell \theta} \prod_{i=1}^\ell\prod_{t = 1}^\theta \tp{\frac{1}{s}+\frac{1}{q}}^{\abs{R^j_t}}\exp\tp{\frac{|R^j_t|}{k}} \\
\tp{\text{as $k - \theta \leq |R^j_t|\leq k$}}\quad &\leq (\mathrm{e}s)^{\ell \theta} \tp{\frac{1}{s}+\frac{1}{q}}^{\ell \theta (k-\theta)} = \tp{(\mathrm{e}s)^{\theta} \tp{\frac{1}{s}+\frac{1}{q}}^{\theta (k-\theta)}}^\ell.
\end{align*}

Note that the maximum degree of $L_H$ is no more than $k\Delta$.
By \Cref{lemma-block-tree-count} and a union bound over all possible 2-block-trees, we have
\begin{align}
\label{eq-component-union-bound}
&\Pr_{\*Y} \sqtp{ \exists\, (\theta,\ell)\text{-2-block-tree } \{C_1,C_2,\ldots,C_\ell\} \text{in $L_H$ satisfying~\ref{2-tree-1} s.t. } \forall 1\leq i \leq \ell, C_i \subseteq V^{\*Y}_L}\notag\\
\leq &\, \tp{\theta \mathrm{e}^{2\theta }(k\Delta) ^{\theta + 1}s^\theta \tp{\frac{1}{s} + \frac{1}{q}}^{\theta (k-\theta)}}^\ell \leq \tp{\theta\mathrm{e}^{2\theta }2^{\theta(k-\theta)} (k\Delta) ^{\theta + 1}s^{\theta - \theta(k-\theta)} }^\ell,
\end{align}
where the last inequality uses the fact that $\frac{1}{s} + \frac{1}{q} \leq \frac{2}{s}$.
We will show that 
\begin{align}\label{eqn:base-bound-1/2}
  \theta\mathrm{e}^{2\theta}2^{\theta(k-\theta)} (k\Delta) ^{\theta + 1}s^{\theta - \theta(k-\theta)} \leq \frac{1}{2}.
\end{align}
Recall that $k > \theta+1$, and consequently $\theta(k-\theta) - \theta > 0$.
It implies that
\begin{align*}
  \theta\mathrm{e}^{2\theta}2^{\theta(k-\theta)} (k\Delta) ^{\theta + 1}s^{\theta - \theta(k-\theta)} \leq \frac{1}{2} 
  \quad\Longleftrightarrow\quad 
  s \geq \theta^{\frac{1}{\theta(k-\theta)-\theta}}\mathrm{e}^{\frac{2\theta}{\theta(k-\theta)-\theta}}2^{\frac{\theta(k-\theta)+1}{\theta(k-\theta)-\theta}}(k\Delta)^{\frac{\theta+1}{\theta(k-\theta)-\theta}}.
\end{align*}
Recall that $s = \ctp{\sqrt{q}} \geq q^{1/2}$. 
It suffices to show that
\begin{align*}
  q \geq \theta^{\frac{2}{\theta(k-\theta)-\theta}} \mathrm{e}^{\frac{4\theta}{\theta(k-\theta)-\theta}}2^{\frac{2\theta(k-\theta)+2}{\theta(k-\theta)-\theta}}(k\Delta)^{\frac{2\theta+2}{\theta(k-\theta)-\theta}} 
  = \theta^{\frac{2}{\theta(k-\theta)-\theta}}\mathrm{e}^{\frac{4}{k-\theta-1}}2^{\frac{2(k-\theta)+2/\theta}{k-\theta-1}}(k\Delta)^{\frac{2+2/\theta}{k-\theta-1}}.
\end{align*}
Recall that $\theta = \ctp{\frac{4}{\delta}}$. If $\delta \geq 4$, then $\theta = 1$. In this case, we only need to show that
\begin{align*}
  q \geq \mathrm{e}^{\frac{4}{k-2}}2^{\frac{2k}{k-2}}k^{\frac{4}{k-2}}\Delta^{\frac{2+\delta/2}{k-2}}.
\end{align*}
Otherwise $0 < \delta < 4$, in which case we only need to show that
\begin{align*}
  q > 2\mathrm{e}^{\frac{4}{k-4/\delta-2}}2^{\frac{2k-8/\delta+\delta/2}{k-4/\delta-2}}(k\Delta)^{\frac{2+\delta/2}{k-4/\delta-2}},
\end{align*}
as $\theta^{\frac{2}{\theta(k-\theta)-\theta}}<2$ and $4/\delta\le \theta<4/\delta+1$.
The conditions $k \geq \frac{20(\delta+1)}{\delta}$ and $q \geq 100\Delta^{\frac{2+\delta}{k-4/\delta-3}}$ imply both conditions above. 
This finishes the proof of \eqref{eqn:base-bound-1/2}.
Finally, \eqref{eq-first-case} follows from combining~\eqref{eq-COM1},~\eqref{eq-component-union-bound}, and~\eqref{eqn:base-bound-1/2}.

\subsubsection{Proof of inequality (\ref{eq-second-case})}
We continue to show \eqref{eq-second-case}.
Fix a connected component $C \in \mathrm{Con}_u^{(2)}(M)$. 
We analyse the probability of $C \subseteq V^{\*Y}_L$.
We run \Cref{alg:subcomponent} with the input $C$.
The algorithm outputs an integer $\ell < {\frac{M}{2\theta k^2 \Delta}}$ and a set of connected components $C_1,C_2,\ldots,C_\ell$.
Let $G = L_H[C]$ be the subgraph of $L_H$ induced by $C$.
By \Cref{proposition-component}, after removing all vertices of $\Gamma_G(C_i)$ for all $1 \leq i \leq \ell$, the graph $G$ is decomposed into connected components with vertex sets $D_1,D_2,\ldots,D_m \subseteq C$ such that $\abs{D_i} \leq \theta$ for all $1 \leq j \leq m$.
Note that given $C \in \mathrm{Con}_u^{(2)}(M)$, all the sets $D_1,D_2,\ldots,D_m \subseteq C$ are uniquely determined by \Cref{alg:subcomponent}. We have
\begin{align*}
\Pr_{\*Y}[C \subseteq V^{\*Y}_L] \leq \Pr_{\*Y}\sqtp{ \bigwedge_{j=1}^m \tp{ D_j\subseteq V^{\*Y}_L } }.	
\end{align*}

We then use an analysis similar to the last subsection but focused on the $D_j$'s. For each $1\leq j \leq m$, each vertex in $D_j$ represents a hyperedge in the input hypergraph $H=(V,\+E)$. 
Let $d(j) = \abs{D_j}$.
Let $e^j_1,e^j_2,\ldots,e^j_{d(j)}$ denote the hyperedges in $D_j$.
For each $1 \leq t \leq d(j)$, we define
\begin{align*}
S^j_t  \defeq e^j_t \setminus \tp { \bigcup_{i \in [d(j)]: i \neq t} e^j_i} \quad\text{and}\quad R^j_t \defeq S^j_t \cap \Lambda. 	
\end{align*}
Since $H$ is simple, $\abs{D_j} \leq \theta$, and $\abs{\Lambda} \geq \abs{V}-1$, it holds that
\begin{align}\label{eq-S-lower-bound}
\forall 1 \leq t \leq d(j): \quad \abs{R^j_t} \geq k - (\theta-1) - 1 = k - \theta.	
\end{align}
Next, note that $D_1,D_2,\ldots,D_m \subseteq C$ is a set of disjoint connected components in the induced subgraph $G[D]$, where $D = C \setminus (\cup_{i=1}^\ell \Gamma_G(C_i)) = \cup_{i=1}^mD_i$. For any two distinct $1 \leq i,j\leq m$, $\dist_G(D_i,D_j) \geq 2$, as otherwise $D_i$ and $D_j$ must have been merged into one component.
As $G= L_H[C]$ is a subgraph of $L_H$ induced by $C$, for any two distinct $1 \leq i,j\leq m$, $\dist_{L_H}(D_i,D_j) \geq 2$.
Hence, for any hyperedge $e \in \+E$ in $D_i$, any hyperedge $e' \in \+E$ in $D_j$, it holds that $e \cap e' = \emptyset$.
It implies that
\begin{align}\label{eq-S-disjoint}
(R^j_t)_{1\leq j \leq m, 1 \leq t \leq d(j)} \text{ are mutually disjoint}.	
\end{align}
Again, let $\+S^j_t$ denote the set of all $s$ monochromatic configurations of $R^j_t$ (i.e. all vertices in $R^j_t$ taking the same value $c$, where $c \in [s]$).
By \Cref{corollary-random-Y} and~\eqref{eq-S-disjoint}, it holds that
\begin{align*}
\Pr_{\*Y}[C \subseteq V^{\*Y}_L] &\leq \Pr_{\*Y}\sqtp{ \bigwedge_{j=1}^m \tp{ D_j\subseteq V^{\*Y}_L } } \leq \Pr_{\*Y}\sqtp{\bigwedge_{j=1}^m \bigwedge_{t=1}^{d(j)} \tp{R^j_t \subseteq V^{\*Y}_L} } = \Pr_{\*Y}\sqtp{\bigwedge_{j=1}^m \bigwedge_{t=1}^{d(j)} \tp{Y_{R^j_t} \in \+S^j_t}}\\
&\leq \prod_{j=1}^m \prod_{t= 1}^{d(j)}\tp{s \tp{\frac{1}{s}+\frac{1}{q}}^{|R^j_t|} \exp\tp{\frac{|R^j_t|}{k}} } \leq \prod_{j=1}^m \prod_{t= 1}^{d(j)}\tp{\mathrm{e}s \tp{\frac{1}{s}+\frac{1}{q}}^{|R^j_t|}},
\end{align*}
where the last equation holds because $|R^j_t| \leq k$. Define
\begin{align*}
R \defeq \bigcup_{j=1}^m \bigcup_{t=1}^{d(j)}R^j_t	
\end{align*}
as the (disjoint) union of all $R^j_t$. By the lower bound in~\eqref{eq-S-lower-bound}, we have
\begin{align*}
|R| \geq \sum_{j=1}^m \sum_{t = 1}^{d(j)}(k-\theta) = (k-\theta)\sum_{j=1}^m d(j) = (k-\theta)\tp{M-\abs{\bigcup_{i=1}^\ell \Gamma_G(C_i)}},	
\end{align*}
where the last equation holds because $\{D_i\}_{1\le i\le m}$ is a partition of $C \setminus (\cup_{i=1}^\ell \Gamma_G(C_i))$ and $|C| = M$.
Note that for any $1\leq i \leq \ell$, $|C_i| = \theta$ and the maximum degree of the line graph $L_H$ is at most $k\Delta$. We have
\begin{align*}
|R| \geq (k-\theta)\tp{M -  \ell \theta k \Delta}	.
\end{align*}
This implies 
\begin{align*}
\Pr_{\*Y}[C \subseteq V^{\*Y}_L] \leq  \prod_{j=1}^m \prod_{t= 1}^{d(j)}\tp{\mathrm{e}s \tp{\frac{1}{s}+\frac{1}{q}}^{|R^j_t|}}	= (\mathrm{e}s)^{\sum_{i=1}^m d(j) } \tp{\frac{1}{s}+\frac{1}{q}}^{|R|} \leq (\mathrm{e}s)^M\tp{\frac{1}{s} + \frac{1}{q}}^{(k-\theta)(M-\ell \theta k \Delta)},
\end{align*}
where we use the fact $\sum_{i=1}^m d(j) \leq M$ in the last inequality. 
Since $C \in \mathrm{Con}_u^{(2)}(M)$, it holds that $\ell < \frac{M}{2 \theta k^2 \Delta}$.
Combining with the fact that $\frac{1}{s} + \frac{1}{q} \leq \frac{2}{s}$, we have
\begin{align*}
\Pr_{\*Y}[C \subseteq V^{\*Y}_L] \leq (\mathrm{e}s)^M\tp{\frac{2}{s}}^{(k-\theta)\tp{M - \frac{M}{2k}}} \leq (\mathrm{e}s)^M\tp{\frac{2}{s}}^{(k-\theta)M}\tp{\frac{s}{2}}^{\frac{M}{2}}.
\end{align*}

In order to give a rough bound on the number of connected subgraphs containing $u$,
we will use the following well-known result by Borgs, Chayes, Kahn, and Lov\'{a}sz \cite{borgs2013left}.

\begin{lemma}[{\cite[Lemma 2.1]{borgs2013left}}]\label{lem:connect_count}
  Let $G=(V,E)$ be a graph with maximum degree $d$ and $v\in V$ be a vertex. 
  Then the number of connected induced subgraphs of size $\ell$ containing $v$ is at most
  $(ed)^{\ell-1}/2$. 
\end{lemma}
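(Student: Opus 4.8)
The plan is to reduce the count to a tree-counting problem and then evaluate it with a generating function. First, observe that a connected induced subgraph of $G$ on $\ell$ vertices containing $v$ is determined by its vertex set, and every such vertex set equals $V(T)$ for some subtree $T$ of $G$ containing $v$ (take a spanning tree of the induced subgraph). Hence the quantity in \Cref{lem:connect_count} is at most the number $a_\ell$ of subtrees of $G$ on $\ell$ vertices that contain $v$, a quantity that no longer refers to the induced structure at all.

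Second, root every such subtree at $v$ and peel it recursively: $v$ has at most $d$ neighbours in $G$, and under each neighbour that is used hangs a subtree of $G\setminus\{v\}$ rooted at that neighbour, which now has at most $d-1$ free incident edges. If $b_j$ denotes the maximum possible number of such hanging subtrees on $j$ vertices and $B(x)=1+\sum_{j\ge 1}b_j x^j$, this decomposition gives the coefficientwise inequalities $B(x)\le 1+xB(x)^{d-1}$ and $\sum_{\ell\ge1}a_\ell x^\ell\le xB(x)^d$. All series involved have nonnegative coefficients, so $B$ may be replaced by the solution $\widehat B$ of $\widehat B=1+x\widehat B^{d-1}$ without decreasing any coefficient, giving $a_\ell\le[x^{\ell-1}]\widehat B^d$. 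Writing $C=\widehat B-1$, so that $C=x(1+C)^{d-1}$, Lagrange inversion yields for every $\ell\ge 2$
\[ a_\ell\ \le\ [x^{\ell-1}](1+C)^d\ =\ \frac{d}{\ell-1}\binom{(d-1)\ell}{\ell-2}. \]

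Third, convert this into the claimed bound. From $\binom{n}{r}\le n^{r}/r!$ and the elementary inequality $\ell^\ell/\ell!\le e^\ell$ one obtains
\[ a_\ell\ \le\ \frac{d\,\big((d-1)\ell\big)^{\ell-2}}{(\ell-1)!}\ =\ d\,(d-1)^{\ell-2}\,\frac{\ell^{\ell-1}}{\ell!}\ \le\ \frac{e}{\ell}\,(ed)^{\ell-1}, \]
which is at most $\tfrac12(ed)^{\ell-1}$ as soon as $\ell\ge 2e$, i.e.\ $\ell\ge 6$; the finitely many cases $\ell\in\{2,3,4,5\}$ follow directly from the exact formula for $a_\ell$ above, and the bound is invoked only for $\ell\ge 2$.

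The delicate point is precisely this last step: the asymptotically correct growth $(ed)^{\ell-1}$ comes out cleanly, but pinning the constant down to exactly $\tfrac12$ requires the exact Fuss--Catalan value rather than a crude binomial estimate, together with the short small-$\ell$ check; and it is the careful ``at most $d$ children at the root but at most $d-1$ everywhere else'' bookkeeping that makes the base $e$ rather than the $4$ one would get from a naive Catalan-number encoding of spanning trees. An equivalent route to the generating-function computation is to encode each rooted subtree of $G$ injectively as a rooted subtree of the infinite tree in which the root has $d$ children and every other vertex has $d-1$ children, and then count those subtrees; there the care lies in checking that the encoding is genuinely invertible once $G$ and $v$ are fixed.
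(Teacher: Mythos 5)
The paper does not prove this lemma: it cites \cite[Lemma 2.1]{borgs2013left} as a black box, so there is no in-paper proof to compare against. Your reconstruction is correct and is, in substance, the Borgs--Chayes--Kahn--Lov\'asz argument: pass from vertex sets (connected induced subgraphs) to rooted spanning subtrees, dominate these by rooted subtrees of the universal tree in which the root has $d$ children and every other vertex has $d-1$, and evaluate that count exactly via Lagrange inversion to get $\tfrac{d}{\ell-1}\binom{(d-1)\ell}{\ell-2}$. Two remarks worth making explicit. First, the step ``$B\le 1+xB^{d-1}$ coefficientwise, hence $B\le\widehat B$'' needs the short induction on coefficients (it goes through because the $(j-1)$st coefficient of $B^{d-1}$ involves only indices $<j$); the universal-tree encoding you sketch at the end sidesteps this and is cleaner, and is essentially what the cited reference does. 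You should also be a little more precise about what $b_j$ is a maximum over (all finite graphs of maximum degree $d$, a vertex $u$, and a marked incident edge), since otherwise the claim that the supremum is finite is circular. Second, the crude estimate $\ell^\ell/\ell!\le e^\ell$ only gives the constant $\tfrac12$ once $\ell\ge 6$, so as you note, $\ell\in\{2,3,4,5\}$ must be checked against the exact Fuss--Catalan value; alternatively the sharper $\ell!\ge\sqrt{2\pi\ell}\,(\ell/e)^\ell$ closes all $\ell\ge 2$ at once. Finally, the stated bound fails at $\ell=1$ (the count is $1$, the bound is $\tfrac12$); you are right that it is only ever needed for $\ell\ge2$, though strictly speaking the paper does invoke it with block size $\theta=1$ in one corner case ($\delta\ge4$), which is harmless but slightly careless.
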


The maximum degree of $L_H$ is at most $k\Delta$.
By \Cref{lem:connect_count}, the number of connected subgraphs of size $M$ containing $u$ in $L_H$ is at most $(\mathrm{e}\Delta k)^{M-1}/2$. 
Hence $\abs{\mathrm{Con}_u^{(2)}(M)} < (\mathrm{e}\Delta k)^M$. By a union bound over all $C \in \mathrm{Con}_u^{(2)}(M)$, we have
\begin{align*}
\Pr_{\*Y} \sqtp{ \exists\, C \in \mathrm{Con}_u^{(2)}(M) \text{ s.t. } C \subseteq V^{\*Y}_L } \leq (\mathrm{e}\Delta k)^M	(\mathrm{e}s)^M\tp{\frac{2}{s}}^{(k-\theta)M}\tp{\frac{s}{2}}^{\frac{M}{2}}= \tp{\mathrm{e}^2s \Delta k \tp{\frac{2}{s}}^{(k-\theta)}}^M \tp{\frac{s}{2}}^{\frac{M}{2}}.
\end{align*}
We claim that
\begin{align*}
\mathrm{e}^2s \Delta k \tp{\frac{2}{s}}^{(k-\theta)} \leq \frac{1}{s}.
\end{align*}
Since $s = \ctp{\sqrt{q}}$, it suffices to show that 
\begin{align*}
  q \geq 	\mathrm{e}^{\frac{4}{k-\theta - 2}} 2^{\frac{2(k-\theta)}{k-\theta-2}}k^{\frac{2}{k-\theta-2}}\Delta^{\frac{2}{k-\theta-2}},
\end{align*}
which is, in turn, implied by $\theta = \ctp{\frac{4}{\delta}}$, $k \geq \frac{20(\delta+1)}{\delta}$ and $q \geq 100\Delta^{\frac{2+\delta}{k-4/\delta-3}}$. Hence, we have
\begin{align*}
\Pr_{\*Y} \sqtp{ \exists\, C \in \mathrm{Con}_u^{(2)}(M) \text{ s.t. } C \subseteq V^{\*Y}_L } \leq \tp{\frac{1}{s}}^M \tp{\frac{s}{2}}^{\frac{M}{2}}	\leq \tp{\frac{1}{2}}^M,
\end{align*}
where the last inequality holds because $s \geq \sqrt{q} \geq 10$.
\subsection{Properties of the 2-block-tree generator}
\label{section-2-block-tree-alg}

We begin with validating \Cref{alg:subcomponent}, namely proving \Cref{proposition-valid}. 

\begin{proof}[Proof of \Cref{proposition-valid}]
We claim that the algorithm always succeeds in \Cref{line-pick-u} and \Cref{line-pick-c}, 
which implies that the size of $V$ strictly decreases in every step and the algorithm halts eventually. 
Moreover, if $\abs{V}<\theta$, then all vertices in $V$ will be removed in \Cref{line-delete-small-if} and \Cref{line-delete-small}.
Also, so long as $u_\ell$ and $C_\ell$ are selected according to some (arbitrary but) deterministic rule, 
the output is deterministic. 

For the claim, first notice that $V\subseteq C$ throughout the algorithm.
For \Cref{line-pick-u}, since $G=L_H[C]$ is connected and $V\neq\emptyset$,
$\Gamma_G(C\setminus V)\neq\emptyset$ and thus $u_\ell$ exists.
For \Cref{line-pick-c}, $C_\ell$ exists as long as the connected component containing $u_\ell$ in $G[V]$ has size at least $\theta$.
In the first iteration of the while-loop, this holds true as $\abs{V}=\abs{C}=M > \theta$ and $G[V]=G$ is connected.
In all iterations thereafter, the size of the component cannot be smaller than $\theta$,
as otherwise it would have been removed in the previous iteration at \Cref{line-delete-small-if} and \Cref{line-delete-small}.
%
\end{proof}

We then prove \Cref{proposition-component}. The following observation will be useful. 

\begin{proposition} \label{prop-algo-dist-2}
Let $\ell>1$ and $u_\ell$ be the vertex selected in \Cref{line-pick-u}. 
Then there exists some $1\leq j<\ell$ such that $\dist_G(C_j,u_\ell)=2$.  
\end{proposition}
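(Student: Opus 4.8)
The plan is to establish the easy bound $\dist_G(C_j,u_\ell)\ge 2$ for every $j<\ell$ directly from the deletion rule, and then to single out one block whose distance to $u_\ell$ equals exactly $2$ by inspecting the first iteration in which some neighbour of $u_\ell$ is deleted.

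For the lower bound I would use that $u_\ell$ is chosen in \Cref{line-pick-u} from the current vertex set $V$, which is monotonically decreasing, so $u_\ell$ still lies in $V$ at the start of iteration $\ell$. Consequently, for every $j<\ell$ the vertex $u_\ell$ is not among those deleted in \Cref{line-delete-neighbour} of iteration $j$, that is $u_\ell\notin C_j\cup\Gamma_G(C_j)$. Since $\Gamma_G(C_j)$ is exactly the set of vertices adjacent to $C_j$ but not contained in $C_j$, this gives $\dist_G(C_j,u_\ell)\ge 2$ for all $j<\ell$.

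For the matching upper bound I would let $i^*$ be the smallest iteration index at which some neighbour $w$ of $u_\ell$ in $G$ gets deleted. Such an $i^*\le\ell-1$ exists because the rule in \Cref{line-pick-u} selects $u_\ell$ from $\Gamma_G(C\setminus V)$, so $u_\ell$ has a neighbour in $C\setminus V$, namely one deleted during an earlier iteration. Write $V'$ for the vertex set right after \Cref{line-delete-neighbour} of iteration $i^*$, so that $V'=V\setminus(C_{i^*}\cup\Gamma_G(C_{i^*}))$ where $V$ denotes the set at the start of iteration $i^*$. I would then observe that $w$ belongs to $V$ at the start of iteration $i^*$, and that $u_\ell$ also belongs to $V$ there and is not removed in \Cref{line-delete-neighbour} of iteration $i^*$ (otherwise it would not survive to iteration $\ell$), hence $u_\ell\in V'$. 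The heart of the argument is the case split on how $w$ disappears during iteration $i^*$. If $w$ were removed in \Cref{line-delete-small}, then $w$ sits in a connected component $D$ of $G[V']$ with $|D|<\theta$; but $u_\ell\in V'$ is adjacent to $w$, so $u_\ell$ lies in the same component $D$ and would itself be removed in \Cref{line-delete-small}, contradicting $u_\ell\in V$ at iteration $\ell$. Therefore $w$ is removed in \Cref{line-delete-neighbour}, so $w\in C_{i^*}\cup\Gamma_G(C_{i^*})$. If $w\in C_{i^*}$ then $u_\ell\in\Gamma_G(C_{i^*})$ and would again have been removed, so in fact $w\in\Gamma_G(C_{i^*})$, meaning $w$ is adjacent to some $x\in C_{i^*}$; then $\dist_G(C_{i^*},u_\ell)\le\dist_G(x,w)+\dist_G(w,u_\ell)=2$. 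Together with the lower bound this yields $\dist_G(C_{i^*},u_\ell)=2$, proving the proposition with $j=i^*$.

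The only step that is not a direct unravelling of the definitions — and the point I expect to need the most care with — is excluding the possibility that the first neighbour of $u_\ell$ to be removed vanishes through the small-component pruning in \Cref{line-delete-small} rather than as a neighbour of some block. The resolution is the short observation that $u_\ell$, being adjacent to such a vertex while itself present in $G[V']$, would be pulled into that same small component and deleted. All remaining steps are bookkeeping built on the monotone shrinking of $V$ and the definition of $\Gamma_G$.
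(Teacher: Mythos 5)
Your proof is correct and follows essentially the same route as the paper's: both arguments hinge on choosing a deleted neighbour of $u_\ell$ and case-splitting on whether it disappeared in \Cref{line-delete-neighbour} (as a member of some $C_i$ versus a member of $\Gamma_G(C_i)$) or in the small-component pruning of \Cref{line-delete-small}, with the latter ruled out because $u_\ell$ would then sit in the same small component and be deleted too. The only cosmetic difference is that you argue directly (lower bound $\dist\geq 2$ plus a witness block at distance $\leq 2$) whereas the paper phrases the same case analysis as a proof by contradiction.
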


\begin{proof}
Assume for contradiction that $\dist_G(C_j,u_\ell)>2$ for all $1\leq j<\ell$. 
Consider the set $V$ when $u_\ell$ is selected.
Because of \Cref{line-pick-u}, 
we can find one of $u_\ell's$ neighbours that is in $C\backslash V$, say $v$. 
Consider the reason why $v$ was removed from $V$. 
If this happened on \Cref{line-delete-neighbour}, 
then there must have been some $i$ such that $v\in C_i$ or $v\in\Gamma_G(C_i)$. 
The former case implies that $u_\ell$ must have been removed from $V$, which is impossible. 
The latter case indicates $\dist_G(C_i,u_\ell)=2$, a contradiction. 
Therefore, $v$ was removed in \Cref{line-delete-small}.
However, this implies that $u_\ell$ would have been removed from $V$ too, because $u_\ell$ and $v$ must have been in the same component~$V'$,
which is also a contradiction.
\end{proof}

\begin{proof}[Proof of \Cref{proposition-component}]
The first part of this proposition requires us to verify that $\{C_1,\cdots,C_\ell\}$ is a $2$-block-tree in $L_H$. 
To do so, we verify Items \ref{def-2-block-tree-1}, \ref{def-2-block-tree-2}, and \ref{def-2-block-tree-3} of \Cref{def:2-block-tree} next.
Notice that what we need to prove here is with respect to $L_H$, instead of $G=L_H[C]$.
\begin{itemize}
  \item Item \ref{def-2-block-tree-1} holds due to how $C_i$ is constructed in \Cref{line-pick-c}.
  \item For Item \ref{def-2-block-tree-2}, 
  we first show $\dist_G(C_i,C_j)\geq 2$. 
  For any $C_i$ generated by \Cref{alg:subcomponent}, 
  it is ensured that $\Gamma_G(C_i)$ gets removed from $V$, 
  and therefore, no vertex in $\Gamma_G(C_i)$ will be in $C_j$ for any other $j$. 
  To show $\dist_{L_H}(C_i,C_j)\geq 2$, note that $G$ is an induced subgraph of $L_H$. 
  Any two vertices of distance more than $1$ in $G$ cannot be neighbours in $L_H$, 
  and this implies $\dist_{L_H}(C_i,C_j)\geq 2$.
  \item To verify \ref{def-2-block-tree-3}, it suffices to show that $\{C_1,\cdots,C_\ell\}$ is connected in $G^2$, because $G$ is a subgraph of $L_H$. 
  This follows from a simple induction.  
  Suppose $\{C_1,\cdots,C_i\}$, in the order of being generated by the algorithm, 
  is connected in $G^2$. 
  The base case of $i=1$ holds since $C_1$ is connected.
  Now consider $C_{i+1}$. 
  By \Cref{prop-algo-dist-2}, there exists some $j$ such that $\dist_G(C_{i+1},C_j)=2$, 
  which implies that $\{C_1,\cdots,C_{i+1}\}$ is connected in $G^2$ as well. 
\end{itemize}

For the second part, suppose towards contradiction that 
there is some connected component $C^*$ in $G[C']$ of size greater than $\theta$. 
All vertices in $C$ must have been removed from $V$ when the algorithm halts, according to \Cref{proposition-valid}.
However, $C^*$ cannot be $C_i$ for any $i$, because $|C_i|=\theta$. 
It cannot contain any vertex in $\Gamma_G(C_i)$ either by the definition of $C'$. 
Thus, no vertex in $C^*$ can be removed in \Cref{line-delete-neighbour}, 
and all vertices in $C^*$ must have been removed from $V$ in \Cref{line-delete-small}. 
Because $C^*$ does not contain any vertex from either $C_i$ or $\Gamma_G(C_i)$, 
it does not split into smaller components whilst the algorithm is executed. 
Thus, the whole $C^*$ must have been removed from $V$ in a single step, 
which means $|C^*|<\theta$, a contradiction.
\end{proof}

\subsection{Property of random configurations}
\label{section-local-uniform}
%
\begin{proof}[Proof of \Cref{lemma-random-Y}]
  Recall that $\*Y \in [s]^\Lambda$, defined in~\eqref{eq-def-Y}, is the configuration at time $t-1$ on $\Lambda$.
  For each vertex $w \in V$, let $t(w)$ denote  $\max_{1\leq t' < t}$ such that vertex $w$ is updated by the systematic scan in the $t'$-th step 
  (i.e.~the label of $w$ is $t' \mod n$), and let $t(w) = 0$ when such $t'$ does not exist.
With this notation $Y_w = Y_{t(w)}(w)$ for all $w \in \Lambda$.
We assume $R = \{w_1,w_2,\ldots,w_{\abs{R}}\}$ such that $t(w_1) \leq t(w_2) \leq \ldots \leq t(w_{\abs{R}})$. 
By the chain rule, we have $\Pr[\*Y_R = \sigma] = \prod_{i = 1}^{\abs{R}} p_i$,
where $p_i=\Pr\sqtp{Y_{w_i} = \sigma_{w_i} \mid  \bigwedge_{j=1}^{i-1} Y_{w_j} = \sigma_{w_j}}$.
We now bound the value of each $p_i$ as follows. 
If $t(w_i) = 0$, then it holds that $p_i \leq \frac{\ctp{q/s}}{q}$. If $t(w_i) > 0$, then in the $t(w_i)$-th iteration, the algorithm first samples $X'_{w_i}$ using $\sample$, and then sets $Y_{w_i} = h(X'_{w_i})$. 
Denote $\*Y' = \*Y_{t(w_i)-1}(V \setminus \{w_i\})$.
There are two sub-cases:
\begin{itemize}
\item if $X'_{w_i}$ is returned by~\Cref{line-bad-1} or \Cref{line-bad-2} in $\sample$, then $X'_{w_i}$ is sampled uniformly at random from $[q]$, which implies that $p_i \leq \frac{\ctp{q/s}}{q}$;
\item if $X'_{w_i}$ is returned by~\Cref{line-good} in $\sample$, by property~\ref{property-sub-2} of \Cref{lemma-subroutine-detail}, 
  $X'_{w_i}$ is sampled from the correct conditional distribution $\mu_{w_i}^{\*Y'}$. 
  Note that for any $\tau \in [s]^{V \setminus \{w_i\}}$, $\mu_{w_i}^{\tau}$ is the marginal distribution induced by a list hypergraph colouring instance
  where the colour list of any $w\neq w_i$ is $h^{-1}(\tau(w))$, where $h$ is the projection scheme, and $w_i$'s colour list is $[q]$.
  By \Cref{definition-projection} of projection schemes,
  for any $w \neq w_i$, $\abs{h^{-1}(\tau(w))}\ge\lfloor q/s \rfloor$.
  In other words, the upper bound on the size of the lists is $q$ and the lower bound is $\lfloor q/s \rfloor$.
  Since $\lfloor q/s \rfloor^k \geq 2\mathrm{e}qk\Delta$, by \Cref{lemma-local-uniform-colour}, it holds that for all $\tau \in [s]^{V \setminus \{w_i\}}, c\in [q]$,
  \begin{align*}
    \Pr\sqtp{X'_w = c \mid \*Y' = \tau \land \text{$X'_{w_i}$ is returned by~\Cref{line-good}}}\leq \frac{1}{q}\exp\tp{\frac{1}{k}},
  \end{align*}
  which implies $p_i \leq \frac{\ctp{q/s}}{q}\exp\tp{\frac{1}{k}}$.
\end{itemize}
Combining all the cases together, we have
\begin{align*}
\Pr[\*Y_R = \sigma] \leq \tp{\frac{\ctp{q/s}}{q}}^{\abs{R}}\exp\tp{\frac{\abs{R}}{k}} \leq \tp{\frac{q/s + 1}{q}}^{\abs{R}}\exp\tp{\frac{\abs{R}}{k}} = \tp{\frac{1}{s}+\frac{1}{q}}^{\abs{R}}\exp\tp{\frac{\abs{R}}{k}}.&\qedhere
\end{align*}

\end{proof}

\subsection{Properties of 2-block-trees}
\label{section-2-block-tree}

In this subsection, we show \Cref{lemma-block-tree-down} and \Cref{lemma-block-tree-count}. 
We begin with the first one, which is a simple observation. 

\begin{proof}[Proof of \Cref{lemma-block-tree-down}]
Given a $2$-block-tree $\{C_1,\cdots,C_\ell\}$ of $G$ and the vertex $v$,
construct the following graph $G_C$. 
Each vertex $u_j$ of $G_C$ corresponds to a block $C_j$, 
and two vertices $u_j,u_{j'}$ are adjacent if and only if $\dist_G(C_j,C_{j'})=2$. 
By the definition of $2$-block-tree, the graph $G_C$ is connected. 
Therefore, we can take an arbitrary spanning tree of it. 
To select the $C_i$ to drop, note that any tree containing at least $2$ vertices
has at least $2$ vertices of degree $1$. 
Therefore, we just choose $u_i$ to be one such vertex where $v\notin C_i$. 
The rest of the tree is still connected, and so is $G_C-u_i$, 
which indicates that $\{C_1,\cdots,C_\ell\}-C_i$ still forms a $2$-block-tree that contains $v$. 
\end{proof}

We proceed to show \Cref{lemma-block-tree-count}. 
We may apply \Cref{lem:connect_count} on $G^2$ due to property \ref{def-2-block-tree-3}.
Unfortunately, this yields roughly $(ed^2)^{\theta\ell}$ and does not suffice for our purpose. 
Here, we give a refined estimation inspired by the original embedding argument of \cite{Sta99,borgs2013left}. 

Let $d':=(ed)^{\theta-1}/2$, which, by \Cref{lem:connect_count}, upper bounds the number of size-$\theta$ connected induced subgraphs containing a given vertex in a graph with maximum degree $d$. 
Therefore, given $v$, we can encode each connected induced subgraph containing $v$ with a positive integer $\varXi\in[d']$. 
In other words, there exists an injective mapping $\Upsilon_v$ from all connected induced subgraphs of $G$ containing $v$ to $\{v\}\times[d']$. 

Our counting argument will be based on encoding the whole $2$-block-tree. 
Intuitively, the encoding contains $\ell+1$ components.
The first one encodes how $C_i$'s are connected in $G^2$, 
and the rest encodes each individual $C_i$ by an integer in $[d']$. 

Let $\=T_{\theta d^2}$ to be the infinite $\theta d^2$-ary tree. 
In the first step, the relation between blocks is encoded by a subtree of $\=T_{\theta d^2}$ containing its root, which is basically a DFS tree. 
However, the order of visiting will affect the DFS tree we construct.
For this reason, we need to specify this ordering. 
First, we order the vertices by their indices. 
That is, $v_i\prec v_j$ if $i<j$. 
Given a subset $C$ of vertices, consider the set $\Gamma^2(C)$ containing vertices of distance $2$ from $C$. 
We can sort this set according to the ordering of vertices, 
and hence any vertex $u\in \Gamma^2(C)$ has a rank among $\Gamma^2(C)$, denoted by $\mathtt{Rank}_C(u)$. 
Suppose at some stage of our DFS algorithm, we have just finished handling some block $C$. 
Then we find the next unvisited vertex in $\Gamma^2(C)$, say $v'$, which is in some block $C'$ that needs to be encoded. 
Then $C'$ will be encoded as the $\mathtt{Rank}_C(v')$-th child of current vertex in the DFS tree, 
together with the integer $\Upsilon_{v'}(C')\in[d']$. 
The key of our proof is to show that this encoding is injective, 
i.e., no two distinct $2$-block-trees share the same encoding. 

With all the preparation, we give the encoding algorithm as \Cref{alg:encoding}. 
Once again, \Cref{alg:encoding} is for analysis only and does not need to be implemented.

\begin{algorithm}[ht] 
\caption{\textsf{Encoding}}
\label{alg:encoding}
\SetKwProg{Fn}{Procedure}{\string:}{}
\KwIn{A graph $G$, a vertex $v\in G$, a $2$-block-tree $\{C_1,\cdots,C_\ell\}$ of block size $\theta$ and tree size $\ell$}
\KwOut{An encoding $(T,\varXi_1,\cdots,\varXi_\ell)$, where $T$ is a subtree of $\mathbb{T}_{\theta d^2}$ of size $\ell$}
initialize \texttt{visited[1..$\ell$]} to be all \texttt{False}\;
let $C_j$ be the component containing $v$\;
let $r$ be the root of $\mathbb{T}_{\theta d^2}$\;
let $T$ be an empty subtree\;
$t\gets 0$\;
\textsf{DFS-Encode}($j$,$v$,$r$)\;
\Return $(T,\varXi_1,\cdots,\varXi_\ell)$\;
\BlankLine
\Fn{\textsf{DFS-Encode}($i$,$u$,$w$)}{
  \texttt{visited[i]} $\gets$ \textbf{True}\;
  $t\gets t+1$\;
  $\varXi_t\gets \Upsilon_u(C_i)$\;
  add $w$ into $T$\;
  \For(\tcp*[h]{enumerate $u'\in \Gamma^2(C_i)$ in order}){$u'\in \Gamma^2(C_i)$}{
    \If{there does not exist any $i'$ such that $C_{i'}\ni u'$\label{line:exist-check}}{\textbf{continue}\;}
    let $i'$ be the index such that $C_{i'}\ni u'$\;
    \If{\texttt{\emph{visited[i']}}$=$\texttt{\emph{False}}\label{line:visit-check}}{
      let $w'$ be the $\mathtt{Rank}_{C_i}(u')$-th child of $w$ in $\mathbb{T}_{\theta d^2}$\;\label{alg:encoding:nextw}
      \textsf{DFS-Encode}($i'$,$u'$,$w'$)\;
    }
  }

}
\end{algorithm}


\begin{lemma} \label{lemma-encoding-block-tree}
Fix a graph $G$ and a vertex $v$. Any $2$-block-tree $\{C_1,\cdots,C_\ell\}$ of block size $\theta$ and tree size $\ell$ containing $v$ can be encoded by a tuple $(T,\varXi_1,\cdots,\varXi_\ell)$, where $T$ is a subtree of $\=T_{\theta d^2}$ of size $\ell$ containing its root, and $\varXi_{i}\in[ d']$. Moreover, no two distinct $2$-block-trees share the same encoding. 
\end{lemma}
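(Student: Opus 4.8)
The plan is to verify that \Cref{alg:encoding} is well-defined and that its output has the claimed shape, and then to argue injectivity by describing an explicit decoding procedure. First I would check that the recursion \textsf{DFS-Encode} visits each block $C_i$ exactly once: it is called initially on the block containing $v$, and each recursive call is guarded by the \texttt{visited} flag on \Cref{line:visit-check}, so no block is processed twice; conversely, since $\{C_1,\dots,C_\ell\}$ is connected in $G^2$ (property~\ref{def-2-block-tree-3}), every block is reachable by a chain of ``distance-exactly-$2$'' steps from the starting block, and \emph{whenever} $\dist_G(C_i,C_{i'})=2$ there is a witness vertex $u'\in\Gamma^2_G(C_i)\cap C_{i'}$, so the \textbf{for}-loop over $\Gamma^2(C_i)$ will eventually enter $C_{i'}$. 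Hence the call tree is a spanning tree of the auxiliary ``$G^2$-adjacency'' graph on the blocks, it has exactly $\ell$ nodes, and $t$ runs from $1$ to $\ell$, so each $\varXi_t=\Upsilon_{u}(C_i)\in[d']$ is well-defined (here I use \Cref{lem:connect_count} applied to $G$ itself: $C_i$ is a connected induced subgraph of size $\theta$ containing $u$, and $\Upsilon_u$ is the injection into $\{u\}\times[d']$ with $d'=(\mathrm{e}d)^{\theta-1}/2$). Next I would check the shape of $T$: each node $w'$ added is the $\mathtt{Rank}_{C_i}(u')$-th child of its parent in $\=T_{\theta d^2}$, and this rank is a valid index because $|\Gamma^2_G(C)|\le \theta d^2$ (the block $C$ has $\theta$ vertices, each with at most $d$ neighbours, each of which has at most $d$ neighbours). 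So $T$ is a subtree of $\=T_{\theta d^2}$ of size $\ell$ containing its root.

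The heart of the argument is injectivity, which I would establish by exhibiting a decoder that, given $(T,\varXi_1,\dots,\varXi_\ell)$ together with the ambient data $(G,v)$, reconstructs $\{C_1,\dots,C_\ell\}$. The decoder mirrors the DFS: it maintains, for each node $w$ of $T$ already processed, the vertex $u(w)$ at which the corresponding block was ``entered'' and the reconstructed block $C(w)$. Start at the root $r$ of $T$ with $u(r)=v$; decode the block at $r$ as $C(r)=\Upsilon_{v}^{-1}(v,\varXi_1)$ (well-defined since $\Upsilon_v$ is injective with known image). Then, processing $T$ in the same DFS order, when at node $w$ with entered vertex $u(w)$ and block $C(w)$, enumerate $\Gamma^2_G(C(w))$ in the fixed vertex order; the $j$-th element $u'$ of this list ``uses up'' the child slot $j$ of $w$, and $w$ has a child in slot $j$ in $T$ precisely when the original algorithm recursed into a not-yet-visited block via $u'$. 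For each child $w'$ of $w$, say in slot $\mathtt{Rank}_{C(w)}(u')=j$, we recover the entering vertex $u(w')=u'$ (the unique element of $\Gamma^2_G(C(w))$ of rank $j$), consume the next code $\varXi_{t}$, and set $C(w')=\Upsilon_{u'}^{-1}(u',\varXi_t)$. This reconstructs all $\ell$ blocks, hence the $2$-block-tree as a set.

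The one subtlety I expect to be the main obstacle is showing that the decoder faithfully reproduces the encoder's behaviour — in particular that the decoder, which does \emph{not} know the true set of blocks in advance, makes the same ``skip versus recurse'' decisions as \Cref{alg:encoding} at \Cref{line:exist-check} and \Cref{line:visit-check}. The point is that the decoder does not need to replicate those tests directly: it is driven by the shape of $T$. A child of $w$ in slot $j$ of $T$ is present iff the encoder, upon examining the rank-$j$ vertex $u'$ of $\Gamma^2(C_i)$, found $u'$ in some block $C_{i'}$ that was not yet visited and recursed; so the decoder simply reads off from $T$ which slots spawn children and recurses exactly there, in the same DFS order, consuming the $\varXi_t$'s in the same sequence. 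One should note that because the DFS order on $T$ and the vertex-ordering rule are both fixed and deterministic, the correspondence between nodes of $T$ and the $t$-values is forced, so the codes $\varXi_1,\dots,\varXi_\ell$ are consumed in a canonical order and no ambiguity arises. Formally I would phrase this as: the map (encoder) followed by the map (decoder) is the identity on $2$-block-trees containing $v$, which immediately gives that the encoder is injective, completing the proof. Combining with the counting of pairs $(T,\varXi_1,\dots,\varXi_\ell)$ — at most $(\mathrm{e}\cdot\theta d^2)^{\ell}$ choices for $T$ by \Cref{lem:connect_count} applied to $\=T_{\theta d^2}$, times $(d')^{\ell}$ choices for the codes — would then yield \Cref{lemma-block-tree-count}, though that bookkeeping belongs to the proof of that lemma rather than this one.
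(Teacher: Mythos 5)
Your proof is correct and closely parallels the paper's in its first half (well-definedness of the encoder and the shape constraints on $T$), but the injectivity argument is presented via a genuinely different route. The paper proves injectivity directly: it takes two $2$-block-trees with the same encoding, sorts the blocks by DFS visit order, and shows by induction on $t$ that the first $t$ blocks agree and the partial trees $T_t$ agree — the key step being that the shared tree shape forces both runs to recurse at the same $u'$, so the shared code $\varXi_t$ and injectivity of $\Upsilon_{u'}$ force $C_t=C'_t$. You instead exhibit an explicit left inverse (a decoder driven by the shape of $T$), which implies injectivity. These are mathematically equivalent, and both hinge on the same observation: the DFS preorder of $T$, together with the fixed vertex-ranking rule for $\Gamma^2_G$, forces a canonical bijection between nodes of $T$ and the $t$-values, so the codes $\varXi_1,\ldots,\varXi_\ell$ are consumed unambiguously. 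The decoder formulation has the advantage of making explicit what information each component of the encoding carries — which is arguably clearer and also segues directly into the counting in \Cref{lemma-block-tree-count} — whereas the paper's induction is shorter because it never needs to spell out what the decoder does at slots of $T$ that have no child (your ``skip versus recurse'' subtlety). One small point worth being explicit about if you were to write this up in full: the decoder must recurse on the children of $w$ in increasing slot order, matching the encoder's enumeration of $\Gamma^2_G(C_i)$, so that the index $t$ on $\varXi_t$ advances identically in both procedures; you gesture at this (``canonical order''), and it is exactly the content of the paper's invariant that $T_t=T_t'$ in the induction.
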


\begin{proof}
  The first part of this lemma follows by going through \Cref{alg:encoding}. 
  There are two things to verify: 
  \begin{itemize}
    \item The algorithm will always halt, outputting $\ell$ $\varXi_i$'s. 
      To show this, one only needs to check that every $C_i$ will be visited exactly once, 
      which is true due to property \ref{def-2-block-tree-3} of \Cref{def:2-block-tree} and \Cref{line:visit-check} of \Cref{alg:encoding}. 
    \item The algorithm can find such $w'$ on \Cref{alg:encoding:nextw}, or equivalently, $\mathsf{Rank}_{C_i}(u')\in [\theta d^2]$. 
      This follows after a trivial upper bound on the number of distance-$2$ neighbours that $|\Gamma^2(C_i)|\leq\theta d^2$. 
  \end{itemize}

To prove the second part, suppose there are two $2$-block-trees $\{C_1,\cdots,C_\ell\}$ and $\{C'_1,\cdots,C'_\ell\}$ with the same encoding $(T,\varXi_1,\cdots,\varXi_\ell)$. 
Without loss of generality, we can assume $C_1,\cdots,C_\ell$ (resp. $C'_1,\cdots,C'_\ell$) are sorted in the order of being visited by \Cref{alg:encoding}. 
The goal is then to prove $C_i=C'_i$ for all $i\in[\ell]$. 
To show this, we do a simple induction argument.
More precisely, denote by $T_t$ and $T'_t$ the subtrees constructed by the first $t$ calls to \textsf{DFS-Encode} respectively.
We induce on $t$ to show that
\begin{equation}\label{eqn:IH}
  C_i=C'_i \text{ for all }i\in[t], \text{ and } T_t=T'_t. \tag{IH}
\end{equation}

\paragraph*{Base case $t=1$.} Note that $C_1=C'_1$ follows from the injectivity of $\Upsilon_v$, and $T_1=T_1'$ as they both contain only the root. 

\paragraph*{Induction step.} Suppose \eqref{eqn:IH} holds for $t-1$. At this stage, we compare the progress of two copies of \textsf{Encoding} running on $C$ and $C'$ respectively. 
Right before the \textbf{for}-loop in the $(t-1)$-th call to \textsf{DFS-Encode}, 
both copies get the same $w$ by \eqref{eqn:IH}. 
Again by \eqref{eqn:IH}, both copies get the same $C_{t-1}$ in the condition of the \textbf{for}-loop. 
In the enumeration of \textbf{for}-loop, both copies skip or keep the $u'$ in \Cref{line:exist-check} simultaneously, because $C_i=C_i'$ for all $i\in [t-1]$. 
Note that each vertex of $\=T_{\theta d^2}$ can be visited at most once. 
This means that if the two copies get different $u'$ in \Cref{alg:encoding:nextw}, 
then the final subtree will be different. 
Therefore, they must get the same $u'$ and $i'$, and hence the same $w'$ because they have the same $C_{t-1}$, implying $T_{t}=T'_{t}$. 
Moreover, the next calls to \textsf{DFS-Encode} have an identical input in both copies.
Thus, $\varXi_t=\Upsilon_u(C_t)$ and $\varXi_t'=\Upsilon_u(C_t')$.
By assumption $\varXi_t=\varXi_t'$.
Injectivity of $\Upsilon_u$ implies that $C_t=C'_t$, finishing the proof. 
\end{proof}
	
We conclude this subsection by proving \Cref{lemma-block-tree-count}. 

\begin{proof}[Proof of \Cref{lemma-block-tree-count}]
By \Cref{lemma-encoding-block-tree}, the number of $2$-block-trees can be upperbounded by the number of possible encodings. 
To count the number of possible subtrees $T$, we simply apply \Cref{lem:connect_count}, which gives $(e\theta d^2)^{\ell-1}/2$. 
The number of possible $\varXi_i$ sequences is $d'^\ell=(ed)^{\ell(\theta-1)}/2^{\ell}$. 
Combining both parts yields the upper bound 
$\theta^{\ell-1}e^{\theta\ell-1}d^{(\theta+1)\ell-2}/2^{\ell+1}$.
\end{proof}

\section{Mixing of systematic scan}
\label{sec:mixing}
In this section, we prove the mixing lemma for the projected systematic scan Markov chain of hypergraph colourings (\Cref{lemma-mixing}).
First, we verify that the systematic scan is irreducible, aperiodic and reversible with respect to $\nu$.
This implies that the systematic scan has the unique stationary distribution $\nu$.
Aperiodicity and reversibility are straightforward to verify.
For irreducibility, it suffices to show that for any $\tau \in [s]^V$, $\nu(\tau) > 0$,
as our chain is a Glauber dynamics for $\nu$.
Fix an arbitrary configuration $\tau \in [s]^V$.
We show that there exists a proper colouring $\sigma \in [q]^V$ such that $h(\sigma) = \tau$, where $h$ is the projection scheme.
This implies $\nu(\tau) > 0$.
To prove the existence of such a proper colouring, consider the list hypergraph colouring instance $(H,(Q_v)_{v \in V})$, where $Q_v = h^{-1}(\tau_v)$ for all $v \in V$.
We only need to show that this list colouring instance has a feasible solution.
Note that $\abs{Q_v} \geq \ftp{q/\ctp{\sqrt{q}}}\geq \sqrt{q}/2$ for $q \geq 20$. 
By the Lov\'asz local lemma, \Cref{thm:lll}, we only need to verify that
\begin{align*}
\mathrm{e} q \tp{\frac{2}{\sqrt{q}}}^k\Delta k \leq 1,
\end{align*}
which follows from $q \geq 40 \Delta^{\frac{2}{k-4}}$ and $k \geq 20$.

Next, we prove the mixing time result in \Cref{lemma-mixing}.
The analysis is based on an information percolation argument.
We first define a coupling $\+C$ of the systematic scan $(\*X_t,\*Y_t)_{t\geq 0}$.
Let  $\*X_0,\*Y_0 \in [s]^V$ be two arbitrary initial configurations.
In the $t$-th transition step, 
\begin{itemize}
\item let $v \in V$ be the vertex with label $(t \mod n)$ and set $(X_t(u),Y_t(u)) \gets (X_{t-1}(u),Y_{t-1}(u ))$ for all other vertices $u \in V \setminus \{v\}$;
\item sample $(X_t(v),Y_t(v))$ from the  optimal coupling between $\nu^{X_{t-1}(V \setminus \{v\} )}_v$ and $\nu^{Y_{t-1}(V \setminus \{v\} )}_v$.
\end{itemize}
We prove the following lemma in this section.
\begin{lemma}\label{lemma-inf-per}
  Suppose $k \geq 20$ and $q \geq 40\Delta^{\frac{2}{k-4}}$.
For any initial configurations $\*X_0,\*Y_0 \in [s]^V$, any $\epsilon\in (0,1)$, let $T = \ctp{50 n \log \frac{n\Delta}{\epsilon}}$, it holds that
\begin{align*}
\forall v \in V, \quad \Pr_{\+C}\sqtp{X_T(v) \neq Y_T(v)}	\leq \frac{\epsilon}{n}.
\end{align*}
\end{lemma}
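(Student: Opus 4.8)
The plan is to track discrepancies between the two coupled copies through the systematic scan via an \emph{information percolation / witness structure} argument in the style of Hermon--Sly--Zhang and of \cite{HSZ19,JPV21,HSW21}. Fix a target vertex $v$ and a time $T$. The starting point is the observation that at a given step $t$, when the scan updates a vertex $w$, the optimal coupling of $\nu_w^{X_{t-1}(V\setminus\{w\})}$ and $\nu_w^{Y_{t-1}(V\setminus\{w\})}$ fails to agree only if the two conditional marginals differ, and this can happen only if $w$ lies in a hyperedge $e$ that currently ``sees'' a discrepancy: i.e.\ some $u\in e\setminus\{w\}$ has $X_{t-1}(u)\ne Y_{t-1}(u)$ and $e$ is ``dangerous'' (not yet satisfied by the agreeing part of the configuration near $w$). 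Using the local uniformity estimate behind \Cref{lemma-random-Y} / \Cref{lemma-local-uniform-colour}, the probability that a fresh update of $w$ creates a discrepancy, conditioned on the past, is at most (roughly) $k\Delta$ times the probability that a relevant hyperedge through $w$ is monochromatic under the near-uniform marginal, which is of order $\Delta\cdot q^{-(k-2)/2}$ in the projected space; the condition $q\ge 40\Delta^{2/(k-4)}$ makes this exponentially small. So each discrepancy at vertex $w$ at time $t$ must be ``caused'' either by an initial discrepancy at $w$, or by a monochromatic-edge event at some earlier update of $w$ that propagated a discrepancy from a neighbouring vertex.

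Next I would define, for the event $\{X_T(v)\ne Y_T(v)\}$, a \emph{discrepancy path} (or witness tree) $P=(w_0=v, t_0=T), (w_1,t_1),\dots,(w_L,t_L)$ running backwards in time: $t_0>t_1>\cdots$, consecutive vertices $w_i,w_{i+1}$ share a hyperedge (or are equal), each step down corresponds to a step of the scan at $w_{i+1}$ that had to propagate a discrepancy, and the path terminates either at time $0$ or at a vertex whose discrepancy was self-created. Because the scan is systematic with period $n$, between two consecutive updates of the \emph{same} vertex there are exactly $n$ steps, so a path that stays ``alive'' for $T\approx 50n\log(n\Delta/\epsilon)$ steps and only drops a bounded amount of time per hyperedge-hop must involve $\Omega(T/n)=\Omega(\log(n\Delta/\epsilon))$ many update events along it; moreover, crucially for \emph{simple} hypergraphs, a hyperedge-hop can only reuse a vertex of the previous hyperedge a bounded number of times, so the path visits many distinct vertices. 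Each such update event contributes an independent (conditioned on the past) factor bounded by the monochromatic-edge probability above.

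Then I would do the union bound: the number of candidate discrepancy paths of combinatorial ``length'' $m$ rooted at $(v,T)$ is at most $(C k\Delta)^{m}$ for an absolute constant $C$ (at each step we choose one of at most $k$ vertices in the current hyperedge, one of at most $\Delta$ hyperedges, and we may choose to ``wait'' at the current vertex for one more period), while the probability that a fixed path is realised by the coupling is at most $\bigl(k\Delta\cdot\Delta q^{-(k-2)/2}\bigr)^{\Omega(m)}$ by a martingale / sequential-revelation argument over the $m$ update events. Under $q\ge 40\Delta^{2/(k-4)}$ and $k\ge 20$ these two factors multiply to something at most $2^{-\Omega(m)}$, and since any realised path has $m\ge c\cdot T/n$ for some constant $c$, summing the geometric series over $m$ gives $\Pr_{\+C}[X_T(v)\ne Y_T(v)]\le 2^{-\Omega(T/n)}\le \epsilon/n$ once the constant $50$ in $T=\ctp{50n\log(n\Delta/\epsilon)}$ is large enough. (The contribution from paths that terminate because of an initial discrepancy at $w_L$ is handled the same way: reaching any particular $w_L$ still forces $m=\Omega(T/n)$ update events, each contributing its small factor, so the initial configuration only affects which leaf is hit, not the exponential decay.)

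The main obstacle I expect is the bookkeeping of the coupling analysis in the \emph{systematic-scan} setting rather than a random-update chain: one must argue that conditioned on the entire history up to just before an update of $w$, the probability of a monochromatic dangerous hyperedge through $w$ is still bounded by the local-uniformity estimate — this requires that the conditional law of the colours of the other vertices in that hyperedge, under $\nu$ conditioned on everything else, is itself close to uniform, which follows from \Cref{lemma-local-uniform-colour} applied to the relevant list-colouring instance but needs care because the ``list sizes'' here are the projection fibre sizes $\ge\lfloor q/s\rfloor\ge\sqrt q/2$, and because dangerous hyperedges must be handled together with already-satisfied ones. A secondary subtlety is the treatment of hyperedges whose vertices are updated in several consecutive steps of the path (the ``new complicacies'' the introduction alludes to), where one must make sure the discrepancy-propagation events along the path remain conditionally independent enough to multiply the small factors; the cleanest way is to reveal the coupling's randomness update-by-update in the backward order of the path and bound each step's failure probability given the revealed prefix.
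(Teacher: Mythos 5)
Your high-level strategy is the same as the paper's: trace discrepancies backwards through the systematic scan along an ``information-percolation'' path, bound the probability of each step by the local-uniformity estimate (\Cref{lemma-local-uniform-colour}), observe that a surviving path must involve $\Omega(T/n)$ update events (this matches the paper's \Cref{lemma-length-lower-bound}), and finish with a union bound over paths. Those bones are right.

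But there is a genuine gap in the middle, and it is exactly the subtlety you flag but do not resolve. Your proposal claims that ``a hyperedge-hop can only reuse a vertex of the previous hyperedge a bounded number of times, so the path visits many distinct vertices,'' and then bounds the probability that a fixed path is realised by $(k\Delta\cdot\Delta q^{-(k-2)/2})^{\Omega(m)}$, i.e.\ each hop contributes about $k$ fresh extended vertices. This is correct when the two consecutive hyperedges on the path are \emph{different} hyperedges of $H$ (simplicity gives intersection $\le 1$, hence $\le 1$ shared extended vertex). But the path can also step between two \emph{time slices of the same hyperedge} $e$ --- what the paper calls self-neighbours --- and these can share up to $k-1$ extended vertices. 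For such hops the ``fresh-vertex'' contribution collapses, your per-hop probability factor degrades to something like $q^{-1/2}$ rather than $q^{-(k-2)/2}$, and the union bound no longer beats the enumeration count. Revealing the coupling randomness backward step by step, as you suggest, does not fix this: the problem is not dependence structure but that a self-hop simply does not bring in enough new randomness. The paper's proof handles this by splitting hops into $R_{\mathrm{self}}$ (self-neighbour) and $R_{\mathrm{out}}$ (outside-neighbour) hops (\Cref{lemma-degree}), pruning roughly half of each maximal run of self-hops so the surviving subsequence has pairwise intersections $\le 1$ (\Cref{lemma-ind-path}), and then balancing the smaller enumeration factor $2k$ for self-hops against their weaker probability contribution $(R_{\mathrm{self}}-b)/3$ (\Cref{lemma-single-path}, with the auxiliary parameter $b\le\min\{R_{\mathrm{self}},2R_{\mathrm{out}}\}$). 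Without some equivalent of this self/outside dichotomy and pruning step, your union bound does not close.

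A smaller point: you treat the discrepancy structure as a ``path (or witness tree)''; the paper uses a path, not a tree, and the path is taken in the \emph{extended} hypergraph (vertex-time pairs, \Cref{definition-extended}), which is the clean way to make the ``new randomness per hop'' accounting precise. Your informal $(w_i,t_i)$ sequence is the same idea in disguise, so this is not a gap, but formalising it as the extended hypergraph is what lets the self/outside distinction and the degree bounds be stated crisply.
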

By \Cref{lemma-inf-per}, a union bound over all vertices and the coupling lemma (\Cref{lem:coupling_lemma}), it holds that
\begin{align*}
\max_{\*X_0,\*Y_0 \in [s]^V}\DTV{\*X_T}{\*Y_T} \leq \Pr_{\+C}\sqtp{\*X_T \neq \*Y_T} \leq \epsilon,
\end{align*}
which proves the mixing time part of \Cref{lemma-mixing} via \eqref{eqn:coupling-mixing}.
In the rest of this section, we use the information percolation technique to analyse the coupling $\+C$ and prove \Cref{lemma-inf-per}.

\subsection{Information percolation analysis}
Consider the coupling procedure $(\*X_t,\*Y_t)_{t \geq 0}$.
For each $t \geq 1$, let $v_t$ denote the vertex picked in the $t$-th step of systematic scan, namely, $v_t$ is the vertex with label $(t \mod n)$.
Consider the $t$-th transition step, where $t > 0$.
Define the set of agreement vertices when updating $v_t$ at time $t$ by
\begin{align*}
A_t \defeq \{v \in V \setminus \{v_t\} \mid X_{t-1}(v) = Y_{t-1}(v)\}.	
\end{align*}
We say a hyperedge $e \in \+E$ is satisfied by $A_t$ if there exist two distinct vertices $u,v \in e \cap A_t$ such that $X_{t-1}(u) \neq X_{t-1}(v)$ (and hence $Y_{t-1}(u) \neq Y_{t-1}(v)$ ).
We remove all the hyperedges  $e \in \+E$ satisfied by $A_t$ to obtain a sub-hypergraph $H_t$.
Let $H_t^v$ denote the connected component in $H_t$ containing $v$.
%
\begin{lemma}\label{lemma-v-find-u}
If $X_t(v_t) \neq Y_t(v_t)$ for some $t \geq 1$, then there exists $u\neq v_t$ in $H^{v_t}_t$ such that $X_{t-1}(u) \neq Y_{t-1}(u)$. 	
\end{lemma}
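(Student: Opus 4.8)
The plan is to argue by contrapositive on the structure of the coupling. Suppose that for every vertex $u \neq v_t$ lying in the connected component $H_t^{v_t}$, we have $X_{t-1}(u) = Y_{t-1}(u)$; I want to conclude $X_t(v_t) = Y_t(v_t)$. The key point is that the update of $v_t$ at time $t$ is sampled from the optimal coupling between $\nu^{X_{t-1}(V\setminus\{v_t\})}_{v_t}$ and $\nu^{Y_{t-1}(V\setminus\{v_t\})}_{v_t}$, so it suffices to show these two marginal distributions are identical under our assumption. Since the optimal coupling of two identical distributions keeps the two coordinates equal almost surely, this gives $X_t(v_t) = Y_t(v_t)$.

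First I would recall the product decomposition of the conditional distribution: for any partial image, $\mu^{\sigma_\Lambda}$ (and likewise $\nu^{\sigma_\Lambda}$) factorizes over the connected components of the hypergraph obtained by deleting all hyperedges satisfied by $\sigma_\Lambda$. Applying this with $\Lambda = V \setminus \{v_t\}$ and the partial images $X_{t-1}(V\setminus\{v_t\})$ and $Y_{t-1}(V\setminus\{v_t\})$ respectively, the marginal law of the colour (bucket) of $v_t$ depends only on the restriction of the partial image to the connected component of $v_t$ in the relevant pruned hypergraph. The main work is to identify this "relevant pruned hypergraph" with $H_t$ as defined just before the statement (the hypergraph $H$ with all hyperedges satisfied by the agreement set $A_t$ removed), and to show that the two partial images agree on $V(H_t^{v_t}) \setminus \{v_t\}$.

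The crux is a comparison between two notions of "satisfied": a hyperedge $e$ is satisfied by the partial image $X_{t-1}(V\setminus\{v_t\})$ if it contains two vertices (other than $v_t$) receiving distinct buckets, whereas it is satisfied by $A_t$ if it contains two vertices of $A_t$ receiving distinct buckets. Under the assumption that $X_{t-1}$ and $Y_{t-1}$ agree on all of $H_t^{v_t}$ except possibly $v_t$, I would argue that the connected component of $v_t$ in the $X_{t-1}$-pruned hypergraph coincides with $H_t^{v_t}$: any hyperedge inside $H_t^{v_t}$ is not satisfied by $A_t$, meaning all its vertices in $A_t$ share a bucket; but since every vertex of $H_t^{v_t}$ other than $v_t$ lies in $A_t$ (it agrees between $X_{t-1}$ and $Y_{t-1}$ and is not $v_t$), the hyperedge is not satisfied by $X_{t-1}(V\setminus\{v_t\})$ either. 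Conversely, hyperedges crossing the boundary of $H_t^{v_t}$ are already satisfied by $A_t$, hence by the agreeing values they are satisfied by both $X_{t-1}$ and $Y_{t-1}$, so they get pruned and do not reconnect anything. Therefore the component of $v_t$ is the same, and on it the two partial configurations are identical, so the marginals $\nu^{X_{t-1}(V\setminus\{v_t\})}_{v_t}$ and $\nu^{Y_{t-1}(V\setminus\{v_t\})}_{v_t}$ coincide.

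The main obstacle I anticipate is being careful about hyperedges that touch $v_t$ itself: such a hyperedge is not "satisfied by $A_t$" purely through $v_t$ (since $v_t \notin A_t$), so it may survive in $H_t$, but this is fine because the claim only concerns agreement of the two partial images on $V(H_t^{v_t}) \setminus \{v_t\}$, and $v_t$'s own value is exactly what we are resampling. I would also double-check the edge case where $H_t^{v_t}$ consists of $v_t$ alone with no surviving hyperedges, in which case both marginals are simply uniform over the bucket set and the conclusion is immediate. Once the component identification is pinned down, the rest is the one-line optimal-coupling argument.
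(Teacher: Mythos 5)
Your proof is correct and follows essentially the same line as the paper's: both reduce to showing that the two conditional marginals at $v_t$ are identical (so the optimal coupling must agree), and both do so by observing that the marginal at $v_t$ is determined by the connected component of $v_t$ after pruning hyperedges satisfied by the agreement set $A_t$, on which the two chains coincide under the contrapositive assumption. The only difference is framing: the paper passes through an explicit list-colouring instance and shows that pruning $A_t$-satisfied hyperedges does not change the set of proper list colourings, whereas you invoke the product decomposition of $\mu^{\sigma_\Lambda}$ over components of the pruned hypergraph and then carry out the extra step of identifying the component of $v_t$ in each chain's separately pruned hypergraph with the common $H_t^{v_t}$; the paper avoids this identification step by pruning directly to $H_t$ from the outset. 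Both routes are valid and of comparable length, so this is not a genuinely different argument, just a different parameterisation of the same idea.
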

\begin{proof}
Note that  $X_t(v_t)$ and $Y_t(v_t)$ are sampled from 	$\nu_{v_t}^{X_{t-1}(V \setminus \{v_t\})}$ and $\nu_{v_t}^{Y_{t-1}(V \setminus \{v_t\})}$ respectively.
Let $\mu'$ denote the uniform distribution of proper colourings of $H^v_t$.
Let $\pi$ denote the projected distribution induced by $\mu'$ and the projection scheme $h$.
Let $V^{v_t}_t$ denote the vertex set of $H^{v_t}_t$ and let $S = V^{v_t}_t \setminus \{v_t\}$.
We claim that (1) $\nu_{v_t}^{X_{t-1}(V \setminus \{v_t\})}$ and $\pi^{X_{t-1}(S)}_{v_t}$ are identical distributions; (2) $\nu_{v_t}^{Y_{t-1}(V \setminus \{v_t\})}$ and $\pi^{Y_{t-1}(S)}_{v_t}$ are identical distributions.
Hence, if $X_{t-1}(u) = Y_{t-1}(u)$ for all $u\neq v_t$ in $H^{v_t}_t$ , then $X_t(v_t)$ and  $Y_t(v_t)$ must be perfectly coupled.

We verify that  $\nu_{v_t}^{X_{t-1}(V \setminus \{v_t\})}$ and $\pi^{X_{t-1}(S)}_{v_t}$ are identical distributions. The claim for  $\nu_{v_t}^{Y_{t-1}(V \setminus \{v_t\})}$ and $\pi^{Y_{t-1}(S)}_{v_t}$ can be verified by a similar proof. Consider the list colouring instance $(H,(Q_v)_{v \in V})$, where $Q_v = [q]$ if $v  = v_t$ and $Q_v = h^{-1}(X_{t-1}(v))$ if $v \neq v_t$. Let $\mu_{\mathrm{list}}$ denote the uniform distribution of all proper list colourings. 
If $X \sim \mu_{\mathrm{list}}$, then $h(X_{v_t}) \sim \nu_{v_t}^{X_{t-1}(V \setminus \{v_t\})}$.
For any hyperedge $e$ satisfied by $A_t$, it holds that for any colouring $X \in \otimes_{v\in V}Q_v$, $e$ is not monochromatic. 
Let $H_t$ denote the hypergraph obtained from $H$ by removing all hyperedges satisfied by $A_t$.
Hence, $(H,(Q_v)_{v \in V})$ and $(H_t, (Q_v)_{v \in V})$ have the same set of proper list colourings. 
Recall that $H^{v_t}_t$ is the connected component in $H_t$ containing vertex $v_t$.
Let $\mu_{\mathrm{list}}^{\mathrm{com}}$ denote the uniform distribution over all proper list colourings of $(H_t^{v_t},(Q_v)_{v \in V^{v_t}_t})$.
Hence, $\mu_{\mathrm{list}}$ projected on $v_t$ is the same distribution as  $\mu_{\mathrm{list}}^{\mathrm{com}}$ projected on $v_t$.
If $X \sim \mu_{\mathrm{list}}^{\mathrm{com}}$, then $h(X_{v_t}) \sim \pi^{X_{t-1}(S)}_{v_t}$.
This implies that  $\nu_{v_t}^{X_{t-1}(V \setminus \{v_t\})}$ and $\pi^{X_{t-1}(S)}_{v_t}$ are identical distributions. 
\end{proof}

We say that a hyperedge sequence $e_1,e_2,\ldots,e_\ell$ is a path in a hypergraph if for each $1< i\leq \ell$, $e_{i-1} \cap e_{i} \neq \emptyset$ and $e_{i -1 } \neq e_{i}$.
The following result is a straightforward corollary of \Cref{lemma-v-find-u}.
\begin{corollary}\label{corollary-v-find-u}
Let $t \geq 1$.
If $X_t(v_t) \neq Y_t(v_t)$, then there exists a vertex $u\neq v_t$ satisfying $X_{t-1}(u) \neq Y_{t-1} (u)$ and a path $e_1,e_2,\ldots,e_\ell$ in hypergraph $H$ such that
\begin{itemize}
\item $v \in e_1$ and $u \in e_\ell$;
\item for any hyperedge $e_i$ in the path, there exists $c \in [s]$ such that for all vertex $w \in e_i$ and $w\neq v_t$, either $X_{t-1}(w) = Y_{t-1}(w) = c$ or $X_{t-1}(w) \neq Y_{t-1}(w)$.	
\end{itemize}
\end{corollary}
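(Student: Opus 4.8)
The plan is to deduce this directly from \Cref{lemma-v-find-u} by unfolding the definition of the connected component $H^{v_t}_t$. First I would invoke \Cref{lemma-v-find-u}: since $X_t(v_t)\neq Y_t(v_t)$, there is a vertex $u\neq v_t$ lying in $H^{v_t}_t$ with $X_{t-1}(u)\neq Y_{t-1}(u)$. This is exactly the disagreement vertex claimed by the corollary, so the only remaining task is to exhibit the path.

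Since $u$ and $v_t$ both belong to the connected component $H^{v_t}_t$ of $H_t$ and $u\neq v_t$, there is an alternating vertex--hyperedge walk $v_t=w_0, e_1, w_1, \ldots, e_\ell, w_\ell=u$ with $w_{i-1},w_i\in e_i$ and each $e_i$ a hyperedge of $H_t$; discarding repetitions when $e_{i-1}=e_i$ we may assume consecutive hyperedges are distinct, so $e_1,\ldots,e_\ell$ is a path in $H$ (because $H_t$ is a sub-hypergraph of $H$ on the same vertex set) with $v_t\in e_1$ and $u\in e_\ell$, and $\ell\geq 1$ since $u\neq v_t$.

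Finally I would verify the colouring condition on each $e_i$. Because $e_i$ is not removed from $H$, it is not satisfied by $A_t$, i.e.\ no two distinct vertices of $e_i\cap A_t$ receive different colours under $X_{t-1}$; hence all vertices of $e_i\cap A_t$ share one common colour, call it $c$ (pick $c$ arbitrarily if $e_i\cap A_t=\emptyset$). Now take any $w\in e_i$ with $w\neq v_t$. If $w\in A_t$, then $X_{t-1}(w)=Y_{t-1}(w)$ by the definition of $A_t$ and $X_{t-1}(w)=c$ by the previous sentence, so $X_{t-1}(w)=Y_{t-1}(w)=c$. If $w\notin A_t$, then by the definition of $A_t$ (recall $w\neq v_t$) we have $X_{t-1}(w)\neq Y_{t-1}(w)$. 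In either case the required dichotomy holds, which finishes the argument.

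There is essentially no hard step; the substance has already been carried out in \Cref{lemma-v-find-u}. The only points needing mild care are (i) extracting a genuine path (with $e_{i-1}\neq e_i$) from the bare fact that $u$ and $v_t$ lie in the same component of $H_t$, and (ii) choosing the colour $c$ uniformly for the whole hyperedge $e_i$ rather than vertex by vertex, which is precisely what the property ``$e_i$ is not satisfied by $A_t$'' provides.
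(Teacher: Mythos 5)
Your proposal is correct and follows essentially the same approach as the paper: invoke \Cref{lemma-v-find-u} to produce the disagreement vertex $u$ in the connected component $H^{v_t}_t$, take a path from $v_t$ to $u$ inside that component, and use the fact that each hyperedge on the path is not satisfied by $A_t$ to obtain the common colour $c$ for the agreeing vertices. You are slightly more careful than the paper in two spots (extracting a path with distinct consecutive hyperedges from the connectedness of the component, and handling the degenerate case $e_i\cap A_t=\emptyset$), but the substance is the same.
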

\begin{proof}
By \Cref{lemma-v-find-u}, there is a vertex $u\neq v_t$ such that $X_{t-1}(u) \neq Y_{t-1}(u)$  and $u\in H^{v_t}_t$. 
As $u$ and $v_t$ are in the same connected component, there exist a path from $v_t$ to $u$.
%
%
%
Moreover, for each hyperedge $e_i$ on this path,
since $e_i$ is in $H^{v_t}_t$,
it is not satisfied by $A_t$.
This implies that for all $w\neq v_t\in e_i$ such that $X_{t-1}(w)=Y_{t-1}(w)$,
their values in both chains must be the same $c\in[s]$.
Lastly, note that any path in $H^{v_t}_t$ is also a path in $H$. 
This proves the corollary.
\end{proof}
\Cref{corollary-v-find-u} is a key result for the information percolation analysis.
For any time $0\leq t \leq T$, any vertex $v \in V$, define the set of previous update times by
\begin{align*}
S(v,t) \defeq \{1 \leq i \le t \mid v_i = v\},	
\end{align*}
where $v_i$ is the vertex picked in the $i$-th transition step. Define the last update time for $v$ up to $t$ by
\begin{align*}
\upt(v, t) \defeq \begin{cases}
  \max_{i\in S(v,t)} i &\text{if } S(v,t) \neq \emptyset;\\
 0	&\text{otherwise}.
 \end{cases}
 \end{align*}

By \Cref{corollary-v-find-u}, if the coupling on vertex $v$ failed at time $t$, then there must exist a vertex $u$ such that the coupling on $u$ failed at time $t' = \upt(u,t)$.
We apply  \Cref{corollary-v-find-u} recursively until we find a vertex $w$ such that $X_0(w) \neq Y_0(w)$.
This gives us an update time sequence $t = t_1 > t_2>\ldots>t_{\ell} = 0$ such that the coupling of each $t_i$-th transition fails, together with a set of paths satisfying the properties in \Cref{corollary-v-find-u}.
We will show that such a update time sequence and the set of paths occur with small probability, which bounds the probability of $X_t(v_t) \neq Y_t(v_t)$.
For this analysis, we will use the notions of extended hyperedges and extended hypergraphs introduced by He, Sun, and Wu \cite{HSW21}.

\subsection{Extended hyperedges and the extended hypergraph}
Fix an integer $T \geq 1$ to be the total number of transitions of the systematic scan.
Define the set of extended vertex $\ext{V}$ by
\begin{align*}
\ext{V} = \{(t,v_t) \mid 1\leq t \leq T\} \cup \{(0,v) \mid v \in V\},	
\end{align*}
where $v_t$ is the vertex with label $(t \mod n)$. Each vertex $(t,u) \in \ext{V}$ represents an update, i.e. $u$ is updated at the $t$-th transition step. 
We regard all vertices ``updated'' at the initial step ($t = 0$).
Consider the systematic scan process $(\*X_t)_{t \geq 0}$. 
For any hyperedge $e \in \+E$, the configuration $X_t(e)$ of $e$ at time $t$ satisfies 
\begin{align*}
  \forall u \in e, \quad X_t(u) = X_{t'}(u),\quad \text{where } t' = \upt(u,t),
\end{align*}
namely, the value of $u$ at time $t$ is the same as the value of $u$ at time $t' = \upt(u,t)$.
Besides, the configuration of hyperedge $e$ remains unchanged until some vertex in $e$ is updated. 
This motivates the following definition of extended hyperedges and the extended hypergraph, introduced by He, Sun, and Wu \cite{HSW21}.
\begin{definition}\label{definition-extended}
  The set $\ext{\+E}$ of \emph{extended hyperedges} is defined by $\ext{\+E} \defeq \cup_{t=0}^T \ext{\+E}_t$, where
  \begin{align*}
    \ext{\+E}_0 &\defeq \bigcup_{e \in \+E} \{ (0,v) \mid v \in e \},\\
    \forall 1\leq t \leq T,\quad \ext{\+E}_t &\defeq \bigcup_{e:v_t \in e}\left\{ (t',v) \mid v \in e \land t' = \upt(v,t) \right\}.	
  \end{align*}
  The \emph{extended hypergraph} is $\ext{H}=(\ext{V},\ext{\+E})$.
\end{definition}

At the beginning, each hyperedge $e \in \+E$ takes its initial value, and thus we add all the extended hyperedges with $t=0$ to $\ext{\+E}_0$. 
For each update at time $1 \leq t \leq T$, only the value of $v_t$ is updated.
Thus the configurations of only the hyperedges containing $v_t$ are updated, and we add only those to $\ext{\+E}_t$. 

\Cref{corollary-v-find-u} shows that for any $t \geq 1$, if the coupling in the $t$-th transition step fails (i.e. $X_t(v_t) \neq Y_t(v_t)$), then we can find a specific path in the hypergraph $H$. 
Our next lemma 
lifts such a path to $\ext{H}$.

\begin{lemma}\label{lemma-v-find-u-extend}
Let $1\leq t \leq T$ be an integer. Suppose $X_t(v_t) \neq Y_t(v_t)$. There exist a vertex $(t',u) \in \ext{V}$ satisfying $t' < t$ and $X_{t'}(u) \neq Y_{t'}(u)$, together with a path $\ext{e}_1,\ext{e}_2,\ldots,\ext{e}_\ell$ in $\ext{H}$ such that
\begin{itemize}
\item $(t,v_t) \in \ext{e}_1$ and $(t',u) \in \ext{e}_\ell$;
\item for any hyperedge $\ext{e}_i$ in the path, there exists $c \in [s]$ such that for all $(j,w) \in \ext{e}_i$, either $X_j(w)=Y_j(w) = c$ or $X_j(w) \neq Y_j(w)$.
\end{itemize}
\end{lemma}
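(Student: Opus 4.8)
The plan is to lift the path in $H$ furnished by \Cref{corollary-v-find-u} to a path in the extended hypergraph $\ext H$, using the correspondence between a hyperedge $e\in\+E$ and its ``current extended copy'' at a given time. First I would apply \Cref{corollary-v-find-u} to the hypothesis $X_t(v_t)\neq Y_t(v_t)$: this yields a vertex $u\neq v_t$ with $X_{t-1}(u)\neq Y_{t-1}(u)$ and a path $e_1,e_2,\ldots,e_\ell$ in $H$ with $v_t\in e_1$, $u\in e_\ell$, and the colour-dichotomy property (for each $e_i$ there is $c\in[s]$ such that every $w\in e_i$, $w\neq v_t$, either agrees at value $c$ or disagrees at time $t-1$). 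I then set $t'=\upt(u,t-1)$; since the value of $u$ has not changed since its last update, $X_{t'}(u)=X_{t-1}(u)\neq Y_{t-1}(u)=Y_{t'}(u)$, and $t'<t$ because $u\neq v_t$ so $u$ was not updated at step $t$. This identifies the endpoint $(t',u)\in\ext V$.

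Next I would define, for each $e_i$ on the $H$-path, a corresponding extended hyperedge $\ext e_i\in\ext\+E$. The natural choice is: for $i=1$ (the hyperedge containing $v_t$), take $\ext e_1\in\ext\+E_t$, namely $\ext e_1=\{(\upt(w,t),w)\mid w\in e_1\}$, which is in $\ext\+E_t$ precisely because $v_t\in e_1$; and for $i\geq 2$, let $\tau_i$ be the latest time $\leq t$ at which some vertex of $e_i$ was updated (equivalently $\tau_i=\max_{w\in e_i}\upt(w,t)$; note $\tau_i\geq 1$ because every vertex is ``updated'' at time $0$, but in fact one can take the most recent genuine update, and if $e_1\cap e_i\ni v_t$ already handled) and set $\ext e_i=\{(\upt(w,\tau_i),w)\mid w\in e_i\}\in\ext\+E_{\tau_i}$. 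Here I must be a little careful: I want $\ext e_i$ to be the extended copy of $e_i$ that is ``active'' at the moment the discrepancy passes through it, i.e. at time $t-1$ from the viewpoint of $e_i$'s vertices other than $v_t$. The cleanest formulation is $\ext e_i=\{(\upt(w,t),w)\mid w\in e_i\}$ for all $i$, after checking this set indeed lies in $\ext\+E$: it lies in $\ext\+E_{\tau}$ for $\tau=\upt(v_t,t)=t$ if $v_t\in e_i$, and otherwise in $\ext\+E_{\tau}$ where $\tau$ is the last update time of any vertex of $e_i$ up to $t$ (such a $\tau$ exists and the set has the required form by \Cref{definition-extended}).

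Then I would verify the two bulleted conclusions. For adjacency: since $e_{i-1}\cap e_i\neq\emptyset$, pick $w\in e_{i-1}\cap e_i$; the extended vertices $(\upt(w,t),w)$ appear in both $\ext e_{i-1}$ and $\ext e_i$ (the update time of $w$ up to $t$ is the same regardless of which hyperedge we view it through), so consecutive extended hyperedges share a vertex; distinctness of consecutive $\ext e_i$ follows from distinctness of $e_{i-1},e_i$ together with their extended copies projecting onto distinct vertex sets in $V$. Also $(t,v_t)\in\ext e_1$ since $\upt(v_t,t)=t$, and $(t',u)\in\ext e_\ell$ since $\upt(u,t)=t'$. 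For the colour dichotomy: fix $e_i$ and its witness $c\in[s]$ from \Cref{corollary-v-find-u}; for each $(j,w)\in\ext e_i$ we have $j=\upt(w,t)$, so $X_j(w)=X_{t-1}(w)$ and $Y_j(w)=Y_{t-1}(w)$ (the values do not change between the last update and time $t-1$), hence either $X_j(w)=Y_j(w)=c$ or $X_j(w)\neq Y_j(w)$ — exactly as required; the case $w=v_t$ only occurs for $i=1$, where $(t,v_t)\in\ext e_1$ and we may include $v_t$ in either side of the dichotomy since it is the endpoint (the statement only quantifies ``for all $(j,w)\in\ext e_i$'', and we can absorb $(t,v_t)$ into the disagreement side if $X_t(v_t)\neq Y_t(v_t)$).

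The main obstacle I anticipate is bookkeeping the time indices correctly at the two ``ends'' of the path — namely making sure the extended copy $\ext e_1$ carries the pair $(t,v_t)$ (which forces it to live in $\ext\+E_t$) while simultaneously all its other vertices carry their time-$(t-1)$ values, and symmetrically that $\ext e_\ell$ carries $(t',u)$ with $t'=\upt(u,t-1)=\upt(u,t)$. Everything else is a direct unwinding of \Cref{definition-extended} and the observation that a vertex's value is constant between consecutive updates, so that ``$X_{\upt(w,t)}(w)=X_{t-1}(w)$'' for every $w\neq v_t$. I do not expect to need simplicity of $H$ or any local-lemma input here; this lemma is purely a structural lift.
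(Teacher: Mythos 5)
Your proof is correct and follows essentially the same route as the paper: invoke \Cref{corollary-v-find-u}, lift each $e_i$ to $\ext{e}_i=\{(\upt(w,t),w)\mid w\in e_i\}$, observe that values are constant between updates so $X_{\upt(w,t)}(w)=X_{t-1}(w)$ for $w\neq v_t$, and handle $(t,v_t)$ by noting it falls into the disagreement branch since $X_t(v_t)\neq Y_t(v_t)$. One minor imprecision: you claim $w=v_t$ ``only occurs for $i=1$,'' but the path from \Cref{corollary-v-find-u} could pass through other hyperedges containing $v_t$; this does not matter since your fallback (absorbing $(t,v_t)$ into the disagreement side) applies uniformly to all $\ext{e}_i$, which is exactly how the paper's proof handles it.
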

\begin{proof}
Let $u$ and $e_1,e_2,\ldots,e_{\ell}$ denote the vertex and the path in \Cref{corollary-v-find-u} respectively.
For each vertex $w \in V$, let $t_w = \upt(w,t)$. 
For each $1\leq i \leq \ell$, define 
\begin{align*}
\ext{e}_i = \{(t_w,w)\mid w \in e_i\}.	
\end{align*}
To show that $\ext{e}_1,\ext{e}_2,\ldots,\ext{e}_\ell$ is a path in $\ext{H}$, we need to verify that each $\ext{e}_i$ defined above belongs to $\ext{\+E}$ in \Cref{definition-extended}. Fix an $\ext{e}_i$. Let $t_{\max} = \max\{t \mid (t,w) \in \ext{e}_i\}$.
It is straightforward to verify that $\ext{e}_i \in \ext{\+E}_{t_{\max}}$.

Next, we show that $t'< t$ and $X_{t'}(u) \neq Y_{t'}(u)$. 
By definition, we have $t' = t_u  = \upt(u,t) < t$.
As the value of any vertex does not change until the next update, we have that
\begin{align}
\label{eq-t-1-to-last}
\forall w \in V \setminus \{v_t\}, \quad X_{t-1}(w) = X_{t_w}(w)	 \text{ and } Y_{t-1}(w) = Y_{t_w}(w).
\end{align}
By \Cref{corollary-v-find-u}, it holds that $X_{t-1}(u) \neq Y_{t-1}(u)$.
By~\eqref{eq-t-1-to-last}, it holds that $X_{t'}(u) \neq Y_{t'}(u)$.

Finally, we verify the two properties of the path. The first property  $(t,v_t) \in \ext{e}_1$ and $(t',u) \in \ext{e}_\ell$ follows from the way $\ext{e}_i$ is constructed. 
By \Cref{corollary-v-find-u}, for any $e_i$ in the path, there exists $c \in [s]$ such that for all vertices $w \in e_i \setminus \{v_t\}$, either $X_{t-1}(w) = Y_{t-1}(w) = c$ or $X_{t-1}(w) \neq Y_{t-1}(w)$.
By~\eqref{eq-t-1-to-last}, for all extended vertices  $(i,w) \in \ext{e}_i$ with $w \neq v_t$, either $X_{i}(w) = Y_{i}(w) = c$ or $X_{i}(w) \neq Y_{i}(w)$.
Finally, consider the extended vertex $(t, v_t)$. By our assumption in the lemma, we have that $X_t(v_t) \neq Y_t(v_t)$. 
%
\end{proof}

We may repeatedly apply \Cref{lemma-v-find-u-extend} to trace a discrepancy from some time $t$ to time $0$.

\begin{lemma}\label{lemma-path}
Let $1\leq t \leq T$ be an integer. Suppose $X_t(v_t) \neq Y_t(v_t)$. There exists a path $\ext{e}_1,\ext{e}_2,\ldots,\ext{e}_{\ell}$ in the extended hypergraph $\ext{H}$ such that
\begin{itemize}
\item  $(t,v_t) \in \ext{e}_1$, $\min\{j \mid (j,w) \in \ext{e}_{i} \} > 0$ for all $i < \ell$ and $\min\{j \mid (j,w) \in \ext{e}_{\ell} \} = 0$;
\item for any $1\leq i,i' \leq \ell$ satisfying $\abs{i-i'} \geq 2$, $\ext{e}_{i} \cap \ext{e}_{i'} = \emptyset$;
\item  for any hyperedge $\ext{e}_i$ in the path, there exists $c \in [s]$ such that for all $(j,w) \in \ext{e}_i$, either $X_j(w)=Y_j(w) = c$ or $X_j(w) \neq Y_j(w)$.
\end{itemize}
\end{lemma}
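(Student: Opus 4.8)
The plan is to iterate \Cref{lemma-v-find-u-extend} so as to trace the discrepancy at $(t,v_t)$ backwards through time until it reaches time $0$; this produces a walk in $\ext{H}$ every hyperedge of which already satisfies the third property in the statement. Then I would prune this walk into a genuine path by choosing, inside the subgraph of the line graph $\Lin(\ext{H})$ induced by those hyperedges, a shortest walk with the correct endpoints.

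\emph{Building the walk.} Put $t_1 = t$, so $X_{t_1}(v_{t_1}) \neq Y_{t_1}(v_{t_1})$ by hypothesis. Suppose inductively that $t_j \geq 1$ and $X_{t_j}(v_{t_j}) \neq Y_{t_j}(v_{t_j})$. Then \Cref{lemma-v-find-u-extend} applied at time $t_j$ yields an extended vertex $(t_{j+1}, u_j) \in \ext{V}$ with $t_{j+1} < t_j$ and $X_{t_{j+1}}(u_j) \neq Y_{t_{j+1}}(u_j)$, together with a path $P_j$ in $\ext{H}$ from a hyperedge containing $(t_j, v_{t_j})$ to a hyperedge containing $(t_{j+1}, u_j)$, all of whose hyperedges satisfy the third property of the statement. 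If $t_{j+1} \geq 1$, then because $(t_{j+1}, u_j) \in \ext{V}$ and $1 \leq t_{j+1} \leq T$, the only possibility is $u_j = v_{t_{j+1}}$ (an extended vertex with positive first coordinate $\tau$ must be $(\tau, v_\tau)$), so the inductive hypothesis holds at $j+1$ and the construction continues. Since $t_1 > t_2 > \cdots$ are non-negative integers, after finitely many rounds $t_{m+1} = 0$. Concatenating $P_1, \ldots, P_m$ — the last hyperedge of $P_j$ and the first hyperedge of $P_{j+1}$ both contain $(t_{j+1}, v_{t_{j+1}}) = (t_{j+1}, u_j)$ and hence intersect, and any repeated consecutive hyperedge may be deleted — produces a walk $W$ in $\ext{H}$ whose first hyperedge contains $(t, v_t)$, whose last hyperedge contains the time-$0$ vertex $(0, u_m)$, and all of whose hyperedges satisfy the third property.

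\emph{Pruning.} Let $\+G \subseteq \ext{\+E}$ be the set of extended hyperedges satisfying the third property of the statement, let $B = \{(0,v) \mid v \in V\}$, and set $A = \{\ext{e} \in \+G \mid (t,v_t) \in \ext{e}\}$ and $S = \{\ext{e} \in \+G \mid \ext{e} \cap B \neq \emptyset\}$. By the previous step $W$ is a walk in the induced subgraph $\Lin(\ext{H})[\+G]$ from a vertex of $A$ to a vertex of $S$. Among all such walks choose one, $\ext{e}_1, \ext{e}_2, \ldots, \ext{e}_\ell$, of minimum length. A minimum-length walk has no repeated vertex and no chord (otherwise it could be shortened), so the sequence is a path; since consecutive hyperedges are distinct and intersect, $\ext{e}_1, \ldots, \ext{e}_\ell$ is a path in $\ext{H}$ in the sense of the paper. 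Now $\ext{e}_1 \in A$ gives $(t,v_t) \in \ext{e}_1$; every $\ext{e}_i \in \+G$ gives the third property; for $i < \ell$ we have $\ext{e}_i \notin S$ (else the prefix $\ext{e}_1, \ldots, \ext{e}_i$ would be a shorter admissible walk), hence $\ext{e}_i \cap B = \emptyset$, i.e.\ $\min\{j \mid (j,w) \in \ext{e}_i\} > 0$, whereas $\ext{e}_\ell \in S$ gives $\min\{j \mid (j,w) \in \ext{e}_\ell\} = 0$ — this is the first property; and if $|i - i'| \geq 2$ but $\ext{e}_i \cap \ext{e}_{i'} \neq \emptyset$, then $\ext{e}_i$ and $\ext{e}_{i'}$ are adjacent in $\Lin(\ext{H})[\+G]$ and the walk could be shortcut, a contradiction — this is the second property.

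\emph{Main obstacle.} The delicate point is that the three conclusions must all be delivered by the same pruning step. The third property is preserved only because the shortest walk is taken inside $\Lin(\ext{H})[\+G]$, so it never passes through a hyperedge that violates it; and the ``intermediate hyperedges avoid time $0$'' half of the first property has to be read off from minimality of that very walk rather than by an ad hoc truncation of $W$. The auxiliary facts used along the way — that an extended vertex with positive first coordinate is determined by that coordinate, and that a minimum-length walk between two vertex sets is an induced path whose intermediate vertices avoid the target set — are routine.
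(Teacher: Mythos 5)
Your proof is correct, and the first half (iterating \Cref{lemma-v-find-u-extend} to build a discrepancy walk $W$ consisting entirely of extended hyperedges satisfying the third property) matches the paper's construction of the initial path $\ext{f}_1,\ldots,\ext{f}_m$. The pruning step, however, is genuinely different and arguably cleaner. The paper prunes $\ext{f}_1,\ldots,\ext{f}_m$ by a greedy forward pass: starting from $\ext{f}_1$, it repeatedly jumps to $\ext{f}_j$ with the \emph{largest} index $j$ still intersecting the current hyperedge, terminating once a hyperedge touching time~$0$ is reached; it then must separately argue that this produces a bona fide path, that intermediate hyperedges avoid time~$0$, and that non-consecutive hyperedges are disjoint. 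Your approach instead sets up the cleaner graph-theoretic problem of a shortest $A$--$S$ walk inside the induced subgraph $\Lin(\ext{H})[\+G]$, and then all three required properties fall out of the single minimality principle: intermediate hyperedges are not in $S$ (else a prefix is shorter), no two non-consecutive hyperedges meet (else there is a chord shortcut), and simplicity of the walk is automatic. One small point worth making explicit if this were to be written up: the shortest walk exists because $W$ witnesses at least one $A$--$S$ walk, and the crucial observation that for $t'\geq 1$ the extended vertex returned by \Cref{lemma-v-find-u-extend} must be $(t', v_{t'})$ is exactly what lets the recursion continue, which you do note. Overall the argument is sound and dispatches the step in a more unified way than the paper.
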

\begin{proof}
  We use \Cref{lemma-v-find-u-extend} recursively. Namely, we use \Cref{lemma-v-find-u-extend} for $(t,v_t)$ to find $(t',u)$.
If $t' \neq 0$, we apply \Cref{lemma-v-find-u-extend} on $(t',u)$ again to find the previous discrepancy.
Repeat this process until we find $(t'',w)$ such that $t'' = 0$.
This gives a path $\ext{f}_1,\ext{f}_2,\ldots,\ext{f}_{m}$ in the extended hypergraph $\ext{H}$ such that $(t,v_t) \in \ext{f}_1$ and $\min\{j \mid (j,w) \in \ext{f}_{m} \} = 0$. 
By \Cref{lemma-v-find-u-extend}, this path  $\ext{f}_1,\ext{f}_2,\ldots,\ext{f}_{m}$ satisfies the last property in \Cref{lemma-path}.
%

We then construct the path $\ext{e}_1,\ext{e}_2,\ldots,\ext{e}_{\ell}$.
First let $\ext{e}_1 = \ext{f}_1$, $\ell = 1$, and $p = 1$.
While $\min\{i \mid (i,w) \in \ext{e}_\ell \} > 0$, we repeat the following process:
\begin{itemize}
\item let $p+1 \leq j \leq m$ be the largest index satisfying $\ext{f}_j \cap \ext{e}_{\ell} \neq \emptyset$;
\item let $\ell \gets \ell + 1$, $\ext{e}_\ell \gets \ext{f}_j$ and $p \gets j$.
\end{itemize}
When the above process ends, we get the path $\ext{e}_1,\ext{e}_2,\ldots,\ext{e}_{\ell}$.

We first show that the process above is well-defined.
Consider the beginning of each iteration of the while-loop. It holds that $\ext{e}_\ell = \ext{f}_p$. Since  $\min\{i \mid (i,w) \in \ext{e}_\ell \} > 0$, we know that $p < m$. The index $p + 1\leq j \leq m$ such that  $\ext{f}_j \cap \ext{e}_{\ell} \neq \emptyset$ must exist because $\ext{f}_{p+1} \cap \ext{e}_{\ell} =\ext{f}_{p+1} \cap \ext{f}_{p} \neq \emptyset$. 
The while-loop must terminate eventually because $p$ always increase and $\min\{i \mid (i,w) \in \ext{f}_m \} = 0$.

We claim that $\ext{e}_1,\ext{e}_2,\ldots,\ext{e}_{\ell}$ is indeed a path. We only need to show that for all $2\leq i \leq \ell$, it holds that $\ext{e}_{i-1} \cap \ext{e}_i \neq \emptyset$ and $\ext{e}_{i-1} \neq \ext{e}_i$. The construction process guarantees that $\ext{e}_{i-1} \cap \ext{e}_i \neq \emptyset$. Suppose there is an index $2\leq i \leq \ell$ such that $\ext{e}_{i-1} = \ext{e}_i = \ext{f}_{i'}$ for some $i'\le m$. Since the construction process finds $\ext{e}_i$, we know that $\min\{t \mid (t,w) \in \ext{e}_{i-1} \} > 0$. Thus $i' < m$ and $\ext{f}_{i'+1}$ exists. Since $\ext{f}_1,\ext{f}_2,\ldots,\ext{f}_{m}$ is a path, we know that $\ext{f}_{i'} \cap \ext{f}_{i'+1} \neq \emptyset$, which implies that $\ext{e}_{i-1} \cap \ext{f}_{i' + 1} \neq \emptyset$. When constructing $\ext{e}_{i}$, we look for the largest $j$ such that $\ext{e}_{i-1} \cap \ext{f}_j \neq \emptyset$. Hence, $\ext{e}_i \neq \ext{f}_{i'}$, a contradiction.

Lastly, we verify the properties of the path.
\begin{itemize}
  \item Since $\ext{e}_1 = \ext{f}_1$ and $(t,v_t) \in \ext{f}_1$, $(t,v_t) \in \ext{e}_1$.
    The while-loop terminates once $\min\{j \mid (j,w) \in \ext{e}_\ell \} > 0$. Hence, $\min\{j \mid (j,w) \in \ext{e}_{i} \} > 0$ for all $i < \ell$ and $\min\{j \mid (j,w) \in \ext{e}_{\ell} \} = 0$.
  \item For any $1\leq i,i'\leq \ell$ with $\abs{i-i'} \geq 2$,
    consider how $\ext{e}_{i+1}$ is constructed.
    We choose the largest index $j\le m$ such that $\ext{f}_j \cap \ext{e}_{\ell} \neq \emptyset$ and $\ext{e}_{i+1}\gets \ext{f}_j$.
    In other words, for any $j'>j$, $\ext{f}_{j'} \cap \ext{e}_{i} = \emptyset$.
    Since there is $j'$ such that $\ext{e}_{i'}=\ext{f}_{j'}$,
    $\ext{e}_{i} \cap \ext{e}_{i'} = \emptyset$.
%
  \item Since $\ext{e}_1,\ext{e}_2,\ldots,\ext{e}_{\ell}$ is a subsequence of $\ext{f}_1,\ext{f}_2,\ldots,\ext{f}_{m}$, the last property is satisfied as well. \qedhere
\end{itemize}
%
%
\end{proof}

\subsection{Proof of \texorpdfstring{\Cref{lemma-inf-per}}{Lemma \ref{lemma-inf-per}}}
Recall that $T = \ctp{50 n \log \frac{n}{\epsilon}}$ in \Cref{lemma-inf-per}.
To prove \Cref{lemma-inf-per}, we need to show that 
\begin{align*}
\forall v \in V, \quad \Pr_{\+C}\sqtp{X_T(v) \neq Y_T(v)}	\leq \frac{\epsilon}{n}.
\end{align*}
Fix a vertex $v$. By the same reason as \eqref{eq-t-1-to-last}, we only need to prove $\Pr_{\+C}\sqtp{X_T(v) \neq Y_T(v)}	\leq \frac{\epsilon}{n}$ for a new $T$, where 
\begin{align}\label{eq-def-new-T}
  T = \upt\tp{v,\ctp{50 n \log \frac{n}{\epsilon}}} \geq \ctp{40 n \log \frac{n}{\epsilon}}.
\end{align}
Note that $v$ is updated at time $T$, i.e.~$v = v_T$.

Fix $T$ defined in~\eqref{eq-def-new-T}. Define the following information percolation path (IPP).

\begin{definition}\label{definition-IPP}
We say a path $\ext{e}_1,\ext{e}_2,\ldots,\ext{e}_{\ell}$ of length $\ell$ in the extended hypergraph $\ext{H}$ is an \emph{information percolation path} (\emph{IPP}) if the following two properties are satisfied:
\begin{itemize}
\item $(T,v_T) \in \ext{e}_1$, $\min\{j \mid (j,w) \in \ext{e}_{i} \} > 0$ for all $i < \ell$ and $\min\{j \mid (j,w) \in \ext{e}_{\ell} \} = 0$;
\item for any $1\leq i,j \leq \ell$ such that $\abs{i-j} \geq 2$, $\ext{e}_{i} \cap \ext{e}_{j} = \emptyset$.
\end{itemize}
\end{definition}

Suppose $X_T(v) \neq Y_T(v)$. By \Cref{lemma-path}, we can find an IPP $\ext{e}_1,\ext{e}_2,\ldots,\ext{e}_{\ell}$ in extended hypergraph $\ext{H}$.
The following lemma lower bounds the length of the IPP.
\begin{lemma}\label{lemma-length-lower-bound}
 For any IPP of length $\ell$, $\ell \geq \ctp{T/n}$.	
\end{lemma}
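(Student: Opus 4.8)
The plan is to attach to each hyperedge $\ext{e}_i$ of the IPP a single integer ``level'' $t_i$, namely the largest time coordinate appearing among the vertices of $\ext{e}_i$, and to control how this level changes along the path. The point is that $t_1 = T$, that $t_\ell$ is small (at most $n-1$), and that the level can drop by at most $n-1$ from one hyperedge of the path to the next, because in a systematic scan every vertex is refreshed at least once in every window of $n$ consecutive steps. Telescoping these bounds over the $\ell-1$ edges of the path gives $T = t_1 = t_\ell + \sum_{i=1}^{\ell-1}(t_i - t_{i+1}) \le (n-1) + (\ell-1)(n-1) = \ell(n-1) < \ell n$, whence $\ell > T/n$ and, as $\ell$ is an integer, $\ell \ge \ctp{T/n}$.

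Carrying this out, I would first check that the level is well defined and coincides with the index of the set in \Cref{definition-extended}: if $\ext{e}_i \in \ext{\+E}_t$ with $t \ge 1$, then by construction $\ext{e}_i = \{(\upt(v,t),v) : v \in e\}$ for some hyperedge $e$ containing $v_t$, so $(t,v_t) \in \ext{e}_i$ while every other time coordinate is $\upt(v,t) \le t$; hence $t$ is the maximum time coordinate of $\ext{e}_i$, and for $\ext{e}_i \in \ext{\+E}_0$ the maximum is trivially $0$. Next, $t_1 = T$ because $(T,v_T) \in \ext{e}_1$ and no time coordinate in $\ext{H}$ exceeds $T$. Then $t_\ell \le n-1$: the definition of an IPP provides a vertex $(0,w) \in \ext{e}_\ell$, so if $t_\ell \ge 1$ we must have $\upt(w,t_\ell) = 0$, i.e.\ $w$ is not updated during $\{1,\ldots,t_\ell\}$, which is impossible once $t_\ell \ge n$. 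Finally, for $1 \le i \le \ell - 1$, since $\ext{e}_1,\ldots,\ext{e}_\ell$ is a path there is a shared vertex $(j,w) \in \ext{e}_i \cap \ext{e}_{i+1}$; here $j \le t_{i+1}$ by definition of $t_{i+1}$, and — using that $t_i \ge 1$ because $i < \ell$ forces all time coordinates of $\ext{e}_i$ to be positive — we have $j = \upt(w,t_i)$, so $w$ is not updated during $\{j+1,\ldots,t_i\}$, forcing $t_i - j \le n-1$; combining, $t_i - t_{i+1} \le n-1$.

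I expect the only slightly fiddly part to be unwinding \Cref{definition-extended} to see that the level equals the maximum time coordinate and to extract the explicit form $\ext{e}_i = \{(\upt(v,t_i),v):v\in e\}$; after that everything reduces to the elementary fact that $\upt(w,t) > t - n$ for every vertex $w$ and every $t$. It is worth noting that the self-avoidance clause of \Cref{definition-IPP} (disjointness of hyperedges at path distance $\ge 2$) plays no role here — only the two endpoint conditions and ``consecutive hyperedges intersect'' are used — and that the degenerate case $\ell = 1$ is automatically excluded, since it would give $T = t_1 = t_\ell \le n-1$, contradicting $T \ge \ctp{40 n \log(n/\epsilon)} \ge n$.
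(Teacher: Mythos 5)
Your proof is correct and follows the same telescoping argument as the paper: both bound the drop in update time from one hyperedge of the IPP to the next by $O(n)$, using that a systematic scan refreshes every vertex once per window of $n$ consecutive steps, and then chain these bounds from $(T,v_T) \in \ext{e}_1$ down to the time-$0$ vertex in $\ext{e}_\ell$. The paper tracks both $t^{(i)}_{\min}$ and $t^{(i)}_{\max}$ while you track only the maximum, giving a marginally cleaner bookkeeping and the slightly sharper (but unused) bound $T \le \ell(n-1)$, so the two are essentially the same proof.
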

\begin{proof}
For any extended hyperedge $\ext{e}_i$, define the maximum and minimum update times in $\ext{e}_i$ by  $t^{(i)}_{\max} = \max\{t \mid (t,w) \in \ext{e}_i \}$	and $t^{(i)}_{\min} = \min\{t \mid (t,w) \in \ext{e}_i \}$. 
In the systematic scan, we update vertices in order of their labels.
By \Cref{definition-extended}, it holds that for any $i$,
\begin{align*}
t^{(i)}_{\max} - t^{(i)}_{\min} \leq n - 1 \leq n.	
\end{align*}
Note that $\ext{e}_i \cap \ext{e}_{i+1} \neq \emptyset$, which implies 
\begin{align*}
t^{(i)}_{\min} \leq t^{(i+1)}_{\max}	 \leq t^{(i+1)}_{\min} + n.
\end{align*}
Note that $t^{(1)}_{\min} \geq t^{(1)}_{\max} - n = T - n$.
We have
\begin{align*}
T - n \leq t^{(1)}_{\min} \leq t^{(\ell)}_{\min} + (\ell - 1) n = (\ell - 1)n,	
\end{align*}
where the last equation holds because $t^{(\ell)}_{\min} = 0$.
Since $\ell$ is an integer, we have $\ell \geq \ctp{T / n}$.
\end{proof}

Now fix  an integer $\ell \geq T /n$ and an IPP $\+P = \ext{e}_1,\ext{e}_2,\ldots,\ext{e}_\ell$ of length $\ell$. 
We define the bad event $\+B(\+P)$ as: for any hyperedge $\ext{e}_i$ in the path, there exists $c \in [s]$ such that for all $(j,w) \in \ext{e}_i$, either $X_j(w)=Y_j(w) = c$ or $X_j(w) \neq Y_j(w)$.
Namely, $\+B(\+P)$ that implies $\+P$ satisfies the third property in \Cref{lemma-path}.
By \Cref{lemma-path}, \Cref{lemma-length-lower-bound} and a union bound over all IPPs of length at least $\ell$, the probability of $X_T(v) \neq Y_T(v)$ can be bounded as follows 
\begin{align}\label{eq-union-bound-coupling}
\Pr_{\+C}\sqtp{X_T(v) \neq Y_T(v)} \leq \sum_{\ell \geq \ctp{T / n}}\sum_{\text{$\+P$: IPP of length $\ell$}} \Pr_{\+C}\sqtp{\+B(\+P)}.
\end{align}

We bound $ \Pr_{\+C}\sqtp{\+B(\+P)}$ in the RHS of~\eqref{eq-union-bound-coupling} next.
We need to use more delicate structures of the extended hypergraph $\ext{H} = (\ext{V},\ext{\+E})$. By \Cref{definition-extended}, each extended hyperedge $\ext{e} \in \ext{\+E}$ corresponds to a unique hyperedge $\edge{\ext{e}} \in \+E$ in the input hypergraph, or more formally,
\begin{align*}
\edge{\ext{e}} \defeq \{ v \mid (t,v) \in \ext{e}\}.	
\end{align*}
We remark that different extended hyperedges may correspond to the same hyperedge.
For each extended hyperedge $\ext{e} \in \ext{\+E}$, we use $N(\ext{e})$ to denote the neighbour  extended hyperedges:
\begin{align*}
N(\ext{e}) \defeq \{\ext{f} \in \ext{\+E} \mid \ext{f} \cap \ext{e} \neq \emptyset \text{ and } \ext{f} \neq \ext{e}\}.	
\end{align*}
The following observation is straightforward to verify.
\begin{observation}\label{observation-neighbor-1}
For any $\ext{e} \in \ext{\+E}$ and $\ext{f} \in N(\ext{e})$, $\edge{\ext{e}} \cap \edge{\ext{f}} \neq \emptyset$.	
\end{observation}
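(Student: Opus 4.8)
The plan is to unfold the definitions directly; this observation is immediate. Suppose $\ext{f} \in N(\ext{e})$. By the definition of $N(\ext{e})$, we have $\ext{f} \cap \ext{e} \neq \emptyset$, so there is an extended vertex $(t,v) \in \ext{V}$ with $(t,v) \in \ext{e}$ and $(t,v) \in \ext{f}$.

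Now I would invoke the definition of $\edge{\cdot}$: since $(t,v) \in \ext{e}$, the vertex $v$ belongs to $\edge{\ext{e}} = \{w \mid (t',w) \in \ext{e}\}$, and likewise $v \in \edge{\ext{f}}$ because $(t,v) \in \ext{f}$. (Note that $\edge{\ext{e}}$ collects the $V$-coordinates of all extended vertices in $\ext{e}$ regardless of their time-coordinates, so the fact that $\ext{e}$ and $\ext{f}$ may carry different time stamps on other vertices is irrelevant.) Hence $v \in \edge{\ext{e}} \cap \edge{\ext{f}}$, and in particular $\edge{\ext{e}} \cap \edge{\ext{f}} \neq \emptyset$, as claimed.

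There is no real obstacle here — the statement is essentially a restatement of the fact that the projection $\ext{e} \mapsto \edge{\ext{e}}$ sends intersecting extended hyperedges to intersecting hyperedges, which follows coordinate-wise. The only thing worth double-checking is that the definition of $\ext{\+E}$ in \Cref{definition-extended} really does produce extended hyperedges whose $V$-projection is a genuine hyperedge of $H$ (so that $\edge{\ext{e}}$ is well-defined as a member of $\+E$); this is visible from the construction, where each $\ext{e} \in \ext{\+E}_t$ is built from a fixed $e \in \+E$ by attaching time stamps $t' = \upt(v,t)$ to the vertices $v \in e$. With that in hand, the argument above is complete.
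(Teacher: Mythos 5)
Your proof is correct and is precisely the one-line unfolding of definitions the paper implicitly intends (it labels the observation ``straightforward to verify'' without spelling out a proof). The common extended vertex $(t,v) \in \ext{e} \cap \ext{f}$ projects to a common vertex $v \in \edge{\ext{e}} \cap \edge{\ext{f}}$, and that is all there is to it.
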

\noindent
We further partition $N(\ext{e})$ into self-neighbours and outside-neighbours as follows, 
\begin{align*}
N_{\mathsf{self}}(\ext{e}) &\defeq \left\{\ext{f} \in N(\ext{e}) \mid \edge{\ext{e}} = \edge{\ext{f}} \right\};\\
N_{\mathsf{out}}(\ext{e}) &\defeq \left\{\ext{f} \in N(\ext{e}) \mid \edge{\ext{e}} \neq \edge{\ext{f}} \right\}.	
\end{align*}
\begin{observation}\label{observation-neighbor-2}
For any $\ext{e} \in \ext{\+E}$ and $\ext{f} \in N_{\mathsf{out}}(\ext{e})$, $\abs{\ext{e}  \cap \ext{f}} = 1$.	
\end{observation}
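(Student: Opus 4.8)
The plan is to derive $\abs{\ext{e}\cap\ext{f}}=1$ from two ingredients: first, that within any single extended hyperedge the map sending an extended vertex $(t,v)$ to its underlying vertex $v$ is injective; and second, that $H$ is simple, so two distinct hyperedges of $H$ meet in at most one vertex. The lower bound $\abs{\ext{e}\cap\ext{f}}\ge 1$ is immediate, since $\ext{f}\in N_{\mathsf{out}}(\ext{e})\subseteq N(\ext{e})$ and membership in $N(\ext{e})$ means precisely that $\ext{e}\cap\ext{f}\neq\emptyset$.

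For the upper bound I would argue by contradiction: suppose there exist two distinct extended vertices $(t_1,w_1),(t_2,w_2)\in\ext{e}\cap\ext{f}$. Reading off \Cref{definition-extended}, every extended hyperedge — whether it lies in $\ext{\+E}_0$ or in some $\ext{\+E}_t$ with $t\ge 1$ — is obtained from a single hyperedge $e\in\+E$ by attaching to each $v\in e$ exactly one timestamp (namely $0$, resp.\ $\upt(v,t)$). Hence the projection $(t,v)\mapsto v$ restricted to one extended hyperedge is injective with image $\edge{\ext{e}}=e$. In particular, since $(t_1,w_1)$ and $(t_2,w_2)$ are distinct elements of $\ext{e}$, we get $w_1\neq w_2$; and both $w_1,w_2$ lie in $\edge{\ext{e}}$ and also (viewing the same two extended vertices inside $\ext{f}$) in $\edge{\ext{f}}$. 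Thus the hyperedges $\edge{\ext{e}}$ and $\edge{\ext{f}}$ of $H$ share at least two vertices. But $\ext{f}\in N_{\mathsf{out}}(\ext{e})$ forces $\edge{\ext{e}}\neq\edge{\ext{f}}$, so by simplicity of $H$ this is impossible, giving $\abs{\ext{e}\cap\ext{f}}=1$.

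I do not expect a genuine obstacle here; the argument is purely combinatorial and uses none of the information-percolation coupling. The one point that deserves care is spelling out the injectivity of the underlying-vertex projection on a single extended hyperedge, since this is exactly the property that fails for \emph{pairs} of extended hyperedges with a common underlying hyperedge — the self-neighbours in $N_{\mathsf{self}}(\ext{e})$ — and is precisely why the statement is restricted to $N_{\mathsf{out}}(\ext{e})$. It is also worth verifying the two cases $t=0$ and $t\ge 1$ of \Cref{definition-extended} separately when checking this injectivity, to be sure that in the $t\ge 1$ case the condition $t'=\upt(v,t)$ indeed pins down a unique timestamp for each $v\in e$.
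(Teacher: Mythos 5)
Your proof is correct and follows the same route as the paper's (which invokes Observation~\ref{observation-neighbor-1} and simplicity of $H$ to get $\abs{\edge{\ext{e}}\cap\edge{\ext{f}}}=1$ and then concludes); you simply spell out explicitly the injectivity of the projection $(t,v)\mapsto v$ on a single extended hyperedge, which the paper leaves implicit in the final ``which implies'' step.
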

\begin{proof}
  Let $e = \edge{\ext{e}}$ and $f = \edge{\ext{f}}$. Since $\ext{f} \in N_{\mathsf{out}}(\ext{e})$, by \Cref{observation-neighbor-1} and the fact that the input hypergraph is simple, $\abs{e \cap f} = 1$, which implies $\abs{\ext{e}  \cap \ext{f}} = 1$.	
\end{proof}
The following lemma bounds the degree of the extended hypergraph.
\begin{lemma}\label{lemma-degree}
  Let $\Delta$ be the maximum degree of the input hypergraph $H = (V,\+E)$.
  Then,
  \begin{enumerate}
    \item given $(t,v)\in \ext{V}$ and $e\in\+E$ such that $v\in e$, the number of $\ext{e}$ such that $(t,v)\in \ext{e}$ and $\edge{\ext{e}}=e$ is at most $k$; \label{item:extend-case-1}
    \item for any extended vertex $(t,v) \in \ext{V}$, the number of extended hyperedges incident to $(v,t)$ is at most $d_{\mathrm{vtx}} \defeq \Delta k$; \label{item:extend-case-2}
    \item for any extended hyperedge $\ext{e} \in \ext{\+E}$, $N_{\mathsf{self}}(\ext{e})\le d_{\mathsf{self}} \defeq 2k$, 
      $N_{\mathsf{out}}(\ext{e})\le d_{\mathsf{out}} \defeq \Delta k^2$. \label{item:extend-case-3}
  \end{enumerate}
\end{lemma}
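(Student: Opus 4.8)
The plan is to reduce all three items to a single combinatorial description of the extended hyperedges that lie over a fixed hyperedge $e\in\+E$, and then to finish using simplicity of $H$. The only property of the systematic scan I will invoke is that in any $n$ consecutive time steps each vertex of $V$ is updated exactly once; in particular two updates of the same vertex are always at least $n$ apart in time.

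First I would fix $e\in\+E$ and let $0=u_0<u_1<\cdots<u_m$ enumerate $\{0\}\cup\{1\le t\le T:v_t\in e\}$. Unwinding \Cref{definition-extended}, the extended hyperedges with underlying hyperedge $e$ are exactly $F_0,\dots,F_m$, where $F_i\defeq\{(\upt(w,u_i),w)\mid w\in e\}$, and $F_0,\dots,F_m$ are pairwise distinct: for $i<j$ the pair of $F_j$ with second coordinate $v_{u_j}$ is $(u_j,v_{u_j})$, whereas the corresponding pair of $F_i$ is $(\upt(v_{u_j},u_i),v_{u_j})$ with $\upt(v_{u_j},u_i)\le u_i<u_j$. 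The heart of the matter is \Cref{item:extend-case-1}: for a fixed $(t,v)\in\ext{V}$ with $v\in e$, at most $k$ indices $i$ satisfy $(t,v)\in F_i$. Since $(t,v)\in F_i$ iff $\upt(v,u_i)=t$, such an $i$ forces $u_i$ into a block of at most $n$ consecutive integers — namely $[t,t+n)$ when $t\ge1$ (as $v=v_t$ is next updated at $t+n$), or $[1,t_0)$ when $t=0$, where $t_0\le n$ is the first update time of $v$, together with the value $u_0=0$. Any such block contains at most one update time of each vertex of $e$, so at most $k$ of the $u_i$ land in it; in the $t=0$ case at most $k-1$ positive $u_i$ do (since $v$ is not updated before $t_0$), leaving room for $u_0$, and the count is again at most $k$.

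Granting \Cref{item:extend-case-1}, the bounds other than $|N_{\mathsf{self}}(\ext{e})|$ follow by routine counting. For \Cref{item:extend-case-2}: an extended hyperedge incident to $(t,v)$ has underlying hyperedge through $v$ — at most $\Delta$ choices — and each underlying hyperedge yields at most $k$ extended hyperedges, so at most $\Delta k$ in total. For the $N_{\mathsf{out}}$ part of \Cref{item:extend-case-3}: each $\ext{f}\in N_{\mathsf{out}}(\ext{e})$ contains one of the $k$ extended vertices $(t,v)$ of $\ext{e}$ and satisfies $\edge{\ext{f}}\ne\edge{\ext{e}}$; for each such $(t,v)$ there are at most $\Delta-1$ admissible underlying hyperedges through $v$, each giving at most $k$ extended hyperedges by \Cref{item:extend-case-1}, so $|N_{\mathsf{out}}(\ext{e})|\le k(\Delta-1)k\le\Delta k^2$.

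The main obstacle is the bound $|N_{\mathsf{self}}(\ext{e})|\le 2k$, since the crude estimate — at most $k$ shared extended vertices, each in at most $k$ members of the fiber of $\edge{\ext{e}}$ — only gives $k^2$. For this I would prove the disjointness statement $F_i\cap F_j=\emptyset$ whenever $j\ge i+k+1$. Indeed, given such $i<j$, among the $k+1$ times $u_{i+1},\dots,u_{i+k+1}$ two must update the same vertex of $e$ (pigeonhole, since $|e|=k$), say $v_{u_a}=v_{u_b}$ with $i<a<b\le j$; then $u_b\ge u_a+n$, so the block $\{u_a,u_a+1,\dots,u_a+n-1\}$ is a valid window of $n$ consecutive steps contained in $(u_i,u_j]$ that updates every vertex of $e$, whence $\upt(w,u_i)\ne\upt(w,u_j)$ for every $w\in e$, i.e.\ $F_i\cap F_j=\emptyset$. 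Consequently a fixed $\ext{e}=F_{i_0}$ meets $F_j$ only for $j\in\{i_0-k,\dots,i_0+k\}\setminus\{i_0\}$, giving $|N_{\mathsf{self}}(\ext{e})|\le 2k$. The only point requiring care throughout is truncated windows near $t=0$ and $t=T$, but these merely shorten the fibers and so cannot violate any of the bounds.
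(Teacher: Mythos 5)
Your proof is correct. Items~1, 2, and the $N_{\mathsf{out}}$ bound in Item~3 follow essentially the same route as the paper's proof. Where you genuinely depart is in bounding $\abs{N_{\mathsf{self}}(\ext{e})}$. The paper parametrizes a self\nobreakdash-neighbour $\ext{f}$ with $\edge{\ext{f}}=e$ by its defining time $t_{\max}(\ext{f})$ (which, once the underlying hyperedge is fixed, determines $\ext{f}$ uniquely), case\nobreakdash-splits on $t_{\max}(\ext{f})\le t_k$ versus $t_{\max}(\ext{f})>t_k$, and shows that in each case $t_{\max}(\ext{f})$ is one of at most $k-1$ update times of $e$-vertices falling in a half\nobreakdash-window of length $n$ bordering $t_k$, giving $2(k-1)\le 2k$. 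You instead enumerate the entire fibre $F_0,\dots,F_m$ over $e$, ordered by defining time, and prove the uniform disjointness statement $F_i\cap F_j=\emptyset$ whenever $j\ge i+k+1$, by pigeonholing the $k+1$ update times $u_{i+1},\dots,u_{i+k+1}$: some vertex of $e$ is hit twice, which pins a full window of $n$ consecutive scan steps strictly between $u_i$ and $u_j$ that refreshes every vertex of $e$. This delivers $\abs{N_{\mathsf{self}}(\ext{e})}\le 2k$ in one stroke, dispenses with the paper's case analysis, and handles the $t=0$ and end\nobreakdash-of\nobreakdash-scan boundary cases automatically. Both arguments ultimately rest on the same fact --- each vertex of $H$ is updated exactly once in any $n$ consecutive scan steps --- but your ``disjointness\nobreakdash-radius'' formulation is the more transparent and reusable of the two.
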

\begin{proof}
  For Item \ref{item:extend-case-1}, 
  suppose such $\ext{e}$ is $\{(t_j, u_j) \mid 1\leq j \leq k \}$ and $t_1 \leq t_2 \leq \ldots \leq t_k$. 
  Moreover, for all $j$ such that $t_j=0$, we order $u_j$ according to their original label in $H$.
  As $(v,t)\in \ext{e}$, $t$ equals one of $t_j$ .
  Then observe that $\ext{e}$ is uniquely determined if we know $t=t_j$ for some $1\le j\le k$, 
  and there are at most $k$ choices of $j$ (the number of choices can be less than $k$ if $t=0$).
  This shows the claim.

  For Item \ref{item:extend-case-2}, 
  if $\ext{e}$ is incident to $(v,t)$, then $\edge{\ext{e}} = e$ for some $e\ni v$. 
  There are at most $\Delta$ choices of such hyperedge $e$ in $H$. 
  Then the bound follows from Item \ref{item:extend-case-2}.

  For Item \ref{item:extend-case-3}, let $e = \edge{\ext{e}}$,
  and again assume $\ext{e}$ is $\{(t_j, u_j) \mid 1\leq j \leq k \}$ and $t_1 \leq t_2 \leq \ldots \leq t_k$ as in the proof of Item \ref{item:extend-case-1}.

  To bound the number of self-neighbours, suppose $\ext{f} \in N_{\mathsf{self}}(\ext{e})$ such that $\edge{\ext{f}} = e$. Let $t_{\max} = \max\{t \mid (t,w) \in \ext{f} \}$ and $t_{\min} = \min\{t \mid (t,w) \in \ext{f} \}$.
  Note that if $t_{\max} \leq t_k$, then there are at most $k-1$ choices of $t_{\max}$, namely $t_1,t_2,\ldots,t_{k-1}$.
  Otherwise $t_{\max} > t_k$. 
%
%
  Note that if $t_{\max} \geq  t_k + n $, then $t_{\min} \geq t_{\max} - (n - 1) > t_k$, which contradicts to $\ext{e}\cap\ext{f}\neq\emptyset$.
  It must hold that $t_k + 1 \leq t_{\max} \leq t_k + n - 1$. 
  In the interval $[t_k+1,t_k+n-1]$,
  there are at most $k-1$ times so that one of the vertices in $e$ is updated (this vertex cannot be $t_k$ as its update times are $t_k$ and $t_k+n$).
  Thus, there are $k-1$ choices of $t_{\max}$ again.
  Once $t_{\max}$ is fixed, since $\edge{\ext{f}} = e$, $\ext{f}$ is also fixed.
  Overall, the number of $\ext{f} \in N_{\mathsf{self}}(\ext{e})$ is at most $2(k-1) \leq 2k$.

To bound the number of outside-neighbours. We first choose one of the $k$ extended vertices in $\ext{e}$, say $(t_i,u_i)$. Then consider $\ext{f} \in N_{\mathsf{out}}(\ext{e})$ such that $(t_i,u_i) \in \ext{f}$. By Item \ref{item:extend-case-2}, the number of such $\ext{f}$ is at most $\Delta k$, implying the overall bound of $\Delta k^2$.
\end{proof}

Consider the IPP $\+P = \ext{e}_1,\ext{e}_2,\ldots,\ext{e}_\ell$. 
Define the parameters $R_{\mathrm{self}}$ and $R_{\mathrm{out}}$ by
\begin{align*}
R_{\mathrm{self}} &\defeq \abs{\left\{ 2\leq i \leq \ell \mid \ext{e}_{i} \in N_{\mathsf{self}}(\ext{e}_{i-1}) \right\} };\\
R_{\mathrm{out}} &\defeq \abs{\left\{ 2\leq i \leq \ell \mid \ext{e}_{i} \in N_{\mathsf{out}}(\ext{e}_{i-1}) \right\} }.
\end{align*}
By definition, $R_{\mathrm{self}}$ counts the number of consecutive self neighbours in $\+P$ and $R_{\mathrm{out}}$ counts the number of consecutive outside neighbours in $\+P$.
It holds that $R_{\mathrm{self}} + R_{\mathrm{out}} = \ell - 1$.
We have the following lemma.
\begin{lemma}\label{lemma-single-path}
Suppose $k \geq 20$ and $q \geq 40\Delta^{\frac{2}{k-4}}$.
For any IPP $\+P = \ext{e}_1,\ext{e}_2,\ldots,\ext{e}_\ell$, it holds that 
\begin{align*}
\Pr_{\+C}\sqtp{\+B(\+P)} \leq 10^3\Delta k^6\tp{\frac{1}{10^3\Delta k^6}}^{R_{\mathrm{out}} + \frac{1}{3}\tp{R_{\mathrm{self}} - b}},
\end{align*}
where $b$ is an integer satisfying $0\leq b \leq \min\{R_{\mathrm{self}}, 2R_{\mathrm{out}}\}$.
\end{lemma}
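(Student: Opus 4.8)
The plan is to follow the information percolation strategy of \cite{HSZ19,HSW21}: expose the randomness of the coupling $\+C$ one update at a time and bound $\Pr_{\+C}[\+B(\+P)]$ by a product of conditional probabilities, one for each extended hyperedge of the path $\+P$. Concretely, I would reveal the coupled updates $(X_t(v_t),Y_t(v_t))$ in increasing order of $t$, so that the value carried by any extended vertex $(t_j,u_j)$ is revealed exactly at step $t_j$ (and is deterministic when $t_j=0$). Walking along $\+P=\ext{e}_1,\dots,\ext{e}_\ell$, the local constraint that $\+B(\+P)$ places on a given $\ext{e}_i$ — that there is a colour $c\in[s]$ with every $(j,w)\in\ext{e}_i$ satisfying either $X_j(w)=Y_j(w)=c$ or $X_j(w)\neq Y_j(w)$ — becomes measurable once all of $\ext{e}_i$'s extended vertices are exposed, and when the last of its \emph{private} vertices (those not shared with an already‑processed neighbour in $\+P$) are exposed, the colour $c$ is already pinned down (if it is pinnable at all), so each such vertex contributes only a small conditional factor.

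Two ingredients drive the per‑hyperedge bound. First, local uniformity: by \Cref{lemma-random-Y} (and the argument behind \Cref{lemma-local-uniform-colour}) applied to the projected chain, whenever $u$ is freshly updated at a time $t\ge 1$, the conditional probability that $X_t(u)$ equals any prescribed colour is at most $\tfrac1s\exp(\tfrac1k)$. Second, a discrepancy bound: the conditional probability that a fresh update of $v_t$ creates a disagreement equals $\DTV{\nu_{v_t}^{X_{t-1}(V\setminus\{v_t\})}}{\nu_{v_t}^{Y_{t-1}(V\setminus\{v_t\})}}$, which a standard blocking‑event estimate controls by $O\!\big(\Delta\,(1/s)^{k-2}\big)$. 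The key combinatorial input is that $H$ is simple: by \Cref{observation-neighbor-2} an outside‑neighbour step shares exactly one vertex with $\ext{e}_{i-1}$, and by the disjointness property of IPPs (\Cref{definition-IPP}) $\ext{e}_i$ is disjoint from every $\ext{e}_{i'}$ with $|i-i'|\ge 2$; hence, when both of its path‑neighbours are also outside‑steps, $\ext{e}_i$ has at least $k-2$ private vertices. Multiplying the two ingredients over these vertices, with a single free choice of $c$, gives a conditional probability at most $s\cdot(1/s)^{k-2}=s^{-(k-3)}\le\tfrac{1}{10^3\Delta k^6}$ under the hypotheses $q\ge 40\Delta^{2/(k-4)}$, $k\ge 20$; a single leading factor $10^3\Delta k^6$ absorbs $\ext{e}_1$, which we do not charge.

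The delicate part — and the genuinely new feature relative to the general‑hypergraph analyses of \cite{JPV21,HSW21} — is controlling runs of self‑neighbour steps: consecutive extended hyperedges corresponding to the same hyperedge of $H$ at different update times. These may share almost all $k$ vertices, so individually they carry little new information; yet, since consecutive extended hyperedges are distinct, between $\ext{e}_i$ and $\ext{e}_{i+1}$ at least one vertex of the common hyperedge is re‑updated and must land on $c$ again. The subtlety is that one re‑updated value may, through the $\upt$ operator, appear in as many as three consecutive extended hyperedges of the run, and a self‑step adjacent to an outside‑step also eats into that outside‑step's private vertices; to avoid charging the same $1/s$ factor twice, I would charge only one self‑step in every three (producing the rate $\tfrac13$) and waive the charge for self‑steps neighbouring an outside‑step — which is exactly what $b$ records, so that $b\le 2R_{\mathrm{out}}$ (each outside‑step has two path‑neighbours) and trivially $b\le R_{\mathrm{self}}$. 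Assembling the pieces, the outside‑steps contribute $(\tfrac{1}{10^3\Delta k^6})^{R_{\mathrm{out}}}$, the charged self‑steps contribute $(\tfrac{1}{10^3\Delta k^6})^{\frac13(R_{\mathrm{self}}-b)}$, and the leading $10^3\Delta k^6$ yields the claim. The main obstacle I anticipate is making the ``charge once per three self‑steps'' accounting rigorous while maintaining a consistent notion of which vertices of each $\ext{e}_i$ are fresh under the chosen reveal order — precisely the book‑keeping forced by \Cref{lemma-degree}, where distinct extended hyperedges with overlapping time windows get identified.
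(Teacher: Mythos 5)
Your high‑level strategy — reveal the coupled updates in time order, charge each charged extended hyperedge via a per‑vertex conditional bound (a colour‑pinning factor of roughly $1/s$ from projected local uniformity plus a small discrepancy term), exploit simplicity via \Cref{observation-neighbor-2} to get roughly $k-1$ private vertices per charged hyperedge, and discount runs of self‑neighbours — is the same approach the paper uses. The paper's proof factors the argument cleanly into two pieces you are implicitly reinventing: Lemma~\ref{lemma-ind-path}, which extracts a subsequence $\ext{f}_1 < \cdots < \ext{f}_m$ of the IPP with $|\ext{f}_i \cap \ext{f}_{i-1}| \le 1$ and $m \ge R_{\mathrm{out}} + \tfrac13(R_{\mathrm{self}} - b)$, and then the union bound over $\*c\in[s]^{m-1}$ combined with the chain rule and projected local uniformity.

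The genuine gap is precisely the combinatorial extraction you flag as ``the main obstacle''; your description of it is not yet a proof, and the specific rule you sketch does not obviously hit the stated rate. You propose ``charge one self‑step in every three and waive those adjacent to an outside‑step.'' Run this on a maximal self‑run of length $3$ sitting between two outside‑steps: you waive both endpoints and are left with a single self‑step, which at rate one‑in‑three yields $0$ charged hyperedges, whereas you must account for at least $\tfrac13(3 - b')$ with $b'$ the number of waived steps in that run; you waived $2$, so the constraint demands at least $\tfrac13(3-2) = \tfrac13 > 0$. The paper instead charges one in \emph{two} within each maximal self‑run (i.e.\ $\lfloor L/2 \rfloor$ hyperedges in a run of length $L$, which ensures consecutive charged ones are nonadjacent and hence share at most one extended vertex), and then observes that $\lfloor L/2 \rfloor \ge L/3$ for $L \ge 2$; the slack term $b$ is defined as the number of maximal self‑runs of length exactly $1$ (not as ``self‑steps adjacent to an outside‑step''), and the bound $b \le 2R_{\mathrm{out}}$ is proved by noting that each such singleton is immediately followed by an outside‑step or the end of the path. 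You should also verify explicitly that consecutive charged hyperedges share at most one extended vertex — this is where the ``$k-1$ private vertices'' count comes from (you wrote $k-2$, which still works but is an unnecessary loss stemming from subtracting both neighbours rather than only the preceding one, as the chain rule requires). Finally, the discrepancy estimate you quote, $O(\Delta(1/s)^{k-2})$, is a legitimate local‑uniformity bound but is not the one the paper uses; the paper instantiates \Cref{lemma-local-uniform-colour} with $r = kq$ to get a one‑step coupling‑failure probability of at most $4/(kq)$, which is more than sufficient. Either estimate suffices to show the per‑vertex contribution is dominated by the colour‑pinning factor $\approx \lceil q/s\rceil / q$, so this difference is cosmetic.
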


The proof of \Cref{lemma-single-path} is given in \Cref{section-missing-proof-mixing}, where we will specify the  value of the integer $b$.
Now, we use \Cref{lemma-single-path} to prove \Cref{lemma-inf-per}.
We remark that in the proof of \Cref{lemma-inf-per}, we do not use the specific value of $b$, we only use the fact that $0\leq b \leq \min\{R_{\mathrm{self}}, 2R_{\mathrm{out}}\}$. 

\begin{proof}[Proof of \Cref{lemma-inf-per}]
First fix an integer $\ell \geq \lfloor T / n \rfloor$ and an integer $0 \leq r \leq \ell-1$.
Consider the IPP $\+P$ of length $\ell$ such that $R_{\mathrm{out}} = r$ and $R_{\mathrm{self}}= \ell - 1 -r$. 
By the definition of IPP (\Cref{definition-IPP}) together with \Cref{lemma-degree}, the number of such path $\+P$ is at most 
\begin{align*}
\binom{\ell - 1}{r}d_{\mathrm{vtx}}d_{\mathrm{out}}^rd_{\mathrm{self}}^{\ell - 1 -r}	\leq \Delta k\binom{\ell - 1}{r}\tp{\Delta k^2}^{r} \tp{2k}^{\ell - 1- r}.
\end{align*}
By \Cref{lemma-single-path} and the union bound in~\eqref{eq-union-bound-coupling}, we have
\begin{align*}
&\Pr_{\+C}\sqtp{X_T(v) \neq Y_T(v)} \leq \sum_{\ell \geq \ctp{T / n}}\sum_{\text{$\+P$: IPP of length $\ell$}} \Pr_{\+C}\sqtp{\+B(\+P)}\\
\leq\,& \sum_{\ell \geq \ctp{T / n}}\sum_{r = 0}^{\ell - 1}\Delta k\binom{\ell - 1}{r}\tp{\Delta k^2}^{r} \tp{2k}^{\ell - 1- r} \cdot 10^3\Delta k^6\tp{\frac{1}{10^3\Delta k^6}}^{r+ \frac{1}{3}\tp{\ell - 1 -r - b(\ell,r)}},
\end{align*}
where $b(\ell,r)$ is an integer satisfying $0 \leq b(\ell,r) \leq  \min\{\ell - 1 -r,2r\}$. Since $b(\ell,r) \leq \ell - 1 -r$, it holds that $\tp{\frac{1}{10^3\Delta k^6}}^{\frac{\ell - 1 -r - b(\ell,r)}{3}} \leq \tp{\frac{1}{10^3 k^6}}^{\frac{\ell - 1 -r - b(\ell,r)}{3}}$, which implies
\begin{align*}
\Pr_{\+C}\sqtp{X_T(v) \neq Y_T(v)}&\leq \sum_{\ell \geq \ctp{T / n}}\sum_{r = 0}^{\ell - 1}\Delta k\binom{\ell - 1}{r}\tp{\Delta k^2}^{r} \tp{2k}^{\ell - 1- r} \cdot 10^3\Delta k^6\tp{\frac{1}{10^3\Delta k^6}}^{r} \tp{\frac{1}{10^3k^6}}^{\frac{\ell - 1 -r - b(\ell,r)}{3}}\\
&=10^3\Delta^2k^7\sum_{\ell \geq \ctp{T / n}}\sum_{r = 0}^{\ell - 1}\binom{\ell - 1}{r}\tp{\frac{1}{5k}}^{\ell - 1- r} \tp{\frac{1}{10^3k^4}}^{r} \tp{\frac{1}{10k^2}}^{- b(\ell,r)}.
\end{align*}
Note that $k \geq 20$.
Since $0\leq b(\ell,r) \leq 2r$, we have $\tp{\frac{1}{10k^2}}^{- b(\ell,r)} \leq \tp{\frac{1}{10k^2}}^{-2r} = (100k^4)^r$, which imples
\begin{align*}
\Pr_{\+C}\sqtp{X_T(v) \neq Y_T(v)} &\leq	10^3\Delta^2k^7\sum_{\ell \geq \ctp{T / n}}\tp{\frac{1}{10}}^{\ell - 1}\sum_{r = 0}^{\ell - 1}\binom{\ell - 1}{r} = 10^3\Delta^2k^7\sum_{\ell \geq \ctp{T / n}}\tp{\frac{1}{5}}^{\ell - 1}\\
&\leq 10^3\Delta^2k^7\tp{\frac{1}{2}}^{T/n}.
\end{align*}
Note that $T \geq 40 n \log \frac{n \Delta}{\epsilon}$ and $k \leq n$. We have
\begin{align*}
\Pr_{\+C}\sqtp{X_T(v) \neq Y_T(v)} \leq \frac{\epsilon}{n}. &\qedhere
\end{align*}
\end{proof}

\subsection{Proof of \texorpdfstring{\Cref{lemma-single-path}}{Lemma \ref{lemma-single-path}}}\label{section-missing-proof-mixing}
Fix an IPP $\+P = \ext{e}_1,\ext{e}_2,\ldots,\ext{e}_\ell$.
We define a total ordering among all extended hyperedges in $\+P$.
For any two extended hyperedges $ \ext{e}_i$ and $\ext{e}_j$ in $\+P$, we say $ \ext{e}_i < \ext{e}_j$ if and only if $i<j$.
%
\begin{lemma}\label{lemma-ind-path}
There exists a subsequence 	$\ext{f}_1 < \ext{f}_2 <\ldots <  \ext{f}_m$ in IPP $\+P$ such that
\begin{itemize}
\item for any $1\leq i,j \leq m$ satisfying $\abs{i-j} \geq 2$, $\ext{f}_i \cap \ext{f}_j = \emptyset$;
\item for any $2\leq i \leq m$, $\abs{\ext{f}_i \cap \ext{f}_{i-1}} \leq 1$;
\item $m \geq R_{\mathrm{out}} + \frac{1}{3}\tp{R_{\mathrm{self}} - b}$ for some integer $0\leq b \leq \min\{R_{\mathrm{self}}, 2R_{\mathrm{out}}\}$.	
\end{itemize}
\end{lemma}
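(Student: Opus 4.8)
The plan is to construct the subsequence by a block decomposition of $\+P$ that exploits the asymmetry between self-steps and outside-steps. I would first partition the edges of $\+P = \ext{e}_1,\ldots,\ext{e}_\ell$ into maximal blocks $B_1,\ldots,B_q$ of consecutive edges sharing the same underlying hyperedge $\edge{\cdot}$. Since $N(\ext{e}) = N_{\mathsf{self}}(\ext{e}) \uplus N_{\mathsf{out}}(\ext{e})$ is the partition according to whether $\edge{\cdot}$ agrees, the transitions strictly inside a block are exactly the self-steps while the transitions between consecutive blocks are exactly the outside-steps; hence $R_{\mathrm{out}} = q-1$ and, writing $a_j = \abs{B_j}$, $R_{\mathrm{self}} = \sum_{j=1}^q (a_j-1) = \ell - q$. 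The subsequence is then obtained by keeping, inside each block $B_j = (\ext{e}_{p_j}, \ext{e}_{p_j+1}, \ldots, \ext{e}_{p_j+a_j-1})$, every other edge starting from the block's left endpoint, i.e.\ $\ext{e}_{p_j}, \ext{e}_{p_j+2}, \ext{e}_{p_j+4}, \ldots$; this keeps $\lceil a_j/2\rceil$ edges from $B_j$, so altogether $m = \sum_{j=1}^q \lceil a_j/2\rceil$.

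Next I would check the two separation conditions. The first (pairwise disjointness of non-consecutive $\ext{f}$'s) holds for \emph{any} subsequence of $\+P$: if $\ext{f}_i = \ext{e}_a$ and $\ext{f}_j = \ext{e}_b$ with $\abs{i-j} \ge 2$, then some edge of $\+P$ lies strictly between them, so $\abs{a-b} \ge 2$ and $\ext{e}_a \cap \ext{e}_b = \emptyset$ by the IPP disjointness property. For the second condition, two \emph{consecutive} members $\ext{f}_{i-1}, \ext{f}_i$ of the subsequence are either (a) two selected edges inside the same block, which are at index-distance exactly $2$ in $\+P$ and hence disjoint; or (b) the last selected edge of some block $B_j$ together with the left endpoint $\ext{e}_{p_{j+1}}$ of $B_{j+1}$. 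In case (b): if $a_j$ is even these edges are at index-distance $2$ (disjoint), and if $a_j$ is odd they are $\ext{e}_{p_{j+1}-1}$ and $\ext{e}_{p_{j+1}}$, which are adjacent in $\+P$ and joined by an outside-step, so their intersection has size exactly $1$ by \Cref{observation-neighbor-2} --- this is the one place where simplicity of $H$ is invoked. In all cases $\abs{\ext{f}_i \cap \ext{f}_{i-1}} \le 1$.

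For the counting bound I would use the two elementary estimates $\lceil a_j/2\rceil \ge 1$ and $\lceil a_j/2\rceil \ge a_j/2$, which give $m \ge q = R_{\mathrm{out}}+1$ and $m \ge \ell/2$. I then set $b \defeq \max\{0,\, R_{\mathrm{self}} + 3R_{\mathrm{out}} - 3m\}$, a nonnegative integer. From $R_{\mathrm{out}} < m$ one gets $b \le R_{\mathrm{self}}$, and from $3m \ge \tfrac{3}{2}\ell \ge \ell > R_{\mathrm{self}} + R_{\mathrm{out}}$ one gets $b \le 2R_{\mathrm{out}}$, so $0 \le b \le \min\{R_{\mathrm{self}}, 2R_{\mathrm{out}}\}$ as required; and by the definition of $b$ a direct computation gives $R_{\mathrm{out}} + \tfrac{1}{3}(R_{\mathrm{self}} - b) \le m$ --- with equality when the maximum defining $b$ is attained by its second argument, and otherwise because then $R_{\mathrm{self}} + 3R_{\mathrm{out}} \le 3m$, i.e.\ $R_{\mathrm{out}} + \tfrac{1}{3}R_{\mathrm{self}} \le m$.

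The only genuinely delicate step is the treatment of block boundaries in the second separation condition: the key observation is that outside-steps already have intersection exactly $1$ and therefore never need to be broken, whereas a self-step can have intersection up to $k-1$; consequently the whole ``loss'' is confined to block interiors, where halving each block suffices. This is exactly what lets the $R_{\mathrm{out}}$ term survive undamaged and reduces the self-step contribution to $\tfrac{1}{3}R_{\mathrm{self}}$, the slack $b$ being present precisely to cover the regime where $R_{\mathrm{out}}$ dominates $R_{\mathrm{self}}$. No probabilistic input is needed for this lemma; it is purely combinatorial.
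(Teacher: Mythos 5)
Your proof is correct. The construction is essentially the same as the paper's: decompose $\+P$ into maximal runs with a common underlying hyperedge (your blocks $B_j$ are in one-to-one correspondence with the outside-step indices together with the maximal self-step runs the paper calls $I_{\mathrm{self}}$), keep every other edge within each run (exploiting that outside-steps already satisfy $\abs{\ext{e}\cap\ext{f}}=1$ by \Cref{observation-neighbor-2}, so only self-steps need pruning), and verify the three bullets. Where you diverge, usefully, is in the treatment of $b$: the paper defines $b$ combinatorially as the number of singleton self-runs and proves $b\le R_{\mathrm{self}}$ and $b\le 2R_{\mathrm{out}}$ by separate structural arguments (the second requires a case split on $R_{\mathrm{out}}=0$ versus $R_{\mathrm{out}}\ge 1$, with the latter charging each singleton to the following outside-step), whereas you simply set $b\defeq\max\{0,\,R_{\mathrm{self}}+3R_{\mathrm{out}}-3m\}$ and derive the bounds $b\le R_{\mathrm{self}}$ and $b\le 2R_{\mathrm{out}}$ from the two elementary estimates $m\ge R_{\mathrm{out}}+1$ and $m\ge\ell/2$. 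This is cleaner and sidesteps the paper's singleton-counting; it works because, as the paper itself notes just before the proof of \Cref{lemma-inf-per}, downstream uses only need $0\le b\le\min\{R_{\mathrm{self}},2R_{\mathrm{out}}\}$, not the specific value. Your selection rule also retains the first edge of $\+P$, giving $m$ one larger than the paper's, which is harmless. I verified your disjointness check for block boundaries (even vs.\ odd $a_j$), the integrality of your $b$, and the three inequalities; all are sound.
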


Note that $\{\ext{f}_i\}$ given in \Cref{lemma-ind-path} is not necessarily a path.
What we do in \Cref{lemma-ind-path} is to prune certain self-neighbours from $\+P$ so that the second property holds.
To be more precise, for a maximal sequence of consecutive self-neighbouring hyperedges,
we prune all hyperedges that are in even positions of this sequence.
We give a formal proof below.

\begin{proof}[Proof of \Cref{lemma-ind-path}]
There are $\ell - 1$ pairs of adjacent extended hyperedges, i.e.~$\ext{e}_{i-1}$ and $\ext{e}_{i}$ are adjacent for $2\leq i \leq \ell$.
%
Define 
\begin{align*}
S_{\mathrm{out}} \defeq \left\{ \text{integer }i \in [2,\ell] \mid \ext{e}_{i} \in N_{\mathrm{out}}(\ext{e}_{i-1})\right\}.	
\end{align*}
Note that $\abs{S_{\mathrm{out}} } = R_{\mathrm{out}}$.
Denote $R = R_{\mathrm{out}}$.
Suppose the elements in $S_{\mathrm{out}}$ are $2\leq i_1 < i_2 <\ldots<i_{R} \leq \ell$.
In addition, we define $i_0 = 1$ and $i_{R + 1} = \ell+1$, although $i_0 \notin S_{\mathrm{out}}$ and $i_{R+1} \notin S_{\mathrm{out}}$.
Removing all the elements in $S_{\mathrm{out}}$, the integers in the interval $[2,\ell]$ splits into a set $I_{\mathrm{self}}$ of sub-intervals:
\begin{align*}
  I_{\mathrm{self}} \defeq \{[l,r] \mid \exists j \text{ s.t.~} 0\leq j \leq R,~l = i_j + 1,~r = i_{j+1} - 1, \text{ and }l \leq r\}.	
\end{align*}
Equivalently, $I_{\mathrm{self}}$ can be constructed by going through all $j$ from $0$ to $R$, and adding the interval $[i_j + 1,i_{j+1} - 1]$ to the set  $I_{\mathrm{self}}$ if $i_j + 1 \leq i_{j+1} - 1$.
For each interval $[l,r] \in I_{\mathrm{self}}$, the following properties hold 
\begin{enumerate}
  \item for each integer $i \in [l,r]$, $\ext{e}_{i} \in N_{\mathrm{self}}\tp{\ext{e}_{i-1}}$;\label{item:e-1}
  \item either $l = 2$ or $\ext{e}_{l-1} \in N_{\mathrm{out}}\tp{\ext{e}_{l-2}}$;\label{item:e-2}
  \item either $r = \ell$ or $\ext{e}_{r+1} \in N_{\mathrm{out}}\tp{\ext{e}_{r}}$.\label{item:e-3}
\end{enumerate}
In other words, each interval $[l,r] \in I_{\mathrm{self}}$ represents a sequence of consecutive extended hyperedges in the IPP $\+P$ of length $r-l+1$ such that each extended  hyperedge is a self-neighbour of its predecessor in $\+P$, and this sequence is maximal. 

Suppose the intervals in $I_{\mathrm{self}}$ are $[l_1,r_1],[l_2,r_2],\ldots,[l_a,r_a]$ such that $l_1 \leq r_1 < l_2 \leq r_2 < \ldots < l_a \leq r_a$, where $a = \abs{I_{\mathrm{self}}}$. 
It is straightforward to verify that 
\begin{align}\label{eq-total-length}
\sum_{i = 1}^a \tp{r_i - l_i + 1} = R_{\mathrm{self}}. 	
\end{align}

Define a subset $I^{(1)}_{\mathrm{self}}\subseteq I_{\mathrm{self}}$ by
\begin{align*}
I^{(1)}_{\mathrm{self}} \defeq \left\{ [l,r] \in I_{\mathrm{self}} \mid l = r \right\}.	
\end{align*}
The quantity $b$ is the size of $I^{(1)}_{\mathrm{self}}$, i.e.~$b \defeq \abs{I^{(1)}_{\mathrm{self}}}$.
Since $I^{(1)}_{\mathrm{self}}$ is a subset of $I_{\mathrm{self}}$, by~\eqref{eq-total-length}, we have
\begin{align}\label{eq-b-up-1}
b \leq 	R_{\mathrm{self}}.
\end{align}
Note that  $\ell \geq T / n \geq 40\log n \geq 20$. If $R_{\mathrm{out}} = 0$, then $I_{\mathrm{self}}$ contains only a single interval $[2,\ell]$.
Thus $b = 0$ and we have $b \leq 2R_{\mathrm{out}}$.
Otherwise $R_{\mathrm{out}} \geq 1$. 
By property \ref{item:e-3} above, for each $j\in[a]$, 
it holds that either $r_j = \ell$ or $\ext{e}_{r_j+1} \in N_{\mathrm{out}}\tp{\ext{e}_{r_j}}$ (namely $r_j+1 \in S_{\mathrm{out}}$).
This implies $b \leq R + 1 = R_{\mathrm{out}} + 1\le 2 R_{\mathrm{out}}$,
because there are at most one $(l_j,r_j) \in I^{(1)}_{\mathrm{self}}$ satisfying $l_j = r_j = \ell$.
Hence, in both cases, we have
\begin{align}\label{eq-b-up-2}
b \leq 2R_{\mathrm{out}}.	
\end{align}
Combining~\eqref{eq-b-up-1} and~\eqref{eq-b-up-2} proves that $b \leq 	\min\{R_{\mathrm{self}},2R_{\mathrm{out}}\}$.

Finally, we construct the the subsequence 	$\ext{f}_1 < \ext{f}_2 <\ldots <  \ext{f}_m$ from IPP $\+P$.
We construct a subset $\+F$ by the following procedure.
\begin{itemize}
  \item For each $i \in S_{\mathrm{out}}$, we add $\ext{e}_i$ into $\+F$.
  \item For each interval $[l,r] \in I_{\mathrm{self}}$, for all integers $j \in [l,r]$ such that $(j-l)$ is an odd number, we add $\ext{e}_j$ into $\+F$.
    Note that by property \ref{item:e-2}, if $l > 2$, $\ext{e}_{l-1}$ is always in $\+F$ because of the previous rule.
  \item To finish, we sort all extended hyperedges in $\+F$ to obtain $\ext{f}_1 < \ext{f}_2 <\ldots <  \ext{f}_m$.
\end{itemize}
We now verify the three properties in \Cref{lemma-ind-path}.
\begin{itemize}
\item By the definition of IPP, for any $1\leq i,j \leq \ell$ satisfying $\abs{i-j} \geq 2$, $\ext{e}_i \cap \ext{e}_j = \emptyset$. Since $\ext{f}_1 < \ext{f}_2 <\ldots <  \ext{f}_m$ is a subsequence of $\+P$, the first property holds.
\item Fix an index $2 \leq j \leq m$. Suppose $\ext{f}_{j-1}=\ext{e}_{j_1}$ and $\ext{f}_{j} = \ext{e}_{j_2}$. If $\abs{j_1-j_2} \geq 2$, then $\abs{\ext{f}_i \cap \ext{f}_{i-1}} =0$. 
  Assume $j_1 + 1 = j_2$, which means that $\ext{e}_{j_1}$ and $\ext{e}_{j_2}$ are neighbours in extended hypergraph. 
  If $\ext{e}_{j_2} \in N_{\mathsf{out}}(\ext{e}_{j_1})$, by \Cref{observation-neighbor-2}, it holds that $\abs{\ext{f}_i \cap \ext{f}_{i-1}} = 1$. 
  Otherwise, $\ext{e}_{j_2} \in N_{\mathsf{self}}(\ext{e}_{j_1})$.
  There must exist an interval $[l,r] \in I_{\mathrm{self}}$ such that
  either $j_1,j_2\in[l,r]$ or $j_1\not\in [l,r]$ but $j_2\in[l,r]$.
  The first case is impossible because we do not add two consecutive indices in any interval of $I_{\mathrm{self}}$.
  The second case is also impossible because it implies $j_1=l-1$ and $j_2=l$, but $l$ cannot be added.
\item All extendeds hyperedge in $S_{\mathrm{out}}$ are added into $\+F$. For each interval $[l,r] \in I_{\mathrm{self}}$, $\lfloor\frac{r-l+1}{2}\rfloor$ extended hyperedges in $[l,r]$ are added into $\+F$. Hence, if $l \neq r$, the number of vertices in $[l,r]$ added to $\+F$ is at least $(r-l+1)/3$ (with $r=l+2$ being the worst case). By~\eqref{eq-total-length}, we have $m \geq R_{\mathrm{out}} + \frac{1}{3}(R_{\mathrm{self}} - b)$.
%
\end{itemize}
Hence, the subsequence $\ext{f}_1 < \ext{f}_2 <\ldots <  \ext{f}_m$ satisfies all the properties in \Cref{lemma-ind-path}.
\end{proof}

Now we are ready to prove \Cref{lemma-single-path}.

\begin{proof}[Proof of \Cref{lemma-single-path}]
  Let $\ext{f}_1 < \ext{f}_2 <\ldots <  \ext{f}_m$ be the subsequence given in \Cref{lemma-ind-path}.
For each $\ext{f}_i$ and $c \in [s]$, define a bad event $\+B_i(c)$ that for all $(j,w) \in \ext{f}_i$, either $X_j(w) \neq Y_j(w)$ or $X_j(w) = Y_j(w) = c$. Note that $\ext{f}_1 < \ext{f}_2 <\ldots <  \ext{f}_m$ is a subsequence in IPP $\+P$, the probability of $\+B(\+P)$ can be bounded as follows
\begin{align*}
\Pr_{\+C}\sqtp{\+B(\+P)} \leq \Pr_{\+C}\sqtp{\forall i \in [m], \exists c_i \in [s] \text{ s.t. } \+B_i(c_i)}.
\end{align*}
By~\eqref{eq-def-new-T}, it holds that $\ell \geq T / n \geq 40\log n \geq 20$.
By the last property in \Cref{lemma-ind-path}, $m \geq \frac{1}{3}(R_{\mathrm{out}} + R_{\mathrm{self}}) = \frac{\ell - 1}{3} > 6$.
We further truncate the last element $\ext{f}_m$ and obtain the following inequality 
\begin{align}
\label{eq-upper-B-P}
\Pr_{\+C}\sqtp{\+B(\+P)} \leq \Pr_{\+C}\sqtp{\forall i \in [m - 1], \exists c_i \in [s] \text{ s.t. } \+B_i(c_i)}\leq 
\sum_{\*c \in [s]^{m-1}}\Pr_{\+C}\sqtp{ \bigwedge_{i=1}^{m-1} \+B_i(c_i)},
\end{align}
where the second inequality follows from the union bound, and $\*c=(c_1,\dots,c_{m-1})\in[s]^{m-1}$.
The truncation ensures that all elements $(j,w) \in \cup_{i=1}^{m-1}\ext{f}_i$ satisfy $j > 0$. (See \Cref{definition-IPP} of IPPs.)

Fix $\*c \in [s]^{m-1}$, we bound the probability of the event $ \bigwedge_{i=1}^{m-1} \+B_i(c_i)$.
For each $1\leq i < m$, we define 
\begin{align*}
\ext{S}_i \defeq \begin{cases}
\ext{f}_i &\text{if } i = 1;\\
 \ext{f}_i \setminus \ext{f}_{i-1}	&\text{if } i > 1.	
 \end{cases}
\end{align*}
Since $\ext{S}_i \subseteq \ext{f}_i$, we have the following bound
\begin{align*}
  \Pr_{\+C}\sqtp{ \bigwedge_{i=1}^{m-1} \+B_i(c_i)} \leq 	\Pr_{\+C}\sqtp{ \bigwedge_{i=1}^{m-1}\tp{\forall (j,w) \in \ext{S}_i, \left(X_j(w) \neq Y_j(w)\right) \lor \left(X_j(w) = Y_j(w) = c_i\right) }}. 
\end{align*}
By the first property in \Cref{lemma-ind-path}, all $\ext{S}_i$ are mutually disjoint. 
Now we list all the extended vertices $\cup_{i=1}^{m-1}\ext{S}_i$ as $(j_1,w_1),(j_2,w_2),\ldots,(j_M,w_M)$, where $0 < j_1 < j_2 < \ldots <j_M$. 
For each $1 \leq  p \leq M$, there is a unique $i$ such that $(j_p,w_p)\in \ext{S}_i$ and we denote $\mathsf{idx}(j_p)\defeq i$. 
We define a bad event $\+A(p)$ that either $X_{j_p}(w_p) \neq Y_{j_P}(w_p)$ or  $X_{j_p}(w_p) = Y_{j_P}(w_p) = c_{\mathsf{idx}(j_p)}$.
Using the chain rule for the RHS of the inequality above, it holds that 
\begin{align*}
\Pr_{\+C}\sqtp{ \bigwedge_{i=1}^{m-1} \+B_i(c_i)} \leq \prod_{p = 1}^M \Pr_{\+C}\sqtp{ \+A(p) \mid \bigwedge_{p' < p}\+A(p')}.	
\end{align*}
Consider the probability of $\+A(p)$ conditional on all $\+A(p')$ for $p' < p$.
To simplify the notation, let $j = j_p > 0$ and $w = w_{p}$.
In the $j$-th update, $X_{j}(w)$ is sampled from the distribution $\nu_w^{X_{j-1}(V \setminus \{w\} )}$ and $Y_{j}(w)$ is sampled from the distribution $\nu_w^{Y_{j-1}(V \setminus \{w\} )}$.
For any $\tau \in [s]^{V \setminus \{w\}}$, it holds that 
\begin{align*}
\forall x \in [s],\quad \nu^\tau_w(x) = \sum_{y \in h^{-1}(x)}\mu^\tau_w(y). 
\end{align*}
Note that $\mu^\tau$ is actually the uniform distribution over a list colouring instance on $H$ where for each $u \neq w$, the colour list is $h^{-1}(\tau_u)$, and the colour list for $w$ is $[q]$. Hence, for each $u \neq w$,
the size of colour list of $u$ is at least $\lfloor q/s \rfloor$, and the size of colour list of $w$ is $q$, where $s = \ctp{\sqrt{q}}$.
Note that $q \geq 40\Delta^{\frac{2}{k-4}}$ and $k \geq  20$ implies $\lfloor q/s \rfloor^k \geq 2\mathrm{e}q^2k\Delta$.
By \Cref{lemma-local-uniform-colour}, for all $\tau \in [s]^{V \setminus \{w\}}$, it holds that
\begin{align*}
\forall y \in [q], \quad \frac{1}{q}\tp{1-\frac{4}{kq}} \leq \frac{1}{q}\exp\tp{-\frac{2}{kq}} \leq \mu^\tau_w(y)	\leq \frac{1}{q}\exp\tp{\frac{2}{kq}} \leq \frac{1}{q}\tp{1 + \frac{4}{kq}}.
\end{align*}
Hence, for any $\tau \in [s]^{V \setminus \{w\} }$, it holds that for any $x \in [s]$,
\begin{align*}
\frac{\abs{h^{-1}(x)}}{q}\tp{1 - \frac{4}{kq}} \leq  \nu^\tau_w(x) \leq  \frac{\abs{h^{-1}(x)}}{q}\tp{1 + \frac{4}{kq}}.
\end{align*}
Note that all the events $\+A(p')$ for $p'<p$ are determined by the updates from time 1 to time $j - 1$.
The above bounds for $\nu^\tau_w(x)$ holds for any configuration $\tau \in [s]^{V \setminus \{w\} }$.
In the $j$-th update step,
since $X_j(w)$ and $Y_j(w)$ are coupled by the optimal coupling and $\abs{h^{-1}(x)}\leq \ctp{q/s}$, we have the probability of $X_j(w) \neq Y_j(w)$ is at most $\frac{1}{2}\sum_{x \in [s]} \frac{\abs{h^{-1}(x)}}{q} \cdot \frac{8}{kq} = \frac{4}{kq}$, and the probability of $X_j(w) = Y_j(w) = c_i$ is at most $\frac{\ctp{q/s}}{q}\tp{1 + \frac{4}{kq}}$. Hence,
\begin{align*}
\Pr_{\+C}\sqtp{ \+A(p) \mid \bigwedge_{p' < p}\+A(p')}
&\le \frac{4}{kq} + \frac{\ctp{q/s}}{q}\tp{1 + \frac{4}{kq}} \overset{(\star)}{\leq} \frac{\ctp{q/s}}{q}\tp{1 + \frac{5}{k}}\\
&\leq \frac{1.16}{\sqrt{q}}\tp{1 + \frac{5}{k}}.
\end{align*}
where $(\star)$ holds because $\frac{\ctp{q/s}}{kq} \geq \frac{4}{kq}$ if $q \geq 40$ and the last inequality is due to $\ctp{q/s}\le 1.16 \sqrt{q}$. This implies
\begin{align*}
\Pr_{\+C}\sqtp{ \bigwedge_{i=1}^{m-1} \+B_i(c_i)} \leq \prod_{p = 1}^M \tp{\frac{1.16}{\sqrt{q}}\tp{1 + \frac{5}{k}}} = \prod_{i=1}^{m-1}\tp{\frac{1.16}{\sqrt{q}}\tp{1 + \frac{5}{k}}}^{\abs{\ext{S}_{i}}}.		
\end{align*}
By the second property in  \Cref{lemma-ind-path} and the definition $\ext{S}_i$, it holds that 
\begin{align*}
\forall 1 \leq i \leq m, \quad \abs{\ext{S}_i}	\geq k -1.
\end{align*}
Combining with~\eqref{eq-upper-B-P}, we have
\begin{align*}
\Pr_{\+C}\sqtp{\+B(\+P)} &\leq \sum_{\*c \in [s]^{m-1}}\Pr_{\+C}\sqtp{ \bigwedge_{i=1}^{m-1} \+B_i(c_i)} \leq \sum_{\*c \in [s]^{m-1}}\tp{\frac{1.16}{\sqrt{q}}\tp{1 + \frac{5}{k}}}^{(m-1)(k-1)}\\
&\leq \tp{ s \tp{ \frac{1.16}{\sqrt{q}}\tp{1 + \frac{5}{k}}}^{k-1} }^{m-1}. 	
\end{align*}
Now we claim that
\begin{align*}
  s \tp{ \frac{1.16}{\sqrt{q}}\tp{1 + \frac{5}{k}}}^{k-1} \le \frac{1}{10^3\Delta k^6}.
\end{align*}
Using $s=\lceil\sqrt{q}\rceil\le 1.16\sqrt{q}$,
it suffices to show that
\begin{align*}
  1.16\times 10^3(1.16)^{k-1}\tp{1 + \frac{5}{k}}^{k-1} \Delta k^6 \le q^{(k-2)/2}.
\end{align*}
Using $\tp{1 + \frac{5}{k}}^{\frac{2(k-1)}{k-2}}\le 1.7$ and $k^{12/(k-2)}\le 7.4$ for $k\ge 20$,
we further simplifies the condition into
\begin{align*}
  q \ge 7.4\times 1.7 \times (1.16\times 10^3)^{2/(k-2)} (1.16)^{2(k-1)/(k-2)} \Delta^{2/(k-2)},
\end{align*}
which is implied by $q \geq 40 \Delta^{\frac{2}{k-4}}$ and $k \geq 20$.

The claim implies that 
\begin{align*}
\Pr_{\+C}\sqtp{\+B(\+P)} \leq \tp{\frac{1}{10^3\Delta k^6}}^{m-1} = 10^3\Delta k^6\tp{\frac{1}{10^3 \Delta k^6}}^{m}.	
\end{align*}
Finally, by the third property in \Cref{lemma-ind-path}, we have
\begin{align*}
\Pr_{\+C}\sqtp{\+B(\+P)} \leq 10^3 \Delta k^6\tp{\frac{1}{10^3 \Delta k^6}}^{R_{\mathrm{out}} + \frac{1}{3}\tp{R_{\mathrm{self}} - b}}.		&\qedhere
\end{align*}
\end{proof}

\bibliographystyle{alpha}
\bibliography{Linear-LLL}

\end{document}